
\documentclass[ aps,
pre,
twocolumn,              
10pt,                   
superscriptaddress,     
preprintnumbers,        
nofootinbib,             
showpacs,               
showkeys,               
]{revtex4-2}


%


\setlength{\parindent}{0pt}
\setlength{\parskip}{3pt plus 2pt minus 2pt}

\usepackage{graphicx}               
\usepackage[caption=false]{subfig}  
\usepackage[svgnames]{xcolor}       
\usepackage{xspace}                 



\usepackage[T1]{fontenc}                 
\usepackage{lmodern}                     
\usepackage[scaled=.8]{beramono}         
\usepackage{microtype}                   

\usepackage{dcolumn}                    
\usepackage{booktabs}                   
\usepackage{longtable}                  

\usepackage{old-arrows}                 
\usepackage{url}                        
\usepackage[super]{nth}                 
\usepackage{lipsum}                     
\usepackage{siunitx}                    
\usepackage{algpseudocodex}             
\usepackage{csquotes}                   

\usepackage{amsmath}        
\usepackage{amsfonts}       
\usepackage{amssymb}        
\usepackage{amsthm}         

\theoremstyle{plain}            
\theoremstyle{plain}            \newtheorem{corollary}{Corollary}
\theoremstyle{plain}            \newtheorem{theorem}{Theorem}
\theoremstyle{plain}            \newtheorem{proposition}{Proposition}
\theoremstyle{plain}            
\theoremstyle{remark}           
\theoremstyle{definition}       \newtheorem{definition}{Definition}

\usepackage{overarrows}                     
\usepackage{mathtools}                      
\usepackage{mathrsfs}                       
\usepackage{upgreek}                        
\usepackage{braket}                         
\usepackage{stmaryrd}                       
\SetSymbolFont{stmry}{bold}{U}{stmry}{m}{n} 
\usepackage{nicefrac}                       
\usepackage{bm}                             
\usepackage{bbm}                            

\usepackage{hyperref}           
\hypersetup{
    colorlinks   = true,        
    linkcolor={blue!50!black},
    citecolor={blue!50!black},
    urlcolor={blue!80!black}
    }

\usepackage[capitalise]{cleveref}       
%
%



\newcommand{\eM}     {\mbox{$\epsilon$-machine}\xspace }
\newcommand{\eMs}    {\mbox{$\epsilon$-machines}\xspace }
 
\newcommand{\EMs}    {\mbox{$\epsilon$-Machines}\xspace }
\newcommand{\eT}     {\mbox{$\epsilon$-transducer}\xspace }
\newcommand{\eTs}    {\mbox{$\epsilon$-transducers}\xspace }





\NewOverArrowCommand{\PastArrow}{start=\leftharpoonup,
                                 end=\relbar,
                                 trim=6,
                                 space after arrow=-.1ex}

\NewOverArrowCommand{\pastArrow}{start={\smallermathstyle\leftharpoonup},
                                 end=\relbar,
                                 trim start = 5,
                                 trim end = 8,
                                 }
                
\NewOverArrowCommand{\FutureArrow}{end=\rightharpoonup, 
                                   trim=7,
                                   space after arrow=-.1ex}
    
\NewOverArrowCommand{\futureArrow}{start={\smallermathstyle\relbar}, 
                                   end=\rightharpoonup,
                                   trim start = 5,
                                   trim end = 8,
                                  }
                                
\NewOverArrowCommand{\BiArrow}{start=\leftharpoonup,
                                 end=\rightharpoonup,
                                 trim=8,
                                 space after arrow=-.1ex}

\NewOverArrowCommand{\biArrow}{start={\smallermathstyle\leftharpoonup},
                                 end=\rightharpoonup,
                                 trim=9
                                 }


\newcommand{\Machine}               { M }
\newcommand{\eMachine}              { \Machine_{\epsilon} }

\newcommand{\EquiFunction}[1]       { \epsilon \left[ #1\right] }
\newcommand{\Process}               { \mathcal{P} }

\newcommand{\MeasSymbol}            { {X} }
\newcommand{\meassymbol}            { {x} }
\DeclareRobustCommand{\BiInfinity}            { \BiArrow {\MeasSymbol} }

\DeclareRobustCommand{\Past}        { \PastArrow{\MeasSymbol} }
\newcommand{\PastSmashed}           { \smash{\Past} }

\newcommand{\past}                  { \pastArrow{\meassymbol} }
\newcommand{\pastprime}             { \past^{\prime} }

\DeclareRobustCommand{\Future}      { \FutureArrow {\MeasSymbol} }
\newcommand{\FutureSmashed}         { \smash{\Future} }

\newcommand{\future}                { \futureArrow {\meassymbol} }


\newcommand{\CausalState}           { \mathcal{S} }
\newcommand{\CausalStatePrime}      { \CausalState^{\prime}}
\newcommand{\CausalStateSet}        { \bm{\CausalState} }

\newcommand{\causalstate}           { \sigma }

\newcommand{\CausalStateTransition} { {\CausalState \to \CausalStatePrime} }

\newcommand{\LabelCausalTransition} { T^{(\msym)}_{\CausalStateTransition} }
\newcommand{\CausalTransitionSet}   { \{ \LabelCausalTransition:
\msym \in \alphabet \} }




\newcommand{\selfi}{\mathfrak{i}}
\newcommand{\SelfI}{\mathfrak{I}}

\newcommand{\selfii}[1]                { \selfi \left[ #1 \right] }
\newcommand{\SelfII}[1]                { \SelfI \left[ #1 \right] }
\newcommand{\selficond}[2]             { \selfi \left[ #1 \mid #2 \right] }  
\newcommand{\SelfIcond}[2]             { \SelfI \left[ #1 \mid #2 \right] }  
\newcommand{\selfijoint}[2]            { \selfi \left[ #1 , #2 \right] }  
\newcommand{\SelfIjoint}[2]            { \SelfI \left[ #1 , #2 \right] }  
\newcommand{\selfimut}[2]              { \selfi \left[ #1 : #2 \right] }  
\newcommand{\SelfImut}[2]              { \SelfI \left[ #1 : #2 \right] }  


\newcommand{\I}[2]                  { \operatorname{I} \left[ #1 : #2 \right] }
\newcommand{\Icond}[3]              { \operatorname{I} \left[ #1 : #2 \mid #3 \right] } 
\newcommand{\ExcessEntropy}         { \bm{E} }
\newcommand{\EE}                    { \ExcessEntropy }

\newcommand{\Cmu}                   { C_{\mu} }

\newcommand{\hmu}                   { h_{\mu} }
\newcommand{\rhomu}                 { \rho_{\mu} }
\newcommand{\rmu}                   { r_{\mu} }
\newcommand{\qmu}                   { q_{\mu} }
\newcommand{\bmu}                   { b_{\mu} }
\newcommand{\sigmu}                 { \sigma_{\mu} }
\newcommand{\bmuforward}            { \bmu^+ }
\newcommand{\bmureverse}            { \bmu^- }



\newcommand{\integers}              { \mathbb{Z} }




\newcommand{\sigmaAlgebra}          { \Sigma }

\newcommand{\measure}               { \mu }



























 \newcommand{\RVSet}
{ \mathfrak{X} }

\newcommand{\shiftOperator}     { \uptau }
\newcommand{\alphabet}          { \mathcal{\MeasSymbol} }

\newcommand{\numSyms}           { k }

\newcommand{\Present}           { \MeasSymbol_0 }


\newcommand{\MSym}{\MeasSymbol}
\newcommand{\msym}{\meassymbol}
\newcommand{\MS} [2] {\ensuremath{\MeasSymbol_{#1:#2}}\xspace} \newcommand{\ms}
[2] {\ensuremath{\meassymbol_{#1:#2}}\xspace}

\newcommand{\MeasSymbolVarTwo}{Z}


\begin{document}

\title{Agentic Information Theory:\\
Ergodicity and Intrinsic Semantics of\\
Information Processes}

\author{James P. Crutchfield}
\email{chaos@ucdavis.edu}
\homepage{http://csc.ucdavis.edu/~chaos/}
\affiliation{Complexity Sciences Center and Physics Department \\ 
             University of California at Davis, One Shields Avenue, Davis, CA
             95616}

\author{Alexandra Jurgens}
\email{alexandra.jurgens@inria.fr}
\homepage{http://csc.ucdavis.edu/~ajurgens/}
\affiliation{GEOSTAT Team, INRIA -- Bordeaux Sud Ouest \\
             33405 Talence Cedex, France}


\begin{abstract} 
We develop information theory for the temporal behavior of memoryful agents
moving through complex---structured, stochastic---environments. We introduce
and explore information processes---stochastic processes produced by cognitive
agents in real-time as they interact with and interpret incoming stimuli. We
provide basic results on the ergodicity and semantics of the resulting time
series of Shannon information measures that monitor an agent's adapting view of
uncertainty and structural correlation in its environment.
\end{abstract}


\date{\today}

\preprint{arxiv.org:2505.19275 [cond-mat.stat-mech]}

\keywords{stochastic process, Shannon information measures, stationarity,
ergodicity, entropy rate, statistical complexity, excess entropy}

\maketitle
\tableofcontents


\section{Introduction}
\label{sec:introduction}

Contemporary statistical mechanics \cite{Cros09a,Stre09a,Seif12a} has made
considerable contributions to machine learning
\cite{Hert86a,Mack03a,Sohl15a,Bahi20a}, from formal models and algorithms
(e.g., reinforcement learning \cite{Sutt18a,Rahm19a}) to implementing
thermodynamic systems for learning tasks \cite{Mela23a,Cole23a}. This
cross-fertilization reflects broader recent concerns, though, about emergent
organization in active matter \cite{Bech16a}, collective intelligence
\cite{Suli20a}, embodiments of biological intelligence \cite{Gupt21a}, and
distributed artificial intelligence. Whether groups of molecules, animals, or
robots, these systems consist of many interacting components---often referred
to generically as ``agents'' in the complex adaptive systems and machine
learning literatures \cite{Bona99a,Simo96a,Holl75a,Lang89a,Sutt18a}. These
agents glean information from their environment and use it to take actions,
that in turn, affect the environment. One notable phenomenon is that these
collections can organize into temporal and spatial behaviors beyond those of
the individual agents. Today, this kind of emergent collective organization is
becoming a concern as machine learning moves ever closer to massive deployments
of adaptive agents on globe-spanning communication networks.

Characteristically, though, statistical mechanical approaches to collective
thermodynamic functioning and information processing focus on asymptotic
quantities---such as, say, correlation functions, time- or ensemble-average
rates of work extraction, dissipation, and information storage. As such, they
give only the barest insight into the actual, online, and time-dependent
operation of an agent's internal mechanisms---the underlying information heat
engines \cite{Maru09a,Mand12a,Parr15a,Saga12a,Boyd15a} that support relevant
behavior and functioning.

The following addresses informational aspects of such real-time adaptive
agents, ultimately referring to the \emph{information processes} that reflect
how a stochastic dynamical system acquires, stores, and internally processes
information. In effect, these real-time adaptive agents directly manipulate
information available in observations. This view begins to move away from
indirect appeals to frequentist or Bayesian probability for defining and
estimating information, instead to directly operating on inputs that are innately informative. Fundamental to this is prediction.
This informational approach, though, entails several computational challenges.
Fortunately, Ref. \cite{Jurg25a} gives explicit and efficient methods for
calculating the underlying informational measures from a stochastic process'
optimal representation---the \eM. And so, the conceit in the following is that
a cognitive agent implements computational mechanics to learn internal models
of its environment.

\subsection{Challenge}
\label{sec:Challenge}

The following lays out the basic concept of information processes---stochastic
processes that capture what a cognitive agent observes and
measures---informationally---from a stochastic process generated by an
environment. It reviews prerequisites from stochastic processes, ergodic
theory, and information theory.

To emphasize, such agents behave as more than memoryless particles or
memoryless input-output mappings. They process information. In this literal
sense the agents are \emph{cognitive}---as they observe their environment they,
explicitly or implicitly, make various interpretations of stimuli---apparent
correlation and uncertainty and available free energy, for example. The latter
are expressed as kinds of information estimated in real-time as an agent adapts
to the time series of incoming stimuli and behaves accordingly.

To understand the agent's moment-by-moment operation we track various quantities
estimated during the agent's interactions with its environment. Specifically, we
monitor the interactions via Shannon's information atoms \cite{Yeun08a}. In this
setting, then, an agent is a transducer that maps from a given process (the
environment) to one or several observational stochastic processes---information
processes comprised of time series of instantaneous entropy rate (apparent
randomness), bound information (apparent information storage), elusive
information (apparent hidden information), and so on.

We show that if the environment being observed is stationary then the resulting
information processes developed by the agent are stationary, under certain
conditions. These conditions include observing the environment through a
sliding-window of finite duration or through a function---a statistic---with
limited-range memory. If so, then the agent's interpretations of the
environment's behavior are statistically well-behaved. We also address
subtleties of nonstationarity that arise at early times through transients
during agent-environment synchronization \cite{Jame10a} and through the agent's
employing incorrect internal models. We give companion results on the
ergodicity of such information processes. We then turn to discuss how
information processes convey meaning to an agent with an internal model
(correct or misleading) of its environment.

\subsection{Overview}
\label{sec:Overview}

The immediately following section provides background, by drawing parallels and
even motivations from prior results on complex adaptive systems. It then reviews
elementary information theory and Shannon information measures. These are then
adapted to the online temporal setting of a cognitive agent interacting with
its environment---the stochastic process it observes. Cognitive agents are
introduced using the structural theory of information in stochastic
processes---computational mechanics
\cite{Shal98a}.

In short, a cognitive agent builds a minimal optimal predictor---an \eM---of
its environment. A key point is that agent behavior and its interpreting the
environment are themselves stochastic processes. Taken altogether, this then
gives the foundation for how an agent attributes meaning to the results of
interacting with its environment. In short, there are two levels of semantics,
one focused on the agent's uncertainties in predicting the environment, the
other focused on the agent's informational interpretations of correlation and
structure.

In this way, we develop methods and quantities to monitor how an agent
interprets its interactions with a structured, stochastic environment. Here, we
assume that the environment is stationary---its structure and stochasticity are
time independent---and that the agent begins interacting by employing a given
fixed model of the environment's stochasticity and structure---its internal
model is time-independent. The agent only observes---it takes no actions on the
environment. For now, these are the main simplifications. We then outline how
the agent can use its internal model to interpret the environment's behavior
moment-by-moment. One goal is to show that the resulting information processes
are statistically well-behaved and so can form the basis for further---say,
downstream---estimation and interpretation. Sequels relax these assumptions.

\begin{figure}
\includegraphics[width=\columnwidth]{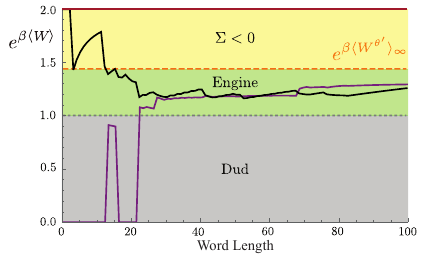}
\caption{Monitoring online thermodynamic performance via the work rate
	$\exp{\beta \langle W^{\theta^\mathrm{max}_\ell} (y_{0:\ell})\rangle}/\ell$
	during learning:
	The information source is a length $\ell = 100$ sample output of a
	$5$-state hidden Markov model (\eM); extracted from Fig. 6 in Ref.
	\cite{Boyd24a} (used with permission). The work rate is estimated using $2$
	memory states. The resulting work-rate monitoring processes are complicated
	and dynamic. Indeed, the engine jumps between different functionings as a
	``Dud'' (gray)---unable to harvest work; an ``Engine'' (green)---doing so
	successfully; and violating the Second Law (yellow)---negative entropy
	fluctuation. The gray dashed line at $1.0$ corresponds to zero work
	production: $\beta\langle{W^\theta}\rangle_\infty = 0$. The orange dashed
	line for work production corresponds to correctly estimating the true model
	$\beta \langle W^{\theta^\prime}\rangle_\infty$.  The solid red line at
	$\beta\langle{W^\theta}\rangle_\infty = \ln 2$ is the maximum work that can
	be harvested per bit from a single sequence.
	}
\label{fig:ThermoSignal}
\end{figure}

\section{Background}
\label{sec:Background}

Before delving into technicalities it will be helpful later on to have in mind
a range of example agent-environment systems---systems in which (i)
extracting information from incoming signals and (ii) interpreting that
information play a key role in agent behavior and collective functioning.

\subsection{Collective Information Processes}
\label{sec:Collectives}

A paradigmatic agent-environment system is that observed in a collective of
flocking animals \cite{Visc95a}. The social behaviors of fish shoaling and
schooling and starling murmurations come immediately to mind
\cite{Shaw78a,Couz05a}. The animals are the agents that take in information
from the spatial surroundings and, importantly, the nearby animals. From this
they determine their motion. The result can be the emergence of stunningly,
captivating complex coordinated spatio-temporal patterns.

One of the puzzles is how this complexity emerges from the given local, but
distributed equations of motion. Using time-dependent, time-delayed mutual
information Ref. \cite{Suli20a} showed how a leader-follower relationship
between animals emerges, eventually leading to a single individual guiding the
flock's collective motion. In this, we see a real-world example of agents (i)
interacting via an environment that (ii) effectively use one or several
information processes to control local behaviors from which collective
organization emerges.

\subsection{Information Embedded in Thermodynamic Systems}
\label{sec:PhysicalProcesses}

While a primary concern here, information is not the only narrative framing for
cognitive agents. For example, analogous \emph{thermodynamic processes}
describe an agent's creation and use of physical resources. Of late, Maxwell's
demon is the oft-quoted example \cite{Szil29a}. In this, agents are
viewed as information heat engines. That is, there are other closely-related
stochastic processes of interest. In particular, if we are interested in an
agent's adaptive physical functioning, then a number of thermodynamic-resource
processes capture engine performance.

The processes generated by Ref. \cite{Geir23a}'s quantum machines and Ref.
\cite{Boyd16c} self-correcting information engines come to mind. To be more
concrete about such processes, however, consider a system that learns and then
validates an information heat engine through environment interactions. Figure
\ref{fig:ThermoSignal}, excerpted from Ref. \cite{Boyd24a}, shows the time
evolution of an information engine's work rate $W^{\theta^\mathrm{max}}$. An
exponential average work rate is plotted for both learning (purple line) and
validation (black line) versus time series length $\ell$. The point is that
such agentic signals track the agent's evolution during learning and its
functioning across time. They are stochastic processes---functions of the
engine's internal, on-going adaptation and operation. And, they are signals
in the design and application of information engines that are essential to monitor.

Relying on such signals, though, to monitor learning or online performance, one
needs to know whether or not they are statistically well-behaved
processes---signals that can be used to quantitatively (convergently)
characterize learning and thermodynamic efficiency. Are they stationary and
ergodic?

Reference \cite{Crut16a} illustrates a higher-level of stochastic variation in
an agent's very thermodynamic functioning. Is it acting as an engine extracting
work from a heat reservoir or as an information eraser that correlates
information? The several examples here illustrate that there are many
circumstances when time-local instantaneous statistics and their operational
meaning are key to monitoring adaptive physical behaviors.

\begin{figure}
\includegraphics[width=\columnwidth]{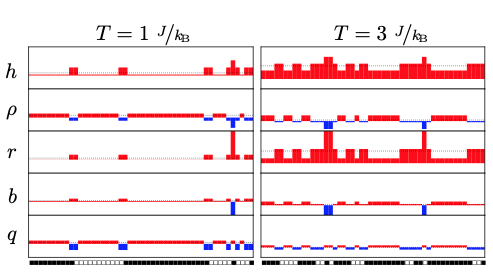}
\caption{Information processes in spin configurations: Motif entropy-component
	analysis of the 1D Ising model at two temperatures $T$, where $\hmu$ is
	the local entropy density, $\rho_\mu$ the anticipated information, $\rmu$
	the ephemeral information, $\bmu$ the bound information, and $\qmu$ the
	enigmatic information. ($J$ is the nearest-neighbor spin coupling strength
	and $k_\text{B}$ is Boltzmann's constant.) A segment of a spin
	configuration with up spins (white cells) and down spins (black cells)
	shown in the bottom row. (Figure 6, Ref. \cite{Vija15a}, used with
	permission.)
	}
\label{fig:SpinInfos}
\end{figure}

\subsection{Information Processes in Biology}
\label{sec:Biological}

Contrasted to physical systems for which information often simply stands in for
probabilistic properties, information plays an essential role in the
functioning of life processes \cite{Schr44a}. Examples abound. For
example, Ref. \cite{Marz18a} argues that bacteria are environmental prediction
engines---their reproduction and survival require them to take actions based
the information that they extract and store from their surroundings, moment by
moment. More to the point, it identifies \emph{instantaneous predictive
information} as a key signal---information shared between an organism's present
phenotype and future environment states. And, it demonstrates that optimal
epigenetic markers are minimal sufficient statistics for evolutionary
prediction---the causal states introduced shortly. Bacteria in this setting are
agents and their surroundings their environment.

Reference \cite{Marz17c} develops a similar analysis of molecular sensors. In
particular, it calls out the role of information gleaned by molecular sensors
viewed as Markovian communication channels embedded in biological cell
membranes.

\subsection{Organization in Statistical Mechanics}
\label{sec:StatMech}

The spatial organization that spontaneously emerges in one-dimensional and
two-dimensional spin systems, though arising in mere physical systems, presents
a similar challenge \cite{Vija15a}. Figure \ref{fig:SpinInfos}, from there,
shows a suite of information processes as a function of spin location in
lattice configurations. Specifically, it plots the local entropy density
$\hmu$, anticipated information $\rho_\mu$, ephemeral information $\rmu$, bound
information $\bmu$, and enigmatic information $\qmu$ across lattice sites.
Reference \cite{Vija15a} then goes on to interpret what these quantities mean
in terms of the underlying physical interactions and emergent spin patterns. We
return to these information signals later on, as they play a key role.

\subsection{Approach}
\label{sec:Approach}

Looking across these very different complex adaptive systems reveals common
features: (i) framing the overall system in terms of one or a group of agents
and their interactions with an environment; (ii) the agent's role in monitoring
the environment by extracting real-time signals---sequences of various kinds
of informational measure; and (iii) the semantics in those measures---how an
agent uses or interprets such signals. Figure \ref{fig:MSemantics} illustrates
the basic measurement channel picture of the environment and agent---a picture
that frames our overall development.

\begin{figure}
\includegraphics[width=\columnwidth]{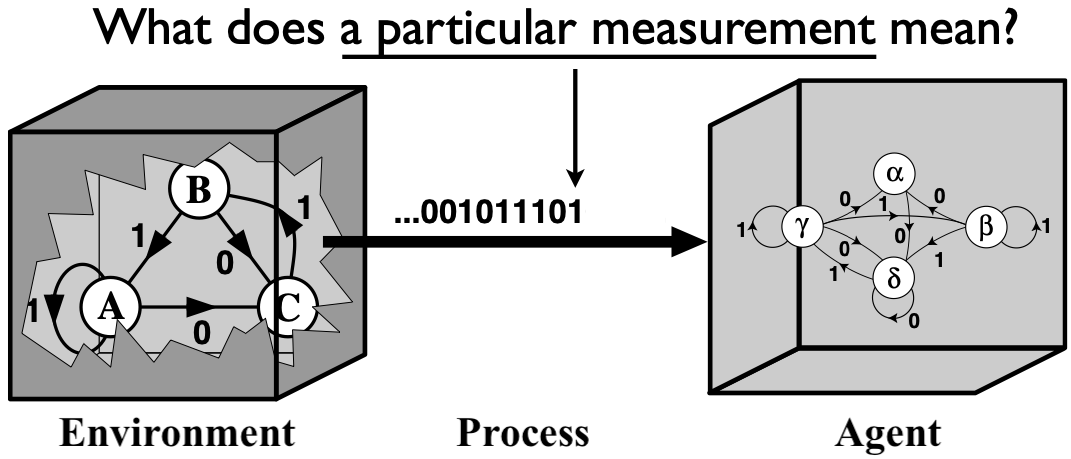}
\caption{Measurement semantics: The channel consists of the environment
	(left) that generates a process (middle)---the result of a series of
	observations and interpretations made by an agent (right).
	}
\label{fig:MSemantics}
\end{figure}

Historically, calculating these informational signals has been quite
theoretically and computationally demanding. Very recently, though, Ref.
\cite{Jurg25a} showed how to perform these calculations using practical and
efficient algorithms, progress that stimulates much of the following. Moreover,
the companion Ref. \cite{Crut24a} introduces out a thorough-going
information-theoretic framing for deploying the information measures we
develop in the following.

With this background and these motivations laid out, we are now ready to turn
to our theoretical development.

\section{Stochastic Processes}
\label{sec:Processes}

The following briefly summarizes discrete stochastic processes, adapting the
more general framework of Ref. \cite{Loom21b} to this simpler setting.
(For measure-theoretic background see Refs. \cite{Bill95a,Gray09a}.)

\newcommand{\TIndices}{\mathbb{Z}}

\renewcommand{\Process}{\MeasSymbol}

A (discrete-time, discrete-valued) \emph{stochastic process} $\Process$
consists of a sequence of indexed random variables (RVs) $\{ \MeasSymbol_t
\}_{t \in \mathbb{Z}}$ defined on a probability space $\left(
\alphabet^{\integers}, \sigmaAlgebra, \measure \right)$. (See Fig.
\ref{fig:DiscreteProcess}.) $\MeasSymbol_t$ is the random variable representing
a value $x \in \alphabet$ observed at time $t$.  $\alphabet$ is the RVs' common
state space or event space---the set of possible observed values or
$\Process$'s \emph{alphabet}. We take it to be a finite set $\alphabet = \{ 1,
\ldots , k\}$.

A process \emph{realization} $\omega$ is a single outcome of the stochastic
process. That is, realizations are taken from the \emph{sample space}: $\omega
\in\alphabet^{\integers}$. This is the collection of a process' behaviors.
$\sigmaAlgebra$ is the sigma algebra that describes the process'
events---possible subsets of realizations.

The measure $\mu$ is defined on measurable subsets $A$ of realizations:
\begin{align}
\mu(A) = \int_{\omega \in A} d \mu(w) ~.
\label{eq:sets}
\end{align}
for $A \in \sigmaAlgebra$; that is, measurable subsets of
$\MeasSymbol^\mathbb{Z}$. As we detail shortly, the measure over bi-infinite
realizations determines the probability of sets of realizations and of finite
sequences via integration over subsets, as in Eq. \eqref{eq:sets}.


\begin{figure}
\includegraphics[width=\columnwidth]{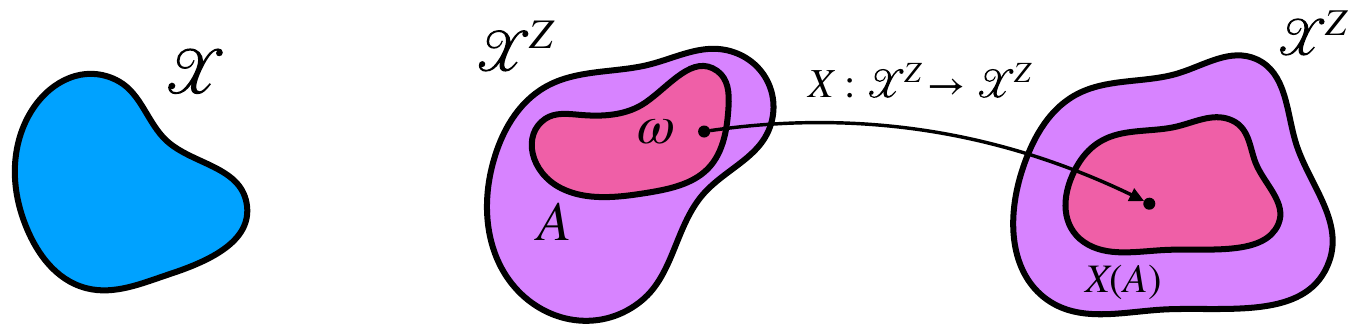}
\caption{Discrete-time, discrete-value stochastic process as a sequence
$\alphabet^\tau$ of indexed random variables $\{\MeasSymbol_t:
t \in \mathbb{Z}\}$ on the probability space $\left( \alphabet^\mathbb{Z},
\Sigma, \mu\right)$ over event space $\alphabet$ with realizations $\omega \in
\alphabet^\mathbb{Z}$.
	}
\label{fig:DiscreteProcess}
\end{figure}

The stochastic process $\Process$ is generated by \emph{cylinder sets}:
\begin{align*}
U_{t,w} = \{ \MeasSymbol: \meassymbol_{t+1}, \ldots, \meassymbol_{t+\ell} = w\}
  ~,
\end{align*}
where $w \in \alphabet^\ell$ is a \emph{word} of length $\ell$. For a stationary
process, the \emph{word probabilities}:
\begin{align}
{\Pr}(\meassymbol_1 \ldots \meassymbol_\ell)
  = \mu (U_{ 0,\meassymbol_1 \ldots \meassymbol_\ell} )
\label{eq:FromMeasureToProb}
\end{align}
are sufficient to uniquely define the measure $\mu$.

That is, indexing is temporal and denoted by the use of subscripts $\TIndices$.
It is convenient to write a realization of $\{\MeasSymbol_t \}_{t \in
\mathbb{Z}}$ as an indexed sequence. For example, we write $\MeasSymbol_t =
\meassymbol$ to say that $\meassymbol \in \alphabet$ is the specific value of
$\MeasSymbol$ at time $t$. On occasion, we shorten this to $\meassymbol_t$.
That is, uppercase (e.g., $\MeasSymbol_t$) indicates the variable and lowercase
(e.g., $\meassymbol_t$) a realized value.

Blocks of consecutive RVs, called \emph{words}, are denoted by
$\MeasSymbol_{n:m} = \left\{ \MeasSymbol_t : n < t \leq m; n, m \in \integers
\right\}$ with the left index inclusive and the right exclusive. For example,
$\MeasSymbol_{0:3} = \MeasSymbol_0 \MeasSymbol_1 \MeasSymbol_2$. A word may also
refer to a particular realization of a given length, denoted in lowercase.  For
instance, one might write $\meassymbol_{0:3} = \meassymbol_0 \meassymbol_1
\meassymbol_2$ or $\meassymbol_{0:3} = bac$, if (say) $\meassymbol_t
\in \alphabet = \{a,b,c\}$.

\subsection{Process Dynamics}

Stochastic processes themselves can evolve over time. When working with these
dynamics, we need to refer to a process relative to a particular time $t$. To
do this we use superscripts. Denote the process referenced to time $t$ as
$\MeasSymbol^t = \{\MeasSymbol_t \}_{t \in \mathbb{Z}}$ and that referenced to
time $t+1$ as $\MeasSymbol^{t+1} = \{\MeasSymbol_{t+1} \}_{t \in \mathbb{Z}}$.

A natural dynamic for a stochastic process is given by the \emph{shift operator}:
$\shiftOperator : \alphabet^\TIndices \to \alphabet^\TIndices$ that simply
advances time $t \to t+1$:
\begin{align*}
(\shiftOperator \MeasSymbol)_t = \MeasSymbol_{t+1}
\label{eq:ShiftOperator}
  ~.
\end{align*}
This describes the shift acting on a process' individual RVs.
For the shift acting on the entire process we have:
\begin{align*}
\{ (\shiftOperator \MeasSymbol)_t \}_{t \in \mathbb{X}}
	= \{ \MeasSymbol_{t+1} \}_{t \in \mathbb{Z}} 
\end{align*}
or, more simply:
\begin{align*}
\shiftOperator \MeasSymbol^t = \MeasSymbol^{t+1}
	~.
\end{align*}

The shift also acts on measures over $\alphabet^\TIndices$:
\begin{align*}
(\shiftOperator \mu)A = \mu(\shiftOperator^{-1} A)
  ~,
\end{align*}
for measurable $A \in \sigmaAlgebra$.

A stochastic process paired with the shift operator becomes a dynamical system
$(\alphabet^\TIndices,\Sigma,\mu,\shiftOperator)$. A stochastic process is
stationary if the measure is time-shift invariant: $\shiftOperator \mu = \mu$.
It is \emph{ergodic} if, for all shift-invariant sets $\shiftOperator
\mathcal{I} = \mathcal{I}, ~\mathcal{I} \subset \alphabet^\TIndices$, either
$\mu(\mathcal{I}) = 0$ or $1$. Said more simply, an ergodic stochastic process
cannot be decomposed into other ergodic components. (Later on, we return to
notions of stationarity and ergodicity appropriate for information processes
and cognitive agents.)

When considering random variables and their probabilities, we continue, as
above, to denote random variables by capital Latin letters and specific
realizations by lower case. For example, $\Pr(X_t) = \{ \Pr(X_t = x) : x \in \{
1, \ldots , k\} \}$ and $\Pr \left( \MeasSymbol = \meassymbol \right) = \measure
\left( \left\{ \meassymbol \right\} \in \alphabet \right)$. 

\subsection{Sliding Window Processes}
\label{sec:SlidingProcesses}

It will be helpful to select a subset of related (time local) RVs in a process
via a function $h(\cdot)$ that gives offsets for the indices of those RVs:
\begin{align*}
h_r(t) = \{ t-r, \ldots, t-2, t-1, 0, t+1, t+2, \ldots t+r \}
  ~.
\end{align*}
In this way, one extracts a new \emph{sliding window} stochastic process
$(Y_t)_{t \in \mathbb{Z}}$ composed of a series of groups of the original
process' RVs. For example, if $r = 1$:
\begin{align*}
Y_t & = X_{[h_1(t)]} \\
    & = X_{t-1} X_{t} X_{t+1} ~\text{and} \\
Y_{t+1} & = X_{[h_1(t+1)]} \\
    & = X_{t} X_{t+1} X_{t+2}
  ~.
\end{align*}

\subsection{An Experiment and Its Realizations}
\label{sec:ExpReal}

An agent interacts with its environment by performing an \emph{experiment}
$\mathcal{E} = \{\omega^i \in \Sigma: i = 1, 2, \ldots, M\}$ consisting of a
number $M$ of realizations $\omega^i$. The agent prepares the environment
appropriately and initiates an experimental \emph{run} $\omega^i$ to generate
an arbitrarily long realization $\ms{0}{\infty} = \meassymbol_0, \ldots,
\meassymbol_\infty$, where each is a sequence of individual observations
$\meassymbol_t \in \alphabet$. In this way, an experiment is the set
$\mathcal{E} = \{ \omega^1 = \meassymbol^\prime_{0:\infty}, \omega^2 =
\meassymbol^{\prime\prime}_{0:\infty}, \ldots, \omega^M =
\meassymbol^{\prime\prime\prime}_{0:\infty} \}$ consisting of all of the
data---process realizations---available to an agent.

\subsection{Stationary}
\label{sec:Stationary}

Intuitively, the statistics of a stationary process do not depend on time.

\begin{definition}
\label{Def:Stationary}
In a \emph{stationary} process the measure is time-shift
invariant---$\shiftOperator \mu = \mu$---such that:
\begin{align*}
\Pr(\MS{t}{t+\ell}) = \Pr(\MS{0}{\ell})
  ~,
\end{align*}
for all $t \in \mathbb{\MeasSymbolVarTwo}$, $\ell \in
\mathbb{\MeasSymbolVarTwo}^+$.
\end{definition}

For a stationary process, the index denoting the present can nominally be set to
any value without altering subsequent analysis. 

\subsection{Ergodic}
\label{sec:Ergodic}

Intuitively, any realization $\omega^i \in \Sigma$ of an ergodic process has the same
statistical properties as any other realization $\omega^{j\neq i}$. More
formally, define the \emph{ensemble average} $E[f]$ of a measurable function
$f: \MeasSymbol^\mathbb{Z} \to \mathbb{R}$:
\begin{align*}
E[f] = \int_{\omega \in \MeasSymbol^\mathbb{Z}} f(\omega) d\mu(\omega) 
  ~.
\end{align*}

A function of a sample is often referred to as a \emph{statistic}. And so,
$E[f]$ is a statistic of the process samples $\omega^i$ for some function
$f(\cdot)$.

Compare this to the time average of the function:
\begin{align*}
\langle f(\omega) \rangle_t = \lim_{t \to \infty} \frac{1}{t} \sum_{k = 1}^t f(\tau^{k-1} \omega)
  ~,
\end{align*}
for $\mu$-almost every $\omega \in \alphabet^\mathbb{Z}$.

\begin{definition}
\label{Def:Ergodic}
An \emph{ergodic process} consists of realizations for which time-average
statistics equal ensemble averages:
\begin{align*}
\langle f(\omega) \rangle_t = E[f(\omega)]
  ~,
\end{align*}
where $f(\cdot)$ is any measurable function
and for $\mu$-almost every $\omega \in \alphabet^\mathbb{Z}$.
\end{definition}

\begin{figure}
\includegraphics[width=\columnwidth]{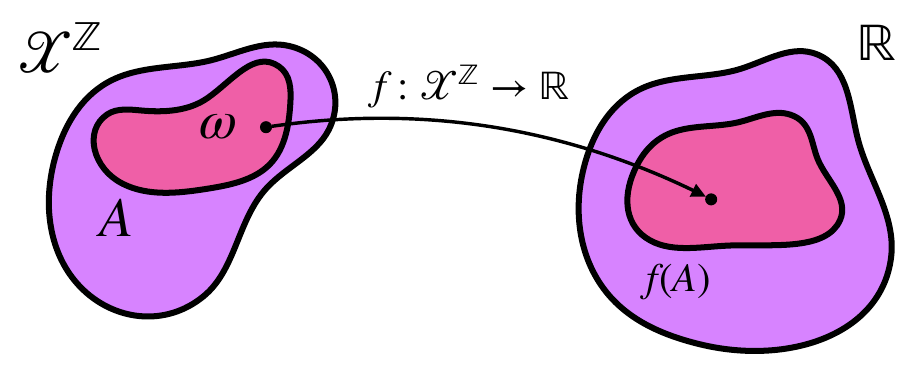}
\caption{Real-valued measurable function of a stochastic process: $f:
\alphabet^\mathbb{Z} \to \mathbb{R}$.
	}
\label{fig:FOfProcess}
\end{figure}

\begin{figure*}
\includegraphics[width=.99\textwidth]{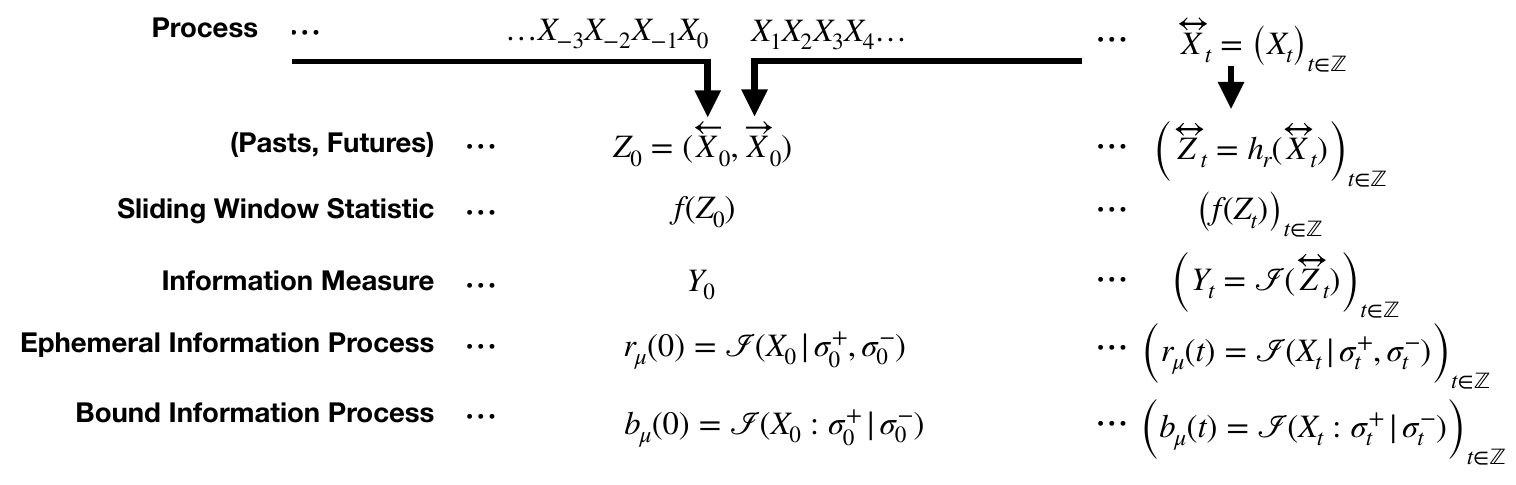}
\caption{Sliding window processes: The given stochastic process appears on the
	top line with the construction of sliding window processes from it
	illustrated going line by line. The last lines illustrate the ephemeral
	information process $\rmu (t)$ and the bound information process $\bmu(t)$.
	Recall that the forward and reverse causal states are functions of the
	semi-infinite past and futures, respectively---$\sigma^+ = \epsilon(\Past)$
	and $\sigma^- = \epsilon(\Future)$---which aggregate RVs are available in
	the $Z_t$ process.
	}
\label{fig:SlidingWindow}
\end{figure*}

\subsection{Functions of a Stochastic Process}
\label{sec:Process}

Recall that if $Z$ is a random variable and $f(\cdot)$ a real-valued,
measurable function, $f(Z)$ is a random variable with a real-valued event
space. Thus, since probability is such a function of a random variable, $Y =
\Pr(Z)$ is also a real-valued random variable with event space $\mathcal{Y} =
[0,1]$ \cite{Fell70a}. The following relies on this observation repeatedly.

Helpfully, this observation extends to sequences $\ldots \MeasSymbol_{-1}
\MeasSymbol_{0} \MeasSymbol_{1} \ldots$ of random variables that define a
stochastic process $\MeasSymbol = \{\MeasSymbol_t: t \in \mathbb{Z}\}$.
The main lesson is that a function of a stochastic process is a stochastic
process. See App. \ref{app:FofP} which considers real-valued functions $f(w)$
of realizations: $f: \MeasSymbol^\mathbb{Z} \to \mathbb{R}$.

The (push-forward) measure of a function of set $A \subset
\alphabet^\mathbb{Z}$ is:
\begin{align*}
\mu(f(A)) = \mu (f^{-1}(A))
~.
\end{align*}
(See Fig. \ref{fig:FOfProcess}.)

%
%

We are interested in an agent that observes a process through a finite-duration
window and manipulates that information at each time. Define a \emph{finite
range} $r$, real-valued function $f: \mathbb{R}^{2r+1} \to \mathbb{R}$, $r <
\infty$.

\begin{proposition}
Applying $f$ to a stochastic process $X$ gives a new stochastic process $Y =
\{Y_t: t \in \mathbb{Z}\}$, the components of which are:
\begin{align*}
Y_t = f \left( \MeasSymbol_{[h_r(t)]} \right)
  ~.
\end{align*}
\end{proposition}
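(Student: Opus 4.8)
The plan is to verify the two ingredients a stochastic process must have: that each putative component $Y_t$ is genuinely a real-valued random variable on the ambient space $\left( \alphabet^{\integers}, \sigmaAlgebra, \measure \right)$, and that the indexed family $\left\{ Y_t \right\}_{t \in \integers}$ carries a well-defined measure on the sequence space $\reals^{\integers}$, so that it lands in the same category of objects as the input process $\MeasSymbol$. I would build directly on the fact, recalled in Sec.~\ref{sec:Process}, that a measurable function of a random variable is again a random variable---now applied not to a single RV but to the finite tuple of coordinate RVs that the window $h_r(t)$ selects.

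First I would fix $t$ and treat the windowing operation as a single map $\Pi_t : \alphabet^{\integers} \to \alphabet^{2r+1}$ sending $\omega$ to the tuple $\left( \MeasSymbol_{t-r}(\omega), \ldots, \MeasSymbol_{t+r}(\omega) \right)$. Each coordinate $\MeasSymbol_s$ is $\sigmaAlgebra$-measurable by definition of the process, and a finite collection of measurable maps is jointly measurable into the product $\sigma$-algebra on $\alphabet^{2r+1}$; hence $\Pi_t$ is measurable. Because $\alphabet$ is finite, $\alphabet^{2r+1} \subset \reals^{2r+1}$ and every function on it---in particular the restriction of $f$---is automatically Borel measurable, so the composition $Y_t = f \circ \Pi_t$ is a measurable real-valued map. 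That makes each $Y_t$ a random variable whose law is the push-forward of $\measure$ along $f \circ \Pi_t$, exactly in the push-forward sense recorded earlier in Sec.~\ref{sec:Process}.

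Next I would assemble the joint map $Y : \alphabet^{\integers} \to \reals^{\integers}$, $\omega \mapsto \left( Y_t(\omega) \right)_{t \in \integers}$. Since the cylinder $\sigma$-algebra on $\reals^{\integers}$ is generated by its coordinate projections, the coordinate-wise measurability just established is already enough to make $Y$ measurable; the push-forward $\measure_Y = Y_* \measure$ then defines a measure on $\left( \reals^{\integers}, \sigmaAlgebra_Y \right)$ under which $\left\{ Y_t \right\}_{t \in \integers}$ is precisely a sequence of indexed real-valued RVs---i.e., a stochastic process in the sense of Sec.~\ref{sec:Processes}. Finiteness of $r$ enters crucially here: it guarantees that each $Y_t$ is a cylinder function depending on only finitely many coordinates, so no tail- or limiting-measurability argument is required.

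The step I expect to demand the most care is the measurability bookkeeping---confirming that finite tuples of coordinate RVs are jointly measurable into the product $\sigma$-algebra, and that coordinate-wise measurability of the $Y_t$ suffices for the joint sequence map $Y$. Both are standard consequences of the cylinder $\sigma$-algebra being projection-generated, and both are rendered essentially trivial by the finiteness of $\alphabet$, which collapses the relevant $\sigma$-algebras to power sets. What remains is purely notational: checking that $\left\{ Y_t \right\}$ inherits the index set $\integers$ and admits the shift $\shiftOperator$, so that the output is an object of the same type as the input and the proposition's displayed formula for $Y_t$ holds verbatim.
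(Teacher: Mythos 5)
Your proposal is correct and follows essentially the same route as the paper's proof: both factor the construction through the sliding-window (vector-valued) intermediate process and then apply $f$ componentwise, obtaining the law of $Y$ as a push-forward of $\measure$. Your version simply makes explicit the measurability bookkeeping (joint measurability of the window map, coordinate-wise measurability sufficing for the cylinder $\sigma$-algebra on $\reals^{\integers}$) that the paper passes over in favor of writing the measure as an integral over cylinder sets and preimages, and it avoids the paper's inessential appeal to stationarity.
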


\begin{proof}
$Y = f(X)$'s probability space $(\mathcal{Y}, \Sigma_Y, \mu_Y) =
(\mathbb{R}^\mathbb{Z}, \Sigma_Y, \mu_{Y})$.  $Y$'s event space $\Sigma_Y$ is
the sigma algebra over $\mathbb{R}^\mathbb{Z}$.  $Y$'s measure $\mu_Y$ is given
in terms of $\mu_X$ in two steps: giving the process $Y_t$ via a function of
RVs in a window onto $X$ and then determining the measure $\mu_Y$ in terms of
$\mu_X$.

First, consider the sliding window process $Z = h_r(X)$, where $h_r$ is a
finite-range $r$, index function as above. That is, $Z_t$'s event space consists
of the real-valued vectors $z = (x_0,x_1,\ldots, x_{2r+1}) \in
\mathbb{R}^{2r+1}$. Then the process $Z$ is a time series of vectors $Z_t =
(X_{t-r}, \ldots, X_t, \ldots X_{t+r})$.

And, second, we get $Z$'s measure $\mu_{t,z}$ from $X$'s $\mu_X$:
\begin{align*}
\mu_{t,z} = \int_{\omega \in U_{t,z}} d \mu_X (\omega)
  ~,
\end{align*}
with the cylinder $U_{t,z} = \{\MeasSymbol: (\meassymbol_t, \meassymbol_{t+1},
\ldots, \meassymbol_{2r+1}) = z\}$ associated with vector $z = (z_0, z_1,
\ldots, z_{2r+1})$. Since $\MeasSymbol$ is stationary we can set $t = 0$ and
work with $U_0,z$; simplifying notation to $\mu_Z$ and $U_z$. 

Next, we obtain the real-valued process $Y_t$ from the vector-valued process
$Z_t$. However, this is simply a componentwise transformation of the vector
series by the real-valued function $f(\cdot)$: $Y_t = f(Z_t)$.

Finally, we determine the measure $\mu_Y$ for the real-valued process $Y_t$
from the measure $\mu_Z$ for the vector-valued process $Z_t$. Recall the
real-valued function $f : \mathbb{R}^{2r+1} \to \mathbb{R}$ that maps real
vectors $z$ to some real value: $f(z)$. To get $Y$'s measure $\mu_Y$ in
terms of $\mu_Z$ we integrate over the set of vectors $z \in f^{-1} (y)$,
that give the real value $y$:
\begin{align}
\mu_Y (y) = \int_{z \in f^{-1} (y)} d \mu_Z (z)
  ~,
\end{align}
where $y = \ldots y_{t-1} y_t y_{t+1} \ldots \in \mathbb{R}^\mathbb{Z}$ is a
realization of $Y$. And so, $Y = f(Z)$'s probability space is $(\mathbb{R}^\mathbb{Z}, \Sigma_Y, \mu_{Y})$.

Together, these determine $Y$'s probability space in terms of $\MeasSymbol$'s
probability space $\left(\mathcal{\MeasSymbol}, \Sigma_\MeasSymbol,
\mu_\MeasSymbol\right)$ establishing that $Y = f \circ h (\MeasSymbol) =
f(\MeasSymbol)$ is a stochastic process with measure:
\begin{align*}
\mu_Y (y) =
\int_{\omega \in U_{t,z}}
\int_{z \in f^{-1} (y)}
d \mu_Z (z)
d \mu_X (\omega)
  ~.
\end{align*}
\end{proof}

Moreover, if $\MeasSymbol$ is stationary and ergodic, so is process $Y$. As the
following two results establish.

\begin{proposition}
\label{Def:StationaryFunction}
If $\MeasSymbol$ is a stationary process and $f$ is a finite-range, real-valued
function, then $Y = f(\MeasSymbol)$ is stationary.
\end{proposition}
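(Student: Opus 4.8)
The plan is to show that the measure $\mu_Y$ inherits shift-invariance directly from the shift-invariance of $\mu_X$, using the construction of $Y$ from the previous proposition. Stationarity means $\shiftOperator \mu_Y = \mu_Y$, equivalently that word probabilities $\Pr(\MS{t}{t+\ell})$ are independent of $t$ (Definition \ref{Def:Stationary}). Since $Y$ is built in two stages---first the sliding-window vector process $Z = h_r(X)$, then the componentwise map $Y_t = f(Z_t)$---I would establish stationarity stage by stage, showing that each stage commutes with the shift.

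First I would argue that the intermediate vector process $Z$ is stationary. A cylinder set for $Z$ at time $t$ is determined by specifying a block of consecutive $X$-values, since $Z_t = (X_{t-r}, \ldots, X_{t+r})$ is a window of width $2r+1$ on $X$. Fixing a word $z_{t:t+m}$ of vectors is therefore equivalent to fixing a word of the underlying $X$ process of length $2r+m$. Because $X$ is stationary, the probability of that $X$-word depends only on its length and content, not on the absolute index $t$; hence $\Pr(Z_{t:t+m}) = \Pr(Z_{0:m})$, so $Z$ is stationary. Concretely, this is the statement $\shiftOperator_Z \mu_Z = \mu_Z$, which follows from $\shiftOperator_X \mu_X = \mu_X$ because the window map $h_r$ intertwines the two shifts: $h_r \circ \shiftOperator_X = \shiftOperator_Z \circ h_r$.

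Second, I would push this through the componentwise map $f$. Since $Y_t = f(Z_t)$ applies the same fixed function $f$ at every coordinate, the map $Z \mapsto f(Z)$ also commutes with the shift: advancing $Z$ by one step and then applying $f$ coordinatewise gives the same result as applying $f$ and then advancing $Y$. Using the push-forward/measure relation $\mu_Y(y) = \int_{z \in f^{-1}(y)} d\mu_Z(z)$ from the preceding proposition, I would compute $(\shiftOperator \mu_Y)$ on a cylinder of $Y$ and rewrite it as a $\mu_Z$-integral over the preimage under $f$ of the shifted cylinder; the intertwining relation lets me pull the shift inside onto $\mu_Z$, where stationarity of $Z$ collapses it, yielding $(\shiftOperator \mu_Y) = \mu_Y$. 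Equivalently in word-probability form, $\Pr(\MS{t}{t+\ell}) = \Pr(\MS{0}{\ell})$ for $Y$ follows from the same identity for $Z$.

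The main obstacle, and the step deserving the most care, is the bookkeeping that the window map $h_r$ genuinely intertwines the two shift operators---that $h_r(\shiftOperator_X \omega) = \shiftOperator_Z h_r(\omega)$---and that this identity survives the passage to measures on cylinder sets rather than on points. The finiteness of $r$ is essential: it guarantees each $Y_t$ depends on only finitely many $X$-coordinates, so a $Y$-cylinder pulls back to a genuine finite $X$-cylinder of well-defined length, keeping all word probabilities finite and time-independent. Once the intertwining is pinned down, the rest is a routine change-of-variables in the integrals defining $\mu_Y$, invoking $\shiftOperator_X \mu_X = \mu_X$ exactly once.
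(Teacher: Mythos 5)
Your proof is correct and follows essentially the same route as the paper's: first observe that the sliding-window block process $Z_t = (X_{t-r},\ldots,X_{t+r})$ inherits shift-invariance from $X$, then note that $Y$ is a fixed componentwise function of these blocks and so is itself shift-invariant. You simply make explicit the intertwining of the shifts and the push-forward of measures that the paper's two-sentence argument leaves implicit.
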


\begin{proof}
We must show that $Y$ is time-shift invariant: $\Pr(Y_{t:t+\ell}) =
\Pr(Y_{0:\ell})$, for all $t \in \mathbb{Z}$. We are given that $\MeasSymbol$ is
time-shift invariant: $\Pr(X_{t:t+\ell}) = \Pr(X_{0:\ell})$, for all $t,\ell \in
\mathbb{Z}$. And so, the process over its $2r+1$ blocks $\MeasSymbol_{0:2r+1}$
is time-shift invariant. Since $Y$ values are direct functions of these blocks,
$Y$'s sequences are time-shift invariant and so $Y$ is stationary.
\end{proof}

\begin{corollary}
If $\MeasSymbol$ is a stationary process, the \emph{probability process}
$Y = \Pr(\MeasSymbol)$ is stationary.
\end{corollary}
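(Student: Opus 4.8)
The plan is to recognize the probability process as a special instance of the finite-range functional setting of Proposition~\ref{Def:StationaryFunction} and then to invoke that result directly. Recall from the discussion of functions of a stochastic process that probability is itself a real-valued measurable function: if $Z$ is a random variable, then $Y = \Pr(Z)$ is a random variable taking values in $[0,1]$. Reading $Y = \Pr(\MeasSymbol)$ componentwise, the present value $Y_t = \Pr(\MeasSymbol_t = \meassymbol_t)$ is obtained by applying this probability map to the observed value at time $t$. Thus $Y$ is nothing other than $f(\MeasSymbol)$ for the specific choice $f = \Pr$.

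The single point that then requires checking is that $\Pr$ qualifies as a \emph{finite-range} function in the sense demanded by the proposition. This follows from the word-probability characterization in Eq.~\eqref{eq:FromMeasureToProb}: the probability of any finite block $\meassymbol_{t:t+\ell}$ is the measure of the corresponding cylinder set and hence depends only on that block; in particular, the probability of a single observed symbol reads a window of range $r = 0$. So $\Pr$ is indeed a real-valued function of a finite window onto $\MeasSymbol$, and the hypotheses of Proposition~\ref{Def:StationaryFunction} are met.

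With these observations in hand, the corollary is immediate: since $\MeasSymbol$ is stationary and $f = \Pr$ is a finite-range, real-valued function, Proposition~\ref{Def:StationaryFunction} applies verbatim to give that $Y = \Pr(\MeasSymbol)$ is stationary. Concretely, stationarity of $\MeasSymbol$ supplies $\Pr(\MeasSymbol_{t:t+\ell}) = \Pr(\MeasSymbol_{0:\ell})$ for every $t$, so the blocks determining $Y$ are time-shift invariant and therefore so are the induced sequences of probability values. The main---though quite mild---obstacle is the bookkeeping of the first step, namely pinning down exactly which finite window of $\MeasSymbol$ the probability map reads and confirming that it is finite; once that identification is made, no further argument is needed, as stationarity transfers automatically through any finite-range functional by the preceding proposition.
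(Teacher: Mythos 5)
Your proposal is correct and follows essentially the same route as the paper: identify $\Pr(\cdot)$ as a finite-range, real-valued function of the process and then invoke the preceding proposition on stationarity of finite-range functionals. Your explicit bookkeeping of the window (range $r=0$ for a single symbol, via the cylinder-set characterization of word probabilities) is a slightly more careful rendering of the same step the paper states in one line.
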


\begin{proof}
$X$'s sequence probabilities $Y = \Pr(\MeasSymbol)$ are RVs. $X$'s stationarity
guarantees that a time series of its sequence probabilities $\ldots Y_{t-1} Y_t
Y_{t+1} \ldots$ is time-shift invariant. However, sequence probabilities $Y_t$
are obtained from a finite-range, real-valued function $\Pr(\cdot)$. And so, by
the preceding proposition, the stochastic process $Y = \Pr(\MeasSymbol)$ is
stationary.
\end{proof}

Moreover:

\begin{proposition}
\label{Def:ErgodicFunction}
If $\MeasSymbol$ is a stationary and ergodic process and $f$ is a finite-range,
real-valued function, then $Y = f(\MeasSymbol)$ is stationary and ergodic.
\end{proposition}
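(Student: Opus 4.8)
Stationarity is already in hand from \cref{Def:StationaryFunction}, so the only genuinely new content is ergodicity. The plan is to realize $Y = f(\MeasSymbol)$ as a \emph{factor} of the shift dynamical system $(\alphabet^\mathbb{Z}, \Sigma, \mu, \tau)$ and then invoke the standard principle that a factor of an ergodic system is ergodic. Concretely, I would define the coding map $\Phi: \alphabet^\mathbb{Z} \to \mathbb{R}^\mathbb{Z}$ (the composition of the sliding-window map $h_r$ with $f$) by $(\Phi\omega)_t = f(\omega_{t-r}, \ldots, \omega_{t+r})$, so that $Y = \Phi(\MeasSymbol)$ and, by the measure construction in the first Proposition above, $Y$'s law is the push-forward $\mu_Y(B) = \mu(\Phi^{-1}(B))$. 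Finite range is exactly what guarantees that each output coordinate depends on only finitely many input coordinates, so that (taking $f$ Borel measurable) $\Phi$ is measurable and the push-forward is well defined.

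The crux is that $\Phi$ \emph{intertwines} the two shifts. Because one and the same finite-range $f$ is applied in the same sliding fashion at every time, $(\Phi(\tau\omega))_t = f(\omega_{t-r+1}, \ldots, \omega_{t+r+1}) = (\Phi\omega)_{t+1} = (\tau\,\Phi\omega)_t$; that is, $\Phi\circ\tau = \tau\circ\Phi$ as measurable maps. This single equivariance identity is the engine of the whole argument, and establishing it is where the finite-range hypothesis earns its keep.

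With equivariance in place, ergodicity transfers cleanly, and I would give the invariant-set version as the cleanest route: if $B\subset\mathbb{R}^\mathbb{Z}$ satisfies $\tau^{-1}B = B$, then taking preimages through $\Phi\circ\tau = \tau\circ\Phi$ gives $\tau^{-1}(\Phi^{-1}B) = \Phi^{-1}(\tau^{-1}B) = \Phi^{-1}B$, so $\Phi^{-1}B$ is a shift-invariant subset of $\alphabet^\mathbb{Z}$; ergodicity of $\MeasSymbol$ then forces $\mu(\Phi^{-1}B)\in\{0,1\}$, whence $\mu_Y(B) = \mu(\Phi^{-1}B)\in\{0,1\}$ and $Y$ is ergodic. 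To match \cref{Def:Ergodic} directly one argues instead via time averages: for measurable $g$ on $\mathbb{R}^\mathbb{Z}$, equivariance yields $\langle g\rangle_t(\Phi\omega) = \langle g\circ\Phi\rangle_t(\omega)$, which by ergodicity of $\MeasSymbol$ applied to $g\circ\Phi$ equals the constant $\int (g\circ\Phi)\,d\mu = \int g\,d\mu_Y = E[g]$ for $\mu$-almost every $\omega$.

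I expect the main obstacle to be measure-theoretic bookkeeping rather than any conceptual difficulty. Three points need care: (i) checking that $\Phi$ is genuinely measurable and that the push-forward $\mu(\Phi^{-1}(\cdot))$ coincides with the $\mu_Y$ built in the first Proposition, so the two descriptions of $Y$ agree; (ii) transporting the ``$\mu$-almost every'' quantifier correctly, i.e.\ verifying that the $\mu$-null exceptional set on which the time average misbehaves has $\mu_Y$-null image under $\Phi$, so that the time-average identity genuinely holds for $\mu_Y$-almost every realization $y$; and (iii) in the time-average route, justifying existence of the Birkhoff limit for $g\circ\Phi$ and hence the tacit equivalence of the invariant-set and time-average formulations of ergodicity. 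None of these is deep, but they are where a fully rigorous write-up must spend its effort.
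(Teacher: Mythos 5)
Your proposal is correct and takes essentially the same route as the paper, which simply defers to Billingsley's Thm.~36.4 --- precisely the sliding-block-code/factor argument you spell out --- and then sketches the time-average transfer. If anything, your write-up is more complete: the paper's inline argument applies the ergodic hypothesis only to $h=f$ itself, whereas the step that is actually needed (and that you make explicit via the shift-equivariance $\Phi\circ\tau=\tau\circ\Phi$) is to apply it to $g\circ\Phi$ for an arbitrary test function $g$ on the image process.
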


\begin{proof}
(See
Ref. \cite[Thm. 36.4]{Bill95a}.) We
have process $Y = \{(Y)_t: t \in \mathbb{Z}\}$, with $Y_t = f \left(
\MeasSymbol_{t-r}, \ldots, \MeasSymbol_t, \ldots, \MeasSymbol_{t+r} \right)$.
$Y$ is stationary from the preceding proposition. We must show that $Y$ is
ergodic:
\begin{align*}
\langle g(Y) \rangle_t = E[g(\omega)]
  ~,
\end{align*}
for $\omega \in \Sigma_Y$ and functions $g$. We are given that this holds for
the original process $\MeasSymbol$:
\begin{align*}
\langle h(\MeasSymbol) \rangle_t = E[h(x)]
  ~,
\end{align*}
for $x \in \MeasSymbol^\mathbb{Z}$ and all functions $h$. Thus, it also holds
for $h = f$:
\begin{align*}
\langle f(\MeasSymbol) \rangle_t & = E[f(x)] \\
\langle Y \rangle_t & = E[\omega]
  ~,
\end{align*}
and the expectation is taken over all $\omega \in \MeasSymbol^\mathbb{Z}$ and $\Sigma_Y = f(\Sigma_X)$.
\end{proof}

\begin{corollary}
If $\MeasSymbol$ is stationary and ergodic, then the probability process
$\Pr(\MeasSymbol)$ is stationary and ergodic.
\end{corollary}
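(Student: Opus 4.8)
The plan is to recognize this as the ergodic counterpart of the earlier stationarity corollary and to obtain it as an immediate special case of Proposition \ref{Def:ErgodicFunction}. That proposition already guarantees that any finite-range, real-valued function of a stationary and ergodic process is itself stationary and ergodic. So the only real work is to confirm that the map $\Pr(\cdot)$ qualifies as such a function; once that is established, setting $f = \Pr(\cdot)$ in Proposition \ref{Def:ErgodicFunction} delivers the conclusion directly.

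First I would verify the \emph{real-valued} requirement. As noted in Sec. \ref{sec:Process}, probability is itself a function of a random variable, so that $\Pr(\MeasSymbol)$ takes values in $\mathcal{Y} = [0,1] \subset \mathbb{R}$. Each component of the probability process is therefore a genuine real-valued RV, and the probability process is a bona fide real-valued stochastic process in the sense already established there. This part is routine and mirrors the observation used for the unlabeled stationarity corollary following Proposition \ref{Def:StationaryFunction}.

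Second---and this is the crux---I would verify the \emph{finite-range} requirement. The key observation is that the word probabilities $\Pr(\meassymbol_1 \ldots \meassymbol_\ell) = \mu(U_{0,\meassymbol_1 \ldots \meassymbol_\ell})$ of Eq. \eqref{eq:FromMeasureToProb} depend only on a finite block of RVs of length $\ell$. Hence, fixing a window of radius $r$, the assignment $\MeasSymbol_{[h_r(t)]} \mapsto \Pr(\MeasSymbol_{[h_r(t)]})$ is a finite-range, real-valued function in exactly the sense Proposition \ref{Def:ErgodicFunction} requires. This carries the reasoning already used in the stationary setting over to the ergodic one without modification.

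With both requirements verified, the conclusion follows by applying Proposition \ref{Def:ErgodicFunction} with $f = \Pr(\cdot)$: since $\MeasSymbol$ is stationary and ergodic and $\Pr(\cdot)$ is a finite-range, real-valued function, the probability process $Y = \Pr(\MeasSymbol)$ is stationary and ergodic. The main obstacle is purely the finite-range bookkeeping---ensuring one is taking probabilities of finite words rather than of bi-infinite realizations, since the latter could not be expressed as a function of a bounded window and would fall outside the scope of Proposition \ref{Def:ErgodicFunction}.
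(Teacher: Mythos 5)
Your proposal is correct and follows essentially the same route as the paper: both invoke the preceding proposition with $f = \Pr(\cdot)$ viewed as a finite-range, real-valued function of the process. Your added care in checking that word probabilities depend only on finite blocks (so that $\Pr(\cdot)$ genuinely has finite range) is a point the paper leaves implicit but does not change the argument.
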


\begin{proof}
We are given that $X$'s sequence probabilities are time-shift invariant and that
it has only a single ergodic invariant set. $Y$'s sequence probabilities are
obtained from $\MeasSymbol$ via a finite-range, real-valued function
$\Pr(\cdot)$. And so, the preceding proposition shows that the process
$\Pr(\MeasSymbol)$ is stationary and ergodic.
\end{proof}

Figure \ref{fig:SlidingWindow} illustrates the construction of the
sliding window processes from which information processes are constructed.

\begin{table}[]
	\setlength{\tabcolsep}{0pt}
    \begin{tabular}{rccccc}
    \toprule
    \multicolumn{2}{l}{$\Leftarrow$ Reverse} &
    \multicolumn{2}{c}{Processes} &
    \multicolumn{2}{r}{Forward $\Rightarrow$} \\
    \midrule
	Reverse Statistical &
	$~~\cdots~~$ &
	$\SelfII{\causalstate_{-1}^-}$ &
	$\SelfII{\causalstate_{0}^-}$ &
	$\SelfII{\causalstate_{1}^-}$ &
	$~\cdots~$\\
	Complexity & $~\cdots~$ &
	$\Cmu^-(-1)$ &
	$\Cmu^-(0)$ &
	$\Cmu^-(1)$ &
	$~\cdots~$\\
    \midrule
	Reverse&
	$\cdots$ &
	$~~\epsilon^- (\Future_{-1})$ &
	$~~\epsilon^- (\Future_{0})$ &
	$~~\epsilon^- (\Future_{1})$ &
	$~\cdots~$\\
	Causal Process & $~~\cdots~~$ &
	$\causalstate^-_{-1}$ &
	$\causalstate^-_{0}$ &
	$\causalstate^-_{1}$ &
	$~\cdots~$\\
    \bottomrule
	Futures &
	$~\cdots~$ &
	$\Future_{-1}$ &
	$\Future_{0}$ &
	$\Future_{1}$ &
	$~\cdots~$\\
	Presents&
	$~\cdots~$ &
	$\MeasSymbol_{-1}$ &
	$\MeasSymbol_{0}$ &
	$\MeasSymbol_{1}$ &
	$~\cdots~$\\
	Pasts &
	$~\cdots~$ &
	$\Past_{-1}$ &
	$\Past_{0}$ &
	$\Past_{1}$ &
	$~\cdots~$\\
    \bottomrule
	Forward&
	$~\cdots~$ &
	$\epsilon^+ (\Past_{-1})$ &
	$\epsilon^+ (\Past_{0})$ &
	$\epsilon^+ (\Past_{1})$ &
	$~\cdots~$\\
	Causal Process & $~\cdots~$ &
	$\causalstate^+_{-1}$ &
	$\causalstate^+_{0}$ &
	$\causalstate^+_{1}$ &
	$~\cdots~$\\
    \bottomrule
	Forward Statistical &
	$~\cdots~$ &
	$\SelfII{\causalstate_{-1}^+}$ &
	$\SelfII{\causalstate_{0}^+}$ &
	$\SelfII{\causalstate_{1}^+}$ &
	$~\cdots~$\\
	Complexity & $~~\cdots~~$ &
	$\Cmu^+(-1)$ &
	$\Cmu^+(0)$ &
	$\Cmu^+(1)$ &
	$~\cdots~$\\
    \bottomrule
\end{tabular}
\caption{Information processes: The given stochastic process appears on the
	line with Presents. All other lines are information processes that
	derive from it.
	}
\label{fig:InformationProcesses}
\end{table}

\section{Informations and Information Processes}
\label{sec:InfoTh}

We introduce Shannon's information measures, information diagrams, and
information processes, showing how to use them to describe a variety
of informational properties of a stochastic process $\Process$.
Table \ref{fig:InformationProcesses} provides a roadmap for the various kinds
of stochastic informational processes the following introduces.
The companion, Table \ref{fig:SymmetricInformationProcesses}, lists several
information processes that are time symmetric in the sense that the sets of
RVs from which they are formed is symmetric under $t \to -t$.

\subsection{Self-Informations and Measures}
\label{sec:InformationMeasures}

The most basic quantity in information theory is the
\emph{self-information}---the amount of information learned observing a single
measurement value $\meassymbol$ of a random variable $\MeasSymbol_t$
\cite{Shan48a,Cove06a}:
\begin{align}
\selfii{\MeasSymbol_t = \meassymbol} =
    - \log_2 \Pr(\MeasSymbol_t = \meassymbol)
    ~.
\label{eq:SelfInformation}
\end{align}
On the one hand, if the occurrence of $\meassymbol$ is certain, an observer
learns no information and $\selfi = 0$. On the other hand, if all realizations
are equally uncertain $\Pr(\MeasSymbol_t = \meassymbol) = 1/k$ for each $x$,
the observer gains the maximum amount of information---that is, any one
occurrence is maximally informative and $\selfii{\MeasSymbol_t = \meassymbol} =
\log_2 k$.

Several comments on notation and terminology are useful initially. The prefix
``self-'' connotes a quantity describing an individual event's occurrence. This
deviates from the typically-rare use of ``self-'' in ``self-information'' as the
mutual information of a random variable with itself, which is simply its
entropy: $I[X:X] = H[X]$ in the notation of Ref. \cite[p. 21]{Cove06a}. In
addition, here within the theory of information measures \cite{Yeun08a}, there
is only a single quantity \emph{information}, denoted $\SelfI$. In this, one
has $\SelfImut{X}{X} = \SelfII{X}$, highlighting the redundancy in the
conventional use of distinct symbols $H$ and $I$. Different kinds of
information---joint, conditional, mutual---are then denoted by the form of
$\SelfI$'s arguments.  This leads to the practical consequence that we use only
the single notation $\SelfI[\cdot]$ rather than, as in basic information theory
\cite{Cove06a}, using both $H[\cdot]$ and $I[\cdot]$.

With reference to self-information, the more familiar quantity from information
theory, though, is its ensemble average---the Shannon entropy
$\SelfII{\MeasSymbol} = E[\selfii{\MeasSymbol_t}]$ of the random variable
$\MeasSymbol_t$:
\begin{align}
    \SelfII{\MeasSymbol_t} & =  E[\selfii{\MeasSymbol_t}] \nonumber \\
    & = - \sum_{\meassymbol \in \alphabet} 
    \Pr(\MeasSymbol_t = \meassymbol) \log_2 \Pr(\MeasSymbol_t = \meassymbol)
    ~.
\label{eq:ShannonEntropy}
\end{align}
This requires being given or knowing ahead of time the single-symbol average
probabilities $\{ \Pr(\MeasSymbol_t = \meassymbol), \meassymbol \in
\alphabet\}$. Given a stochastic process, it can also be developed from
measuring the self-information over time:
\begin{align*}
\langle \selfii{\MeasSymbol_t = \meassymbol} \rangle
  = \lim_{T \to \infty} \frac{1}{T} \sum_{t=0}^T
  \selfii{\MeasSymbol_t = \meassymbol}
  ~,
\end{align*}
assuming that the process is stationary and ergodic.

And, from this, the ensemble average is:
\begin{align*}
    \SelfII{\MeasSymbol_t} =
    - \sum_{\meassymbol \in \alphabet} 
	\langle \selfii{\MeasSymbol = \meassymbol} \rangle
    ~.
\end{align*}

Similarly, there is the relationship between a pair of jointly-distributed
random variables, say, $\MeasSymbol_t$ and $\MeasSymbol_{t^\prime}$. To
simplify, considered time-delayed RVs where $t^\prime = t + \tau$. The
\emph{joint entropy} $\SelfIjoint{\MeasSymbol_t}{\MeasSymbol_{t^\prime}}(\tau)$
is of the same functional form as \cref{eq:ShannonEntropy}, applied to the
joint distribution $\Pr \left( \MeasSymbol_t, \MeasSymbol_{t^\prime} \right)$.

First, we have the joint self-information:
\begin{align}
    \selfijoint{\MeasSymbol_t = \meassymbol}{\MeasSymbol_{t^\prime} =
	\meassymbol^\prime} (\tau) =
    - \log_2 \Pr(\MeasSymbol_t = \meassymbol,\MeasSymbol_{t^\prime} =
	  \meassymbol^\prime)
    ~,
\label{eq:JointSelfInformation}
\end{align}
for $\meassymbol, \meassymbol^\prime \in \alphabet$. Second, the
ensemble-averaged version:
\begin{align}
    & \SelfIjoint{\MeasSymbol}{\MeasSymbol^\prime} (\tau) = \nonumber \\
    & - \sum_{\meassymbol, \meassymbol^\prime \in \alphabet} 
	\Pr(\MeasSymbol_t = \meassymbol,\MeasSymbol_{t^\prime} =
      \meassymbol^\prime)
	  \log_2 \Pr(\MeasSymbol_t = \meassymbol,\MeasSymbol_{t^\prime} =
      \meassymbol^\prime)
    ~.
\label{eq:ShannonJointEntropy}
\end{align}

\begin{table}[]
	\setlength{\tabcolsep}{0pt}
    \begin{tabular}{lccccc}
    \toprule
    \multicolumn{6}{l}{Processes} \\
    \midrule
	Prediction & $~\cdots~$ &
	$\SelfIcond{\MeasSymbol_{-1}} {\causalstate_{-1}^+}$ &
	~$\SelfIcond{\MeasSymbol_{0}} {\causalstate_{0}^+}$ ~&
	$\SelfIcond{\MeasSymbol_{1}} {\causalstate_{1}^+}$ &
	$~\cdots~$\\
	& $~\cdots~$ &
	$\hmu^+(-1)$ &
	$\hmu^+(0)$ &
	$\hmu^+(1)$ &
	$~\cdots~$\\
    \midrule
	Predictable & $~\cdots~$ &
	$\SelfImut{\causalstate_{-1}^+} {\causalstate_{-1}^-}$ &
	~$\SelfImut{\causalstate_{0}^+} {\causalstate_{0}^-}$ ~&
	$\SelfImut{\causalstate_{1}^+} {\causalstate_{1}^-}$ &
	$~\cdots~$\\
	& $~\cdots~$ &
	$\EE(-1)$ &
	$\EE(0)$ &
	$\EE(1)$ &
	$~\cdots~$\\
    \midrule
	Ephemeral & $~\cdots~$ &
	$\rmu(-1)$ &
	$\rmu(0)$ &
	$\rmu(1)$ &
	$~\cdots~$\\
    \midrule
	Bound & $~\cdots~$ &
	$\bmu(-1)$ &
	$\bmu(0)$ &
	$\bmu(1)$ &
	$~\cdots~$\\
    \bottomrule
\end{tabular}
\caption{Time Symmetric Information Processes: Derived from
	information measures that are symmetric in time.
	}
\label{fig:SymmetricInformationProcesses}
\end{table}

For which we can also implement a time-averaged quantity:
\begin{align*}
\langle \selfijoint{\MeasSymbol_t = \meassymbol}{\MeasSymbol_{t^\prime} =
\meassymbol^\prime} \rangle (\tau)
  = \lim_{T \to \infty} \frac{1}{T} \sum_{t=0}^T
  \selfijoint{\MeasSymbol_t = \meassymbol}{\MeasSymbol_{t^\prime} = \meassymbol^\prime}
  ~,
\end{align*}
From this, the ensemble average is then given:
\begin{align*}
	\SelfIjoint{\MeasSymbol}{\MeasSymbol^\prime} (\tau) =
    - \sum_{\meassymbol,\meassymbol^\prime \in \alphabet} 
	\langle \selfijoint{\MeasSymbol_t = \meassymbol}{\MeasSymbol_{t^\prime} =
	\meassymbol^\prime} \rangle (\tau)
    ~.
\end{align*}
Effectively, via sampling, the time average ``empirically'' provides an estimate of the joint $\Pr(\MeasSymbol_t = \meassymbol,\MeasSymbol_{t^\prime}
= \meassymbol^\prime)$.

These can be straightforwardly extended, in principle, from pairs of variables
to the \emph{multivariate joint entropy} $\SelfI(\RVSet)$ of a set of $N$
variables $\RVSet = \left\{ \MeasSymbol_i \mid i \in \left( 1, \dots, N \right)
\right\}$. (In this, the temporal relationship between the variables needs to
be specified by an appropriate set of indexes; such as, time delays.)

Finally, the \emph{conditional entropy}
$\SelfIcond{\MeasSymbol_t}{\MeasSymbol_t^\prime} (\tau)$ is of a similar
functional form as above, but applied to the conditional distribution $\Pr
\left( \MeasSymbol_t| \MeasSymbol_{t^\prime} \right)$. It gives the information
learned from observing a random variable $\MeasSymbol_t$ at one time given
knowledge of random variable $\MeasSymbol_{t^\prime}$ at another $t^\prime = t
+ \tau$. We can write this in terms of the joint entropy:
\begin{align}
    \SelfIcond{\MeasSymbol_t}{\MeasSymbol_{t^\prime}} (\tau) = 
    \SelfIjoint{\MeasSymbol_t}{\MeasSymbol_{t^\prime}} (\tau)
	- \SelfII{\MeasSymbol_{t^\prime}}
	~. 
\label{eq:ConditionaLEntropy}
\end{align}

First, if we break this down as above, we have the conditional self-information:
\begin{align}
    \selficond{\MeasSymbol_t = \meassymbol}{\MeasSymbol_{t^\prime} =
	\meassymbol^\prime}
	=
    - \log_2 \Pr(\MeasSymbol_t = \meassymbol|\MeasSymbol_{t^\prime} =
	  \meassymbol^\prime)
    ~,
\label{eq:ConditionalSelfInformation}
\end{align}
for $\meassymbol, \meassymbol^\prime \in \alphabet$. Second, there is the
ensemble-averaged version:
\begin{align}
    & \SelfIcond{\MeasSymbol}{\MeasSymbol^\prime} (\tau) = \nonumber \\
    & - \sum_{\meassymbol, \meassymbol^\prime \in \alphabet} 
	\Pr(\MeasSymbol_t = \meassymbol,\MeasSymbol_{t^\prime} =
      \meassymbol^\prime)
	  \log_2 \Pr(\MeasSymbol_t = \meassymbol|\MeasSymbol_{t^\prime} =
      \meassymbol^\prime)
    ~.
\label{eq:ShannonConditionalEntropy}
\end{align}
For which we can also implement a time-averaged quantity:
\begin{align*}
\langle & \selficond{\MeasSymbol_t = \meassymbol}{\MeasSymbol_{t^\prime} =
\meassymbol^\prime} \rangle (\tau) \\
  & \qquad= \lim_{T \to \infty} \frac{1}{T} \sum_{t=0}^T
  \selficond{\MeasSymbol_t = \meassymbol}{\MeasSymbol_{t^\prime} =
  \meassymbol^\prime)}
  ~,
\end{align*}
If the stochastic process is stationary and ergodic,
then the ensemble average is given:
\begin{align*}
	\SelfIcond{\MeasSymbol}{\MeasSymbol^\prime} (\tau) =
    - \sum_{\meassymbol,\meassymbol^\prime \in \alphabet} 
	\langle \selficond{\MeasSymbol_t = \meassymbol}{\MeasSymbol_{t^\prime} =
  \meassymbol^\prime)} \rangle (\tau)
    ~.
\end{align*}


The information shared between random variables is the \emph{mutual
information}. The mutual self-information of observing $\meassymbol$ and
$\meassymbol^\prime$ at times $t$ and (delayed) $t^\prime$, respectively, is:
\begin{align}
	\selfimut{\MeasSymbol_t = \meassymbol}{ \MeasSymbol_{t^\prime} =
	\meassymbol^\prime}
	= 
    \log_2 \left( \frac{\Pr \left( \MeasSymbol_t = \meassymbol, 
    \MeasSymbol_{t^\prime} = \meassymbol^\prime \right) }{\Pr 
    \left( \MeasSymbol_t = \meassymbol \right) 
    \Pr \left( \MeasSymbol_{t^\prime} = \meassymbol^\prime \right)} \right)
	~. 
\label{eq:OnlineMutualSelfInformation}
\end{align}
When appropriately ensemble-averaged this recovers the version familiar from
elementary information theory:
\begin{align}
    & \SelfImut{\MeasSymbol_t}{\MeasSymbol_{t^\prime}} (\tau) \nonumber \\
	& = \sum_{ \substack{
	\meassymbol^\prime \in \alphabet \\
    \meassymbol \in \alphabet} }
	\Pr \left( \MeasSymbol_t = \meassymbol, 
    \MeasSymbol_{t^\prime} = \meassymbol^\prime \right)
	\selfimut{\MeasSymbol_t = \meassymbol}{ \MeasSymbol_{t^\prime} =
    \meassymbol^\prime}
	.
\label{eq:MutualSelfInformation}
\end{align}
And, a time-averaged version is:
\begin{align}
    \langle
	& \selfimut{\MeasSymbol_t = \meassymbol}{ \MeasSymbol_{t^\prime + \tau} =
    \meassymbol^\prime}
	\rangle (\tau)
	\nonumber \\
	& \quad = \lim_{T \to \infty} \frac{1}{T}
	\sum_{t = 0}^T
	\selfimut{\MeasSymbol_t = \meassymbol}{ \MeasSymbol_{t^\prime} =
    \meassymbol^\prime}
	~.
\label{eq:TemporalMutualSelfInformation}
\end{align}
Again, if the stochastic process is stationary and ergodic, then the
time-average and ensemble average are equal.
And, the ensemble average is given:
\begin{align*}
	\SelfImut{\MeasSymbol_t}{\MeasSymbol_{t^\prime}} (\tau) =
    - \sum_{\meassymbol,\meassymbol^\prime \in \alphabet} 
	\langle \selfimut{\MeasSymbol_t = \meassymbol}{\MeasSymbol_{t^\prime + \tau} =
  \meassymbol^\prime} \rangle (\tau)
    ~.
\end{align*}

The mutual information $\operatorname{I} \left[ \MeasSymbol_{m,n} :
\MeasSymbol_{p,q} : \MeasSymbol_{r,s} \right]$ between three random-variable
blocks---known as the \emph{interaction information} or as one of the
\emph{multivariate mutual informations}---is given by the difference between
mutual information and conditional mutual information:
\begin{align}
\operatorname{I} & \left[ \MeasSymbol_{m,n} : \MeasSymbol_{p,q} :
	\MeasSymbol_{r,s} \right] \nonumber \\
	& = 
    \I{\MeasSymbol_{m,n}}{\MeasSymbol_{r,s}} - 
    \Icond{\MeasSymbol_{m,n}}{\MeasSymbol_{p,q}}{\MeasSymbol_{r,s}}
	~. 
\label{eq:interactionInfo}
\end{align}

Tracking three-way interactions in this way between variables $X$, $Y$, and $Z$,
say, brings up two interpretations worth highlighting here. Two variables
$\MeasSymbol$ and $Y$ can have positive mutual information but be conditionally
independent in the presence of $Z$, in which case the interaction information is
positive. It is also possible, though, for two independent variables to become
correlated in the presence of $Z$, making the conditional mutual information
positive and the interaction information negative.

In other words, conditioning
on a third variable $Z$ can either increase or decrease mutual information and
the $X$ and $Y$ variables can appear more or less dependent given additional
observations \cite{Cove06a}. That is, there can be \emph{conditional
independence} or \emph{conditional dependence} between a pair of random
variables. Note that the interaction information is symmetric, so these
interpretations hold regardless of the conditioning variable selected. 

Finally, note that the self-information functions $\selfi(\cdot)$ above:
\begin{enumerate}
      \setlength{\topsep}{-4pt}
      \setlength{\itemsep}{-4pt}
      \setlength{\parsep}{-4pt}
\item $-\log_2 \Pr(\MeasSymbol_t)$,
\item $-\log_2 \Pr(\MeasSymbol_t,\MeasSymbol_{t^\prime})$,
\item $-\log_2 \Pr(\MeasSymbol_t | \MeasSymbol_{t^\prime})$, and 
\item $-\log_2 \left( \Pr(\MeasSymbol_t \MeasSymbol_{t^\prime}) /
\Pr(\MeasSymbol_t) \Pr(\MeasSymbol_{t^\prime}) \right)$ .
\end{enumerate}
are used in weighted averages to give ensemble averages. Their temporal and
ensemble averages are equal when the process in question is stationary and
ergodic.

In this, these self-informations lead to distinct and complementary
informational quantities useful in characterizing a process' various properties.
(This will become apparent shortly.) And, in the role they play in that
averaging, they are occasionally (suggestively) referred to as ``densities''
\cite{Pins64a}. This language parallels that found in continuous random
variable theory whose ``densities'' are integrated to obtain probability
``distributions''.

Note that the second pair of self-informations are linear functions of the first
two:
\begin{align*}
& -\log_2 \Pr(\MeasSymbol_t | \MeasSymbol_{t^\prime})
  = -\log_2 \Pr(\MeasSymbol_t,\MeasSymbol_{t^\prime})
  - \log_2 \Pr(\MeasSymbol_t) \\
&  \text{and:} \\
  & -\log_2 \left( \Pr(\MeasSymbol_t \MeasSymbol_{t^\prime}) /
\Pr(\MeasSymbol_t) \Pr(\MeasSymbol_{t^\prime}) \right) \\
  & \qquad= -\log_2 \Pr(\MeasSymbol_t,\MeasSymbol_{t^\prime})
  + \log_2 \Pr(\MeasSymbol_t) + \log_2 \Pr(\MeasSymbol_{t^\prime})
  ~.
\end{align*}
This also holds true \cite[Sec. 3]{Yeun08a} for other Shannon measures that we
develop shortly. And so, in the following we need only consider the ergodic
properties of the marginal and joint temporal self-informations, due to the
linearity of the conditions for stationarity and ergodicity.

Taken altogether, the preceding simply reviewed quantities from elementary
information theory, while pointing out how they are developed from
time-dependent self-informations (the ``densities'') via ensemble or temporal
averaging. These densities play a key role later when describing how an agent,
interacting in real-time with a complex environment, generates a time series---a
stochastic process of---time-local self-informations.

\subsection{Temporal Information Atoms}
\label{sec:TemporalAtoms}

One of the most basic operations of a stochastic process is to create
information \cite{Kolm59,Sina59}---measured by the entropy density (or rate)
$\hmu$. Importantly, in most settings there is more going on than creation.
Reference \cite{Jame11a} showed that created information consists of two
parts: ephemeral information $\rmu$ is that part of the created information
which the present throws away and the bound information $\bmu$ is that part
which the process remembers---that can affect the future \cite{Jame13a}. Said
simply, processes \emph{both} forget and remember portions of the information
created at each time: $\hmu = \rmu + \bmu$. This realization led to a more
detailed decomposition and identification of the kinds of information and
transformations with which stochastic processes operate.

Beyond time series, in the more general circumstance of a set of RVs,
information can be generally decomposed into a collection of atoms; see Ref.
\cite{Jame17a}. In the time series setting, though, the RVs are related by a
time shift and this simplifies understanding the various kinds of information
available in a stochastic process. We refer to the observation $\MeasSymbol_t =
\meassymbol$ at time $t$ as occurring in the \emph{present}. We call the
semi-infinite sequence $\MeasSymbol_{-\infty : t }$ the \emph{past} at time
$t$, which we also (more frequently) denote with a left pointing arrow
$\PastSmashed_t$. Accordingly, the semi-infinite sequence $\MeasSymbol_{t+1 :
\infty}$ is called the \emph{future} at time $t$ and denoted
$\FutureSmashed_t$. Thus, our strategy for analyzing the information theory of
stochastic processes is primarily concerned with delineating the relationships
between the past, present, and future. In this, we follow Ref. \cite{Jurg25a}.

Given this strategy, a useful perspective on a stochastic process is to picture
it as a communication channel transmitting information from the past
$\Past_{t-1} = \dots \MSym_{t-3}, \MSym_{t-2}, \MSym_{t-1}$ to the future
$\Future_t = \MSym_{t+1}, \MSym_{t+2}, \MSym_{t+3} \dots$ through the medium of
the present $\MeasSymbol_t$.
%
%
The past and the future are depicted in \cref{fig:iDiagramProcess}'s
\emph{information diagram} as extending to the left and the right,
respectively, to emphasize visualizing the bi-infinite chain of random
variables \cite{Jurg25a}.

\begin{figure*}
\includegraphics[width=.99\textwidth]{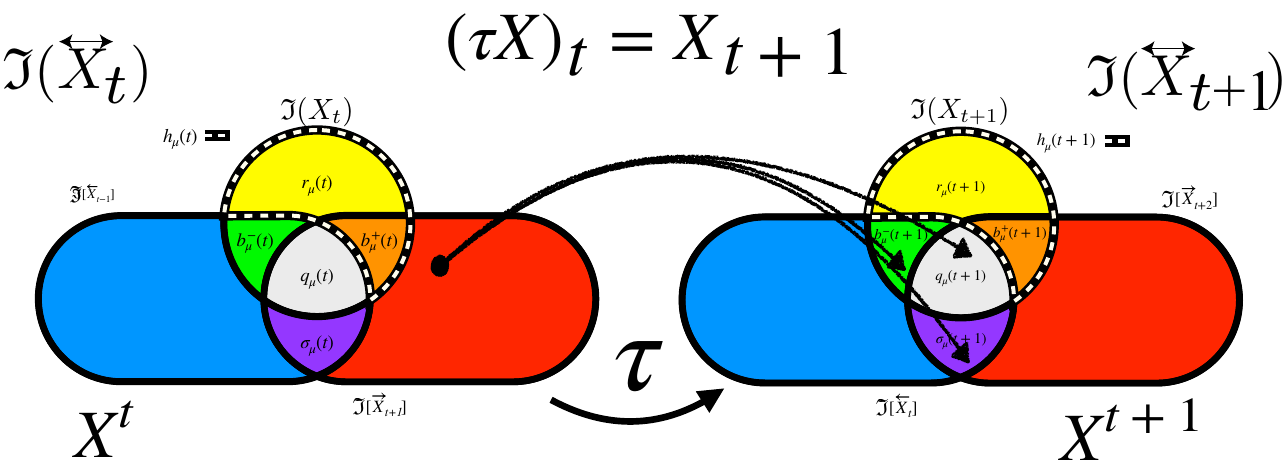}
\caption{Information processes $\SelfI[A|\overline{A}]$ when shift operator
	$\tau$ (Eq. \eqref{eq:ShiftOperator}) acts on stationary, ergodic
	stochastic process $\{ \MeasSymbol_t \}_{t \in \mathbb{Z}}$ for atoms $A$
	in Table \ref{tab:InfoAtomsProcess}. Temporal information diagrams
	representing how the informational relationships between the stochastic
	process indexed at $t$ evolve into those indexed to $t+1$:
	$\MeasSymbol^{t+1} = \shiftOperator(\MeasSymbol^{t})$---how
	the future $\Future_t$, the present $\MeasSymbol_t$, and the past
	$\Past_{t-1}$ evolve over time to the next future $\Future_{t+2}$, present
	$\MeasSymbol_{t+1}$, and past $\Past_{t}$. (Left) I-diagram at time $t$,
	formally denoted $\SelfI(\BiInfinity_t)$; (Right) I-diagram at time $t+1$,
	formally denoted $\SelfI(\BiInfinity_{t+1})$. The i-diagrams are labeled
	with the time $t$ and $t+1$ ephemeral informations $\rmu(t)$ and
	$\rmu(t+1)$, the binding informations $\bmu^\pm (t)$ and $\bmu^\pm (t+1)$,
	the enigmatic informations $\qmu(t)$ and $\qmu(t+1)$, the elusive
	informations $\sigmu(t)$ and $\sigmu(t+1)$. The Shannon entropy rates
	$\hmu(t) = \rmu(t) + \bmu(t)$ and $\hmu(t+1) = \rmu(t+1) +
	\bmu(t+1)$---composite information atoms---are outlined with a white-dashed
	line.
	}
\label{fig:iDiagramProcess}
\end{figure*}

One might expect increasing difficulty when moving from one or several isolated
random variables to a stochastic process of semi-infinite random-variable
chains. However and, at least initially, profiling a process' information atoms
in terms of its past $\PastSmashed_t$, present $\MeasSymbol_t$, and future
$\FutureSmashed_t$ requires little more information-theoretic set-up than
that already given; see also Ref. \cite{Jurg25a}.


Given these three random variables in play at each time $t$,
there is a set of eight quantities---\emph{information atoms}. These are shown
in information-diagram form in \cref{fig:iDiagramProcess}(left) and 
are collected in \cref{tab:InfoAtomsProcess}. The related aggregate measures,
including $\hmu$ already described, are also listed. Each is interpreted in
more detail shortly.

\begin{table}[]
	\setlength{\tabcolsep}{0pt}
    \begin{tabular}{ccc}
    \toprule
    \multicolumn{3}{c}{Information Atoms} \\
	$~~A$ & $\overline{A}$ & $\SelfI[A|\overline{A}]$ \\
    \midrule
    $\Big\{ \MeasSymbol_t \Big\}$ & $\Big\{ \Past_t, \Future_t \Big\}$
	& $\rmu(t)$ \\
    $\left\{ \Past_t: \MeasSymbol_t \right\} $ & $\Big\{ \Future_t
    \Big\}$ & $\bmureverse(t)$ \\
    $\left\{ \Past_t: \Future_t \right\}$ & $\Big\{ \MeasSymbol_t
    \Big\}$ & $\sigmu(t)$\\
    $\left\{ \MeasSymbol_t: \Future_t \right\}$ & $\Big\{ \Past_t
    \Big\}$ & $\bmuforward(t)$\\
    $\left\{ \Past_t: \MeasSymbol_t: \Future_t \right\}$ & &
    $\qmu(t)$\\
    \toprule
    \multicolumn{3}{c}{Composite Information Measures} \\
	$A$ & $\overline{A}$ & $\SelfI[A|\overline{A}]$ or $\SelfI[A:\overline{A}]$
	 ~(*)\\
    \midrule
    $\left\{ \MeasSymbol_t \right\}$ &
    $\left\{\Past_t\right\}$ & $\hmu^+(t) = \rmu(t) + \bmuforward(t)$ \\
    $\left\{ \MeasSymbol_t \right\}$ &
    $\left\{\Future_t\right\}$ & $\hmu^-(t)  = \rmu(t) + \bmureverse(t)$ \\
    $\left\{ \MeasSymbol_t \right\}$ &
    $\left\{\Past_t\right\}$ & $\rhomu^+(t) = \qmu(t) + \bmureverse(t)$ ~(*) \\
    $\left\{ \MeasSymbol_t \right\}$ &
    $\left\{\Future_t\right\}$ & $\rhomu^-(t)  = \qmu(t) + \bmuforward(t)$ ~(*) \\
    $\left\{ \Past_t:\Future_t \right\}$ & & $\EE(t) = \bmuforward(t) +
	\sigmu(t) + \qmu(t) ~(*) $ \\
    & & $\EE(t) = \bmureverse(t) + \sigmu(t) + \qmu(t)$ ~(*) \\
    \bottomrule
\end{tabular}
\caption{(Top) Generator set (temporal information atoms) 
	$\SelfI[A|\overline{A}]$ for the set of past, present, and future
	random variables $\RVSet = \left\{ \Past_t, \MeasSymbol_t, \Future_t
	\right\}$. Compare the list of $\SelfI[A|\overline{A}]$ to the areas
	of the information diagram depicted in \cref{fig:iDiagramProcess}(left).
	(After Ref. \cite{Jurg25a}.)
	Note that the forward and reverse entropy rates---$\hmu^+(t)$ and
	$\hmu^+(t)$, respectively---are equal for stationary processes.
	(Bottom) Composite information measures $\hmu^+$,
	$\hmu^-$, and $\EE$; see Refs. \cite{Jame11a,Jame13a,Ara14a}.
	}
\label{tab:InfoAtomsProcess}
\end{table}

However, we glossed over a rather important point---that the information
measures listed above are nominally functions over joint distributions of
infinite, composite variables; specifically, semi-infinite pasts and futures
come into play. And, this presents several technical issues to address.

First, note that the entropy of a sequence of infinitely many random variables
comprising a stationary stochastic process will either diverge or be finite. To
see this, consider the block entropy $\SelfII{\MS{0}{\ell}}$ as a function of
$\ell$. It diverges when the process has positive conditional entropy
$\SelfIcond{\MeasSymbol_0} {\Past_1}$. In contrast, when the conditional
entropy is zero for each measurement, which occurs for periodic signals, the
past and the future are exactly predictable from any measurement and so
$\SelfII{\MS{0}{\infty}}$ contains finite information. These observations apply
to when working with informational quantities over RV blocks.

(Notational reminder: The kind of information---joint, conditional, or
mutual---is conveyed by the arguments to $\SelfII{\cdot}{}$. Thus, the quantity
first listed in the previous paragraph is a joint entropy over the
length-$\ell$ RV block $\MS{0}{\ell}$; whereas the second is a conditional
entropy $\SelfIcond{X}{Y}$. When the vertical bar there is replaced with a colon
one has a mutual information $\SelfImut{X}{Y}$. In addition, when we are
only interested in block length, we do not index off of time $t$, but set $t =
0$ which is allowed by stationarity, and then use $\ell$ to denote block length for a block located at any time $t$.)

Second, we assume that all other information atoms are finite. To show this,
define the \emph{excess entropy} or the total amount of information shared
between the past and the future of the process:
\begin{align}
    \ExcessEntropy = \SelfImut{\Past}{\Future} ~, 
\label{eq:excessEntropy}
\end{align}
where in this case the present $\Present$ is taken as the start of the future
$\Future$. Note that this quantity is equivalent to the sum of the last three
information atoms; see the last two rows in the Table
\ref{tab:InfoAtomsProcess}.

There is a corresponding instantaneous excess entropy at each moment:
\begin{align}
    \ExcessEntropy(t) = \SelfImut{\Past_t}{\Future_t} ~, 
\label{eq:OnlineExcessEntropy}
\end{align}
Note that for stationary, ergodic processes $\ExcessEntropy = \langle
\ExcessEntropy(t) \rangle_t$---the temporal average.

And, it has a corresponding self-information:
\begin{align*}
\ExcessEntropy(\BiInfinity_t)
  = -\log_2 \left( \frac{\Pr(\Past_t,\Future_t)}
  {\Pr(\Past_t) \Pr(\Future_t)}
  \right)
  ~`
\end{align*}
Again, this self-information when setting block delay $\tau = 0$ and time-averaged inside Eq.  \eqref{eq:MutualSelfInformation} becomes the familiar averaged self-excess entropy---the average past-future mutual information
Eq. \eqref{eq:excessEntropy}.

Processes with finite excess entropy $\ExcessEntropy$ (Eq.
\eqref{eq:excessEntropy}) are called \emph{finitary processes}. We will assume
here that all processes under consideration are finitary, and therefore all
other information atoms of the process are also finite, except for the two over
semi-infinite pasts and futures. We can see this by noting the information in
the present $\MeasSymbol_t$ must be finite, as it is a single measurement and
is bounded by $\log_2 | \alphabet |$. So, any information quantity containing
$\MeasSymbol_t$ must be finite---all atoms, except semi-infinite groups, must
be less than or equal to the excess entropy.

Third, each of the finite atomic information quantities may be expressed as the
asymptotic growth rate of a block information quantity
\cite{Jame11a}.
That is to say, the growth rate of an information quantity taken over words of
length $\ell$ as $\ell$ goes to infinity.

For example, consider the total amount of information contained in words of
length $\ell$. As $\ell$ increases, this block entropy scales as:
\begin{align*}
    \SelfII{\MeasSymbol_{0 : \ell }} \sim \ExcessEntropy +
    \ell \hmu , \text{ as } \ell \to \infty,
\end{align*}
where $\hmu$ is the \emph{entropy rate} and is defined:
\begin{align}
    \hmu = \lim_{\ell \to \infty} \frac{
	\SelfII{\MeasSymbol_{0 : \ell }}
	}{\ell} ~.
\label{eq:BlockShannonEntropyRate}
\end{align}
In some applications, this form is also referred to as a \emph{density}.

It can be shown that this asymptotic quantity is equivalent to the entropy in
the present conditioned on the past:
\begin{align}
    \hmu & = \lim_{\ell \to \infty}
	\SelfIcond{\MeasSymbol_0}{ \MeasSymbol_{-\ell:0} }  \nonumber  \\
    & = \SelfIcond{\Present}{\Past}
	~.
\label{eq:ShannonEntropyRate}
\end{align}
This expression gives an operational view of the information measure---the
entropy rate is the amount of new information learned upon a single new
observation of the process. This also allows us to identify the quantity on the
process i-diagram in \cref{fig:iDiagramProcess}(left).

There is also the time-local version:
\begin{align*}
\hmu(t) = \SelfIcond{\MeasSymbol_t}{\Past_t}
\end{align*}
and a self-information version:
\begin{align*}
\hmu(t,\meassymbol) = \selficond{\MeasSymbol_t = \meassymbol}{\Past_t}
   ~.
\end{align*}

The \emph{anticipated information} $\rhomu$ is the complement of the new
information $\hmu$---future information that can be predicted from the past:
\begin{align}
    \rhomu & = \SelfII{\MeasSymbol_0} - \hmu \\
    & = \SelfImut{\Present}{\Past}
    ~.
\label{eq:AnticipatedInfo}
\end{align}

And, there is the time-local version:
\begin{align*}
\rhomu(t) = \SelfImut{\MeasSymbol_t}{\Past_t}
\end{align*}
and a self-information version:
\begin{align*}
\rhomu(t,\meassymbol) = \selfimut{\MeasSymbol_t = \meassymbol}{\Past_t}
   ~.
\end{align*}

Depending on the semi-infinite past or future, the entropy rate and excess
entropy are global measures of the information contained in a process.
Specifically, $\hmu$ is the rate of information created and $\ExcessEntropy$ is
that portion of created information communicated from the past to the future.
The entropy rate is useful because of its immediate interpretability. And, also
as it can be shown to be equivalent to the Kolmogorov-Sinai entropy, a
dynamical invariant of the underlying system (environment).

\subsection{General Atoms}

We can identify other atomic self-information densities for the information
atoms given in Table \ref{tab:InfoAtomsProcess}. These have proved themselves
to be useful. We will describe these quantities and give their self-information
versions and their intuitive meanings here. They play a key role in
interpreting information processing in Sec. \ref{sec:IProcessExamples}'s
example processes. The interested reader should consult Ref.  \cite{Jame11a}
for a deeper discussion. 

\emph{Ephemeral information rate $\rmu$} That information localized to an
isolated variable, but not correlated to its peers. In other words, the rate
can be interpreted as the average information in a single measurement of a
process undetermined by any temporal correlation:
\begin{align*}
  \rmu(t) =  \SelfIcond{\MeasSymbol_t}{ \Past_t,  \Future_t } 
  ~.
\label{eq:ephemeralInfo}
\end{align*}
The self-information version is:
\begin{align*}
  \rmu(t,\meassymbol) =  \selficond{\MeasSymbol_t = x}{ \Past_t,  \Future_t } 
  ~.
\end{align*}

\paragraph*{Binding rate $\bmu$} The rate of increase of the total information
in a block minus the ephemeral entropy. It is also the rate at which a process
stores created information. There are two equivalent quantities, forward
binding rate $\bmuforward$ and reverse binding rate $\bmureverse$. For
stationary processes we always have $\bmuforward = \bmureverse$. The forward
and reverse binding rates can be interpreted as how correlated any given
measurement of a process is with the future and the past, respectively:
\begin{align}
	\bmuforward(t) & = \SelfIcond {\MeasSymbol_t,\Future_t} {\Past_t} ~\text{and} \\ 
	\bmureverse(t) & = \SelfIcond {\MeasSymbol_t,\Past_t} {\Future_t}  
  ~.
\label{eq:bindingInfo}
\end{align}
The two self-information versions are:
\begin{align*}
	\bmuforward(t,\meassymbol) & = \selficond {\MeasSymbol_t = \meassymbol,
	\Future_t }{\Past_t}  ~\text{and}\\ 
	\bmureverse(t,\meassymbol) & = \selficond {\MeasSymbol_t = \meassymbol,
	\Past_t }{\Future_t}  
  ~.
\end{align*}

\paragraph*{Enigmatic rate $\qmu$} The interaction information between any given
measurement of a process and the infinite past and future:
\begin{align}
	\qmu(t) = \SelfI \left[ \MeasSymbol_t ; \Past_t ; \Future_t \right]  
  ~.
\label{eq:EnigmaticInfo}
\end{align}
As this is a multivariate mutual information, it can be negative. The
self-information version is:
\begin{align*}
	\qmu(t,\meassymbol) = \selfi \left[ \MeasSymbol_t = \meassymbol;
	\Past_t ; \Future_t \right]  
  ~.
\end{align*}

\paragraph*{Elusive information $\sigmu$} The amount of information shared
between the past and future that is not communicated through the present. The
elusive information is also sometimes called \emph{state information}, as it
represents the amount of information the observer needs to access beyond
measurement to build a predictive model of the system:
\begin{align}
	\sigmu(t) = \SelfIcond { \Past_t : \Future_t}{ \MeasSymbol_t } 
  ~.
\label{eq:ElusiveInfo}
\end{align}
The self-information version is:
\begin{align*}
	\sigmu(t,\meassymbol) = \selficond { \Past_t , \Future_t}{ \MeasSymbol_t =
	\meassymbol } 
	~.
\end{align*}

\paragraph*{Interpretations} There are a number of useful relationships. Let's
mention just one and recommend the references above for fuller discussion.

The ensemble-averaged forward binding information and ephemeral information
taken together are equal to the ensemble-average entropy rate at each time $t$:
\begin{align*}
  \hmu(t) = \rmu(t) + \bmuforward(t) ~.
\end{align*}
This decomposes the newly-created information per measurement into a part
correlated with the future and a part only correlated with the current
measurement and ``forgotten'' in the next time step. To take one example,
$\bmu(t)$ is the instantaneous rate of storing information internally for use
at some future time.

The ensemble-averaged reverse binding information and enigmatic information
taken together equal the ensemble-average forward anticipation rate at each time $t$:
\begin{align*}
  \rhomu^+(t) = \qmu(t) + \bmureverse(t) ~.
\end{align*}
This decomposes the predictable information per measurement into a part
correlated with the past and a part only correlated with the current
measurement. Similarly for:
\begin{align*}
  \rhomu^-(t) = \qmu(t) + \bmuforward(t) ~.
\end{align*}

These informations are depicted in visual form by the i-diagram in
\cref{fig:iDiagramProcess}(left) and explicitly (minus the ``endcap'' atoms) in
\cref{tab:InfoAtomsProcess}. In principle, it is possible to profile any 
process (or environment) using these atoms to give a full picture of its
informational structure.

\subsection{Dynamic Information Processes}
\label{Sec:DynamicIProcesses}

Implicit in detailing the time-dependence of information atoms is an agent
that moment-by-moment interprets an observation of its environment as
containing useless or useful information---used in predictions, decisions, or
actions. Such time-dependent information measures are quite common---as Sec.
\ref{sec:Background} noted---although perhaps not always called out as such
dynamic quantities.

There are even pitfalls in interpretations. For example, the \emph{transfer
entropy} \cite{Schr00a}---a time-dependent measure of mutual information within
a sliding window, was introduced to detect causal interactions in dynamical
systems. The causation entropy \cite{Sun14a}, also time dependent, was then
introduced to address several of its weakness in conflating conditional
dependence and independence \cite{Jame15a}. Despite such concerns, these
causal-detection measures have come to be widely used in the empirical sciences
for structural inferences. This, at least, attests to the need for such
statistics.

Taken altogether, though, the foregoing lays out the main setting in which such
time-dependent statistics operate. With this, it is now time to be more
explicit about the agent and its functioning.

To set this up, though, we must first address the time evolution of information
measures, which has only been indirectly specified thus far via the shift
operator $\tau$ of Eq. \eqref{eq:ShiftOperator}. Figure
\ref{fig:iDiagramProcess} makes this explicit by contrasting the i-diagrams of
measures at time $t$ (left) and those at the next moment $t+1$
(right)---that is, measures for processes $\MeasSymbol^t$ and
$\MeasSymbol^{t+1}$, respectively.

At time $t$, we have measures of the past, present, and future; viz., $\SelfII{
\Past_{t-1} }$, $\SelfII{\MeasSymbol_t}$, and $\SelfII{\Future_{t+1}}$,
accompanied by the atoms $\bmureverse (t)$, $\rmu(t)$, $\bmuforward(t)$,
$\qmu(t)$, and $\sigmu(t)$. At the next time, we have $\SelfII{ \Past_t }$,
$\SelfII{\MeasSymbol_{t+1}}$, and $\SelfII{\Future_{t+2}}$ and their companions
$\bmureverse (t+1)$, $\rmu(t+1)$, $\bmuforward(t+1)$, $\qmu(t+1)$, and
$\sigmu(t+1)$.

At each time step, $\tau$ transforms the measures at $t$ into those at time
$t+1$, with the former typically splitting and being shared across the latter.
One such atomic transformation is depicted: $\SelfII{\Future_{t+1}}$ maps to
$\bmu^- (t+1)$, $\qmu(t+1)$, and $\sigmu(t+1)$. That is, the time-$t$ forward
block entropy splits and contributes to the time $t+1$ reverse bound
information, enigmatic information, and elusive information.

Reference \cite{Crut24a} provides fuller details and interpretations in other
cases. All in all, though, this gives a view of the dynamic operation of an
information process---extracting, storing, and processing various kinds of
information. Section \ref{sec:IProcessExamples} gives an even more explicit
view of dynamical information processing in when analyzing the information
process time series generated by several examples.

As we are working in the setting of temporal or pointwise information measures, a final cautionary remark is in order. While motivated by elementary
information theory for a given set of random variables, there are important
interpretational differences for information processes.

Concretely, the elementary information theory of two random variables leads
to informations that are positive---entropy, entropy rate, information gain,
mutual information, and so on. It is well-known, however, that when applied to
three random variables the mutual information can become negative. This
corresponds to the previously-mentioned phenomena of conditional dependence
versus conditional independence induced between two random variables by a third
\cite[Prob. 2.25]{Cove06a}. The key point here is that a similar deviation from
intuitions derived from elementary information theory occurs for pointwise or
temporal information measures.

To be specific consider the information processes consisting of residual
information $\rhomu(t)$ and bound information $\bmu(t)$. As conditional mutual
informations they can be negative in the pointwise setting. For $\rhomu(t)$,
which is a simple mutual information, negativity appears when its associated
self-information argument $\Pr\left(\Past(t), \MeasSymbol_t\right) /
\Pr\left(\Past(t)\right) \Pr\left(\MeasSymbol_t\right) < 1$ (or when
$\Pr\left(\Past(t), \MeasSymbol_t\right) < \Pr\left(\Past(t)\right)
\Pr\left(\MeasSymbol_t\right)$). This occurs in the pointwise setting when the
two observations---$\MeasSymbol_t$ and $\Past(t)$---occur less often than one
would expect if the process were independent, identically distributed.
Whereas, $\rhomu(t) > 0$ implies the past and present occur more frequently
than one expects if they were independent. Section \ref{sec:IProcessExamples}
provides numerous examples of this and the negativity of other temporal
information measures that one would, without warning, assume positive. The
practical result is a need to retool intuitions from elementary information
theory when interpreting temporal informations.

\section{Cognitive Agents}
\label{sec:Agent}

The following introduces a more explicit notion of (i) an agent interacting with
(ii) an environment whose behavior the agent monitors by making sequential
observations, interpreting the latter to, perhaps, make decisions in the
service of future interactions.

We refer to the sequential measurement and interpretation as \emph{cognition}
and the observers as \emph{cognitive agents}. Granted this invokes a rather
literal notion of cognition---one that falls short of that in animals say.
That said, the goal in the following is to lay out the basic dynamical and
informational foundations for cognitive agents of any stripe. The result of
this is to demonstrate how a cognitive agent is a transducer that maps an input
stream of observations to other informational variables specified by an agent's
organization and design---those environmental patterns to which it is
sensitive. To do this, we frame the notion of a cognitive agent in terms of
computational mechanics---the information theory of structured stochastic
processes.


Moreover, there is a practical motivation. Working with processes, as we have
above---nominally, infinite sets of infinite sequences and their
probabilities---is cumbersome, at best. Typically, we do not care to estimate
informations over distributions of infinite pasts and futures. Nor are agents
infinitely resourced for doing so.

So, we turn to finitely-specified representations. That is, we turn to models
specified by \emph{computational mechanics}---called \eMs and \eTs---which have
especially desirable properties \cite{Shal98a,Barn13a} for determining and
estimating information processing, both asymptotic properties and, as we show
now, moment-by-moment, online statistics. Through their optimal representations
of stochastic processes and communication channels, they allow us and cognitive
agents, for that matter, to work with finite objects rather than semi-infinite
sets and sequences. Helpfully, this often leads to exact, closed-form
expressions for many properties of interest \cite{Crut13a}. Again, the
following assumes that a cognitive agent's internal model, unless otherwise
stated, effectively employs its minimal optimal predictor---the observed
process' \eM.


\subsection{Predictive States}
\label{sec:Predictive States}

We formalize a prediction as a distribution $\Pr(\Future|\past)$ over futures
$\{\Future\}$ with knowledge of a given past $\past$. We wish to construct a
minimal model that produces optimal predictions for a stochastic process
$\Process$. Computational mechanics \cite{Shal98a} solved this problem in the
form of the \emph{\eM}---a model whose states are the classes defined by an
equivalence relation $\past \sim \past '$ that groups all pasts giving rise to
the same prediction $\Pr(\Future|\past)$. These classes are called the
\emph{causal states}.

Computational mechanics also provides an analogous solution for optimal
predictions of one process, call it $X$, about another, call it $Y$
\cite{Barn13a}. The minimal, optimal model for the conditional input-output
process or communication channel---denoted $Y|X$---is called an \eT. Since
the following makes minimal calls on the latter, here we concentrate on
\eMs. Sequels rely more heavily on transducers.

\begin{definition}
\label{Def:CausalStates}
The \emph{causal states of a process} are the members of the range of the
function:
\begin{align*}
  \EquiFunction{\past} = 
  \Big\{ \pastprime \mid  & 
  \Pr \left( \Future = \future | \Past = \past \right)  \\
  = & \Pr \left( \Future =  \future | \Past = \pastprime \right) \\
  & \text{for all} \:\: \past \in \Past, \pastprime \in \Past \Big\}
\end{align*}
that maps from pasts to sets of pasts. The set of causal states is denoted
$\CausalStateSet$, with corresponding random variable $\CausalState$ and
realizations $\causalstate \in \CausalStateSet$. 
\end{definition}

The causal states can be both empirically and measure-theoretically grounded.
First, from the (past,future) pairs in a realization $\past \future$ construct
an empirical conditional distribution:
\begin{align}
  \widehat{P}_{x_{1}\dots x_{n}|x_{-k+1}\dots x_0} =
  \frac{C_{x_{-k+1}\dots x_{n}}}{C_{x_{-k+1}\dots x_0}}
  ~,
\label{eq:Likelihood}
\end{align}
where $C_{w}$ is the number of times the word $w$ appears in the sequence $x_1
\dots x_L$. Given sufficient data, it would be desirable to take pasts of
arbitrary length and converge towards a prediction conditioned on the
\emph{infinite} past:
\begin{align} 
  {P}_{x_{1}\dots x_{n}|\overleftarrow{x}}
  = \lim_{k\rightarrow\infty}
  \widehat{P}_{x_{1}\dots x_{n}|x_{-k}\dots x_0}
\label{eq:convergence}
\end{align}
with the infinite sequence $\overleftarrow{x} = (\dots,x_{-1},x_0)$ of
observations stretching into the past.

Second, formally, the conditional predictions ${P}_{x_{1}\dots
x_{n}|\overleftarrow{x}}$ for all forecast lengths $n$ together describe a
\emph{probability measure} over future sequences $\overrightarrow{x} =
(x_1,x_2,\dots)$.  In short, a causal state is a measure over future sequences
conditioned on past sequences: $\Pr_\mu(\Future|\Past = \ldots \meassymbol_{-2}
\meassymbol_{-1} \meassymbol_0 )$. We denote it simply $P_{\overleftarrow{x}}$.

Reference \cite{Loom21b}
established a number of useful convergence and
topological properties of \eM casual states, including:
\begin{enumerate}
      \setlength{\topsep}{-4pt}
      \setlength{\itemsep}{-4pt}
      \setlength{\parsep}{-4pt}
\item In the language of measures, as an agent collects more observations,
	$P_{\overleftarrow{x}}$ converges \emph{in distribution}, with respect
	to the \emph{product topology} of the space of sequences
	$\mathcal{X}^\mathbb{N}$ \cite{Loom21b}.
\item This also holds for convergence over word distributions
	$\Pr(\meassymbol_{1:n})$.
\item A conditional measure is a ratio of likelihoods, as in Eq.
	(\ref{eq:Likelihood}).
\item For all measures $\mu$ on $\mathcal{X}^\mathbb{Z}$ and $\mu$-almost every
	past $\Past \in \mathcal{X}^\mathbb{Z}$, the measures $\eta_\ell[\Past]$
	defined by:
\begin{align*}
\eta_\ell[\Past](U_{0,\omega})
	= \Pr_\mu (\omega | \meassymbol_{-\ell+1} \ldots \meassymbol_0)
\end{align*}
	converge in distribution to the measure $\epsilon[\Past]$ : $\eta_\ell[\Past] \to \epsilon[\Past]$, as $\ell \to \infty$.
\item $\epsilon[\Past]$ is the \emph{predictive state} of $\Past$ and the
	function that maps to measures over future sequences---$\epsilon[\Past] \to
	\mathbb{M} (\mathcal{X}^\mathbb{Z})$---is the \emph{prediction mapping}.
\end{enumerate}

Anticipating future uses, the \emph{causal states of an input-output
process}---or \emph{communication channel}---are similarly defined by
equivalence classes of channel input sequences and output sequences. The
corresponding mapping from an input process to an output process is called the
\emph{$\epsilon$-transducer}. The following only considers \eMs for processes,
leaving the \eT development for channels to a sequel in which they play a
pivotal role.

In this way, causal states partition the space of all pasts into sets that are
predictively equivalent. The causal state set $\CausalStateSet$ may be finite,
fractal, or continuous, depending on $\Process$'s properties
\cite{Crut91b}.

The following focuses on processes with finite causal-state sets: $|
\CausalStateSet | < \infty$. Sequels remove this restriction.

\subsection{Causal-State Processes}
\label{sec:CSProcess}

Given a stochastic process $\Process$, consider a realization:
\begin{align*}
\ldots, \msym_0, \msym_1, \msym_2, \ldots
  ~.  
\end{align*}
From this, form the associated time series of semi-infinite histories:
\begin{align*}
\ldots, \past_0, \past_1, \past_2, \ldots
  ~,
\end{align*}
where $\past_t = \ldots \msym_{t-2}, \msym_{t-1}, \msym_t$.

Now, we use \emph{causal-state filtering} to obtain the causal-state
realization---mapping from a time series of observations to a process' causal
states:
\begin{align*}
& \ldots, \epsilon(\past_0), \epsilon(\past_1), \epsilon(\past_2), \ldots \\
& \ldots, \causalstate_0, \causalstate_1, \causalstate_2, \ldots
  ~;
\end{align*}
that is, $\causalstate_t = \epsilon(\past_t)$.

We define two kinds of causal-state behavior of interest:
\begin{itemize}
      \setlength{\topsep}{-4pt}
      \setlength{\itemsep}{-4pt}
      \setlength{\parsep}{-4pt}
\item The \emph{causal-state process} $\CausalState = \{\CausalState_t: t \in
	\mathbb{Z}\}$ is the \emph{temporal sequence of causal states} $\CausalState_t
	= \epsilon[\tau^t \past]$; and
\item \emph{Recurrent causal-state process} is that series of states after the
	agent is synchronized to the environment.
\end{itemize}
We work with the latter, though sequels lay out several of the challenges of
working with the transient, nonstationary statistics of the former.

\subsection{\texorpdfstring{$\epsilon$}{e}-Machines}
\label{sec:eMachines}

The dynamic over the casual states is inherited from the shift operator
$\tau$ on the process. State-to-state transitions occur when observing a new
symbol $\meassymbol$, which is appended to the observed history: $\past \to
\past \meassymbol$. The causal state transition is therefore from
$\EquiFunction{\past} = \causalstate_i \to \EquiFunction{\past \meassymbol} =
\causalstate_j$, and occurs with probability $\Pr \left( \meassymbol =
\meassymbol \mid \CausalState = \causalstate_i \right)$. 

\EMs are guaranteed to be optimally predictive because knowledge of what
causal state a process is in at any time is equivalent to knowledge of the
entire past: $\Pr \left( \FutureSmashed \mid \CausalState \right) = \Pr \left(
\FutureSmashed \mid \PastSmashed \right)$. They are also Markovian in that they
render the past and future statistically independent: $\Pr \left( \PastSmashed,
\FutureSmashed \mid \CausalState \right) = \Pr \left( \PastSmashed \mid
\CausalState \right) \Pr \left( \FutureSmashed \mid \CausalState \right)$. We
call these properties together \emph{causal shielding}. \EMs also have a
property called \emph{unifilarity}, which means that knowledge of the current
causal state and the next symbol is sufficient to determine the next
state. That is to say, $\SelfIcond{\CausalState_{t+1}}{\MeasSymbol_t,
\CausalState_t} = 0$.

\begin{definition}
\label{def:eMachine}
The \emph{\eM} $\eMachine$ of a finitary process consists of:
    \begin{enumerate}
      \setlength{\topsep}{-4pt}
      \setlength{\itemsep}{-4pt}
      \setlength{\parsep}{-4pt}
        \item Finite \emph{alphabet} $\alphabet$ of $\numSyms$ symbols
			$\meassymbol \in \alphabet$;
        \item \emph{Casual state} set $\CausalStateSet = \left\{ \causalstate_1,
        	\causalstate_2, \dots \right\}$ that consists of transient states
			whose probabilities vanish and \emph{recurrent} states
			$\tilde{\causalstate}$ whose probability converges to a constant
			$\Pr(\tilde{\causalstate}) > 0$. And;
        \item \emph{Causal dynamic}---set of $\numSyms$ (possibly infinite
		dimension)
			symbol-labeled transition matrices $T^{(\meassymbol)}$, $\meassymbol \in
			\alphabet$: $T^{(\meassymbol)}_{ij} = \Pr \left( \causalstate_j,
			\meassymbol \mid \causalstate_i \right)$.
    \end{enumerate}
\end{definition}

This defines its own class of optimal representations for a wide range of
stochastic processes. That said, we can draw several parallels with existing
classes of process representations. The definition here identifies an \eM as a
\emph{hidden Markov model} (HMM). Not all HMMs are \eMs. However, the \eMs here
are. An \eM may be graphically shown as an HMM with a directed graph
where the causal states are depicted by vertices and transitions between them by
directed edges labeled with the symbol emitted on transition followed by the
probability of transition; e.g., $\meassymbol : \Pr\left(\meassymbol \right)$.
The time indexing is as follows: if at time $t$, an \eM is in state
$\CausalState_t$, it emits symbol $\meassymbol_t$ and transitions forward to
the next state $\CausalState_{t+1}$. Notice that due to unifilarity, there is
at most one transition from each causal state per symbol. 

\subsection{Agent-Environment Synchronization}
\label{sec:AESync}

\begin{proposition}
Given the causal state at time $t-1$, the causal state at time $t$ is
independent of the causal states at earlier times $t < t-1$.
\end{proposition}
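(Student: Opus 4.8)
The statement is exactly the \emph{Markov property} of the causal-state process $\{\CausalState_t\}$: that $\CausalState_t$ is conditionally independent of the earlier states $\CausalState_{t-2}, \CausalState_{t-3}, \ldots$ given $\CausalState_{t-1}$, i.e. $\Pr(\CausalState_t \mid \CausalState_{t-1}, \CausalState_{t-2}, \ldots) = \Pr(\CausalState_t \mid \CausalState_{t-1})$. The plan is to reduce this to the two structural properties of an \eM established above---\emph{unifilarity} and \emph{causal shielding}. Recall from the filtering construction that $\CausalState_s = \EquiFunction{\Past_s}$ and that a transition occurs by appending the newest observation, passing from $\CausalState_{t-1} = \EquiFunction{\Past_{t-1}}$ to $\CausalState_t = \EquiFunction{\Past_{t-1}\meassymbol_t}$, where $\meassymbol_t$ is the observation driving the transition. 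Unifilarity states precisely that $\CausalState_t$ is a \emph{deterministic function} of the pair $(\CausalState_{t-1}, \MeasSymbol_t)$; write $\CausalState_t = f(\CausalState_{t-1}, \MeasSymbol_t)$.

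First I would use this functional dependence to convert the claim about states into a claim about the driving observation. Because $\CausalState_t = f(\CausalState_{t-1}, \MeasSymbol_t)$, for any target state $\causalstate_j$ one has
\begin{align*}
\Pr(\CausalState_t = \causalstate_j \mid \CausalState_{t-1}, \CausalState_{t-2}, \ldots)
  = \sum_{\meassymbol \,:\, f(\CausalState_{t-1}, \meassymbol) = \causalstate_j} \Pr(\MeasSymbol_t = \meassymbol \mid \CausalState_{t-1}, \CausalState_{t-2}, \ldots) ~.
\end{align*}
Hence it suffices to show that the distribution of the driving observation is insensitive to the earlier states once $\CausalState_{t-1}$ is fixed, that is, $\Pr(\MeasSymbol_t \mid \CausalState_{t-1}, \CausalState_{t-2}, \ldots) = \Pr(\MeasSymbol_t \mid \CausalState_{t-1})$.

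The heart of the argument is then a sufficiency step. Each earlier state $\CausalState_s = \EquiFunction{\Past_s}$ with $s \le t-2$ is a measurable function of $\Past_{t-1}$, since $\Past_s$ is recovered from $\Past_{t-1}$ by deleting its most recent symbols. By \cref{Def:CausalStates} (equivalently, by causal shielding $\Pr(\FutureSmashed \mid \CausalState) = \Pr(\FutureSmashed \mid \PastSmashed)$), all pasts in a common causal state assign the same conditional distribution to subsequent observations; specializing to the first such observation gives $\Pr(\MeasSymbol_t \mid \Past_{t-1}) = \Pr(\MeasSymbol_t \mid \CausalState_{t-1})$. I would then fix values $\causalstate$ and $\mathbf{s}$ of $\CausalState_{t-1}$ and of the earlier-state block and integrate $\Pr(\MeasSymbol_t = \meassymbol \mid \Past_{t-1} = \past)$ over the set of pasts $\past$ in the level set $\{\CausalState_{t-1} = \causalstate,\ (\CausalState_{t-2}, \ldots) = \mathbf{s}\}$. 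On this set every past maps to the same causal state $\causalstate$, so the integrand is the constant $\Pr(\MeasSymbol_t = \meassymbol \mid \CausalState_{t-1} = \causalstate)$; normalizing by the measure of the set yields $\Pr(\MeasSymbol_t = \meassymbol \mid \CausalState_{t-1} = \causalstate, (\CausalState_{t-2}, \ldots) = \mathbf{s}) = \Pr(\MeasSymbol_t = \meassymbol \mid \CausalState_{t-1} = \causalstate)$. Substituting back into the displayed sum delivers the Markov property.

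The main obstacle is the measure-theoretic bookkeeping over semi-infinite pasts: I would need to confirm that the earlier causal states are genuinely $\Past_{t-1}$-measurable and that the sufficiency property holds $\measure$-almost everywhere over pasts---facts resting on the convergence and measurability results for $\EquiFunction{\cdot}$ cited from Ref. \cite{Loom21b}---so that the level-set integration is legitimate. Unifilarity is likewise indispensable: without it $\CausalState_t$ would not be a function of $(\CausalState_{t-1}, \MeasSymbol_t)$ alone and the reduction in the first step would fail.
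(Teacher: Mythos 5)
Your proof is correct. Note, though, that the paper itself does not argue this proposition at all---its entire ``proof'' is a pointer to Lemma~6 (\EMs are Markovian) of Ref.~\cite{Shal98a}---so what you have written is not an alternative to the paper's argument but a self-contained reconstruction of the cited lemma, and it is essentially the standard one: unifilarity makes $\CausalState_t$ a deterministic function of $(\CausalState_{t-1},\MeasSymbol_t)$, which reduces the Markov property of the state chain to the sufficiency statement $\Pr(\MeasSymbol_t \mid \Past_{t-1}) = \Pr(\MeasSymbol_t \mid \CausalState_{t-1})$; since every earlier state $\CausalState_s$, $s \le t-2$, is a measurable function of $\Past_{t-1}$, conditioning on the earlier states refines the conditioning on $\CausalState_{t-1}$ only within a single equivalence class of pasts, on which the next-symbol distribution is constant by the defining property of $\EquiFunction{\cdot}$. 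Both ingredients you flag as indispensable are indeed the load-bearing ones, and your level-set averaging step is exactly how the sufficiency is made rigorous in the reference (modulo the measure-zero caveats you correctly note). The only cosmetic wrinkle is an indexing mismatch with the paper's stated convention, under which the symbol emitted on leaving $\CausalState_{t-1}$ is written $\meassymbol_{t-1}$ rather than $\meassymbol_t$; this does not affect the argument.
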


\begin{proof}
See Lemma 6 (\EMs are Markovian) in Ref. \cite{Shal98a}.
\end{proof}

That is, the causal-state process is order-$1$ Markov:
\begin{align*}
\Pr(\causalstate_t | \ldots \causalstate_{t-2} \causalstate_{t-1}) =
\Pr(\causalstate_t | \causalstate_{t-1})
  ~.
\end{align*}

\newcommand{\wcs}{\widetilde{\causalstate}}

Markovity induced by the causal states makes it particularly straightforward to
give the recurrent causal-state process directly in terms of its word
distributions:
\begin{align*}
\Pr(\wcs_{0:n}) = (\pi_0)_{\wcs_0} \prod_{t=0}^n T_{\wcs_t,\wcs_{t+1}}
  ~,
\end{align*}
where $\wcs_t$ is a recurrent causal state, $T$ is the state-transition
operator, and $(\pi_0)_{\wcs_0}$ is the initial causal state distribution. (The
tildes identify recurrent states.)

The simplicity of the casual-state process demonstrates concretely why the
causal states are so useful. For one, the summary they give of the past $\Past$
is sorely needed to make various potentially-infinite calculations tractable.

Leveraging this, we define \emph{agent-environment synchronization} when the
agent is fully tracking the environment. First, define a synchronizing word
$\widehat{w}$: $\Pr(\wcs|\widehat{w}) = 1$, for one $\wcs \in \CausalStateSet$,
and where $\widehat{w}$ is one of the words $\mathcal{L}(\Process)$ generated by the process: $\widehat{w} \in \mathcal{L}(\Process)$. The
informational definition of synchronization is that the agent exactly knows the
current state: $\SelfIcond{\causalstate_t}{\widehat{w}} = 0$.

\subsection{Agent Operation}
\label{sec:Operations}

There are two principle modes of operation for \eMs: Recognizing sequences of
environment behaviors versus generating environment control signals:
\begin{itemize}
      \setlength{\topsep}{-4pt}
      \setlength{\itemsep}{-4pt}
      \setlength{\parsep}{-4pt}
\item \emph{Recognition mode}: This addresses the question, is a given word $w
	\in \alphabet^*$ in the set (or language) recognized by the \eM? Is $w \in
	\mathcal{L}(M)$?
\item \emph{Generation mode}: The mode concerns what words $w \in \alphabet^*$
	are emitted by an \eM.
\end{itemize}
As noted below, these modes of operation come into play when an agent encounters
stimuli disallowed by its internal model. That is, when it resets its state in
response.

Before use, an \eM must be properly initialized. In either recognition or
generation mode, the procedure to initialize an \eM sets the initial
causal-state distribution: Set the current state to \eM's unique start state
$\CausalState = \causalstate_0$ and set the current state distribution $\pi_t$
to have unity probability on $\causalstate_0$: $\pi_0 = (1, 0, 0, \ldots)$.

Given an \eM, its asymptotic causal-state distribution $\widehat{\pi}$ is the
left-eigenvector of the causal-state transition operator $T = \sum_{\meassymbol
\in \alphabet} T^{(\meassymbol)}$:
\begin{align*}
\widehat{\pi} = \widehat{\pi} T
  ~,
\end{align*}
normalized in probability: $\sum_{\causalstate \in \CausalStateSet}
\Pr(\causalstate) = 1$.

In each mode, the \eM operates step-by-step, sequentially reading the symbols
in a word, to accept or reject the word as follows.
\begin{itemize}
\item \emph{Acceptance}: Reach the input word's last symbol while in recurrent causal state $\causalstate_t \in \CausalStateSet$.

Sequentially reading symbols from the input, update the current state to that
reached by transition labeled by $\msym_t$ and update current state
distribution:
\begin{align*}
\pi_{t+1} (\msym) = \frac{\pi_t T^{(\msym)} }{ \sum_\msym \pi_t T^{(\msym)} }
  ~.
\end{align*}
\item \emph{Rejection}: While reading symbols from the input the agent
encounters a disallowed symbol; that is, from the current state there is no
transition labeled with that symbol.

Then, reset the current state to the agent's unique start state
$\causalstate_0$ and reset the current-state distribution $\pi_t$ to unit
probability on the start state $\causalstate_0$:
\begin{align*}
\pi_{t+1} = (1, 0, 0, 0, \ldots)
  ~.
\end{align*}

For generation mode the \eM operates according to a companion \emph{emission}
procedure that simply follows the allowed transitions emitting the symbols
labeling each.
\end{itemize}

The probability of generating word $w = x_0 x_1 \ldots x_{\ell-1}$ given start
distribution $\pi_0$ is:
\begin{align}
\Pr(w) & = \pi_0 \prod_{i=0}^{\ell-1} T^{(x_i)} \nonumber \\
  & = \pi_0 T^{(\omega)}
  ~.
\label{eq:WordDistribution}
\end{align}
If an \eM starts with $\widehat{\pi}$, then $\pi_0 = (1, 0, ...)$. Recall that
the start state corresponds to the asymptotic state distribution; a condition
representing unbiased information about the environment's state.

\begin{proposition}
If an \eM used $\pi_0 = \widehat{\pi}$ to generate word $w$, then $\Pr(w)$ is
$w$'s asymptotic stationary probability.
\end{proposition}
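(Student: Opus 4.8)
The plan is to show that the row-vector product of Eq.~\eqref{eq:WordDistribution}, initialized at $\pi_0 = \widehat{\pi}$ and summed over terminal states, reproduces the stationary word probability $\Pr(w) = \mu(U_{0,w})$ of Eq.~\eqref{eq:FromMeasureToProb}. The argument rests on three ingredients already established: causal shielding, the order-$1$ Markovianity of the causal-state process, and the definition $T^{(\msym)}_{ij} = \Pr(\causalstate_j, \msym \mid \causalstate_i)$.

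First I would decompose the stationary probability of $w = \meassymbol_0 \meassymbol_1 \cdots \meassymbol_{\ell-1}$ by conditioning on the causal state $\causalstate_i = \epsilon(\Past_0)$ in force at the start of the word,
\begin{align*}
\Pr(w) = \sum_{\causalstate_i \in \CausalStateSet}
  \Pr(\causalstate_i)\,\Pr(w \mid \causalstate_i)
  ~.
\end{align*}
Causal shielding justifies this factorization: conditioned on the current causal state the emitted future is independent of the past, so $\Pr(w \mid \Past) = \Pr(w \mid \causalstate_i)$, and summing over $\causalstate_i$ partitions all pasts.

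Next I would identify the two factors. For the prior $\Pr(\causalstate_i)$, recall that the causal-state process is stationary (as a function of the stationary observable process) and order-$1$ Markov with one-step matrix $T_{ij} = \sum_{\msym} T^{(\msym)}_{ij} = \Pr(\causalstate_j \mid \causalstate_i)$; its stationary marginal is therefore the left eigenvector $\widehat{\pi} = \widehat{\pi} T$ defined above, so $\Pr(\causalstate_i) = \widehat{\pi}_i$. For the likelihood, telescoping the chain rule and using unifilarity (each state--symbol pair fixes at most one successor) gives
\begin{align*}
\Pr(w \mid \causalstate_i)
  = \Big[\Big(\textstyle\prod_{k=0}^{\ell-1} T^{(\meassymbol_k)}\Big)\,\One\Big]_i
  ~,
\end{align*}
since $T^{(\meassymbol_k)}_{j,j'} = \Pr(\causalstate_{j'}, \meassymbol_k \mid \causalstate_j)$ supplies the per-step emit-and-transition weight and $\One$ sums over the unobserved terminal state. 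Substituting both factors collapses the sum to $\Pr(w) = \widehat{\pi}\,\big(\prod_{k} T^{(\meassymbol_k)}\big)\,\One$, which is exactly Eq.~\eqref{eq:WordDistribution} with $\pi_0 = \widehat{\pi}$ (the scalar probability being the $\One$-contraction of the row vector written there).

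The main obstacle is bookkeeping the timing and securing the uniqueness of $\widehat{\pi}$, rather than any hard estimate. I would take care that $\causalstate_i$ summarizes the past strictly up to the symbol $\meassymbol_0$, so that conditioning yields the clean factorization above; and I would invoke that, for the ergodic finitary processes assumed here, the recurrent causal states form a single irreducible class, so the asymptotic distribution $\widehat{\pi}$ is unique and supported on recurrent states. This guarantees that the ``asymptotic stationary probability'' is well defined and equals the causal-state prior $\Pr(\causalstate_i)$, closing the argument.
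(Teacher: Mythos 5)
Your proof is correct and takes essentially the same route as the paper, which simply cites ``elementary Markov chain theory'' without elaboration: you have unpacked that citation by conditioning on the initial causal state, invoking causal shielding to justify the factorization, and identifying $\widehat{\pi}$ as the stationary marginal of the order-$1$ causal-state chain. Your explicit handling of the terminal-state contraction by $\One$ and the remark on uniqueness of $\widehat{\pi}$ for a single recurrent class are details the paper leaves implicit, but they do not change the argument.
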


\begin{proof}
By elementary Markov chain theory \cite{Fell70a}.
\end{proof}

\begin{definition}
If $M$ is an \eM, its \emph{output process} $\Process(M)$ consists of the word
distributions given by Eq. (\ref{eq:WordDistribution}) above.
\end{definition}

\begin{proposition}
If starting with $\widehat{\pi}$, an \eM's output process is stationary.
\end{proposition}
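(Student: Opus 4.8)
The plan is to work entirely at the level of finite word probabilities, since by \eqref{eq:FromMeasureToProb} these determine the measure, and then to verify the shift-invariance condition of \cref{Def:Stationary} directly. First I would fix the matrix bookkeeping. For a word $w = x_0 x_1 \cdots x_{\ell-1}$ write $T^{(w)} = T^{(x_0)} T^{(x_1)} \cdots T^{(x_{\ell-1})}$, so that the generation formula \eqref{eq:WordDistribution} reads $\Pr(w) = \widehat{\pi}\, T^{(w)}\, \One$, where the all-ones column vector $\One$ sums the residual state distribution to a scalar (this $\One$ is the normalization implicit in \eqref{eq:WordDistribution}). The one algebraic fact I need about the combined operator $T = \sum_{x \in \alphabet} T^{(x)}$ is that it is row-stochastic: summing $T^{(x)}_{ij} = \Pr(\causalstate_j, x \mid \causalstate_i)$ over both the next state $j$ and the emitted symbol $x$ gives unity, so $T \One = \One$. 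By hypothesis we also have $\widehat{\pi}\, T = \widehat{\pi}$.

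Second, I would compute the joint distribution of an arbitrary block $\MeasSymbol_{t:t+\ell}$ of the output process. Starting from $\widehat{\pi}$ at time $0$ and then reading $t$ unconstrained symbols amounts to marginalizing those symbols out of the state-propagation product; because $\sum_{x \in \alphabet} T^{(x)} = T$, each marginalized symbol contributes exactly one factor of $T$, so that
\begin{align*}
\Pr(\MeasSymbol_{t:t+\ell} = w) = \widehat{\pi}\, T^{t}\, T^{(w)}\, \One ~.
\end{align*}
(For $t = 0$ this reduces to the definition above.) The substance here is justifying the single replacement $\sum_x T^{(x)} = T$ for the leading symbol and then iterating it $t$ times.

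Third, I would invoke the eigenvector property. Since $\widehat{\pi}\, T = \widehat{\pi}$, induction gives $\widehat{\pi}\, T^{t} = \widehat{\pi}$ for every $t \geq 0$, so the $T^{t}$ prefix collapses and $\Pr(\MeasSymbol_{t:t+\ell} = w) = \widehat{\pi}\, T^{(w)}\, \One = \Pr(\MeasSymbol_{0:\ell} = w)$. As $w$ and $\ell$ are arbitrary, this is exactly the shift-invariance $\Pr(\MeasSymbol_{t:t+\ell}) = \Pr(\MeasSymbol_{0:\ell})$ required by \cref{Def:Stationary}, so the output process is stationary. I would also remark, for completeness, that $T \One = \One$ already makes the family $\{\Pr(w)\}$ consistent under appending and marginalizing a trailing symbol, so these word probabilities genuinely define a process measure rather than an inconsistent collection.

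The only step requiring real care is the second one---the claim that marginalizing out the unconstrained leading symbols is precisely left-multiplication by $T^{t}$. Everything else is a one-line consequence of $T$ being stochastic and $\widehat{\pi}$ being its stationary left eigenvector, and indeed the entire conceptual content of the proposition is carried by the single identity $\widehat{\pi}\, T = \widehat{\pi}$.
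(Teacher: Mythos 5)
Your proof is correct and takes essentially the same route as the paper, which simply cites ``elementary Markov chain theory'' for the time-shift invariance of word distributions; you have spelled out exactly that argument, namely $\Pr(\MeasSymbol_{t:t+\ell}=w) = \widehat{\pi}\,T^{t}\,T^{(w)}\,\One = \widehat{\pi}\,T^{(w)}\,\One$ via $\widehat{\pi}T=\widehat{\pi}$, together with Eq.~\eqref{eq:FromMeasureToProb} to pass from word probabilities to the measure. The added consistency remark ($T\One=\One$) is a sound bonus, not a deviation.
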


\begin{proof}
Elementary Markov chain theory \cite{Fell70a} shows that word distributions are
time-shift invariant.
\end{proof}

\subsection{Intrinsic Information Processing}
\label{sec:InfoStore}

There are infinitely many possible optimal predictive models---if thinking of
state-based models, imagine adding redundant or even useless states.
Helpfully, computational mechanics established that the \eM is a process'
minimal and unique predictive model in the sense that the amount of information
stored by the causal states is smaller than (or equal) to any other possible
\emph{prescient} (equally predictive) rival. We quantify this via the Shannon
entropy of the causal-state distribution---this is the \emph{statistical
complexity}: $\Cmu = \SelfII{\CausalState} = \SelfII{\widehat{\pi}}$.

With this, there are three basic informational invariants that describe a
stationary process: $\hmu$ is a process' rate of information creation, $\Cmu$
is the amount of history a process remembers from its past $\Past$---the
historical information stored in the causal states $\CausalStateSet$; and $\EE$
is the amount of that stored information communicated to the future $\Future$.



Generally, the following refers to an environment's \emph{inherent information}
if an optimally-predicting agent uses the environmental process' \eM and is
synchronized.

If the agent uses a model other than the \eM, then the information available is
relative to that model and to being synchronized or not. Hence, this highlights
the distinction between ``subjective'' self-information---using any model---and
``inherent'' self-information---using the \eM. Section \ref{sec:Meaninglessness}
below illustrates the consequences of an agent using an incorrect environment
model.

Finally, we generically refer to the time series of temporal information
measures and the causal-state and prediction processes collectively as
\emph{information processes}. These are the time series generated during an
agent's online operation that allow us to monitor and diagnose the informational
processing that it performs while interacting with an environment.

\section{Intrinsic Semantics of Information Processes}
\label{sec:ProcSemantics}

Generally, one concludes that an \eM\ captures the patterns in a stochastic
process \cite{Shal98a}. However, focusing on the global, time-asymptotic view
side-steps a key question that Fig. \ref{fig:MSemantics} highlights:
\begin{quote}
What does a particular measurement mean?
\end{quote}
To address this, the following reviews and then applies the notion introduced
some time ago in Ref. \cite{Crut91b} of \emph{measurement semantics}. This
refers to the (informational) meaning revealed to an agent via measurements of
its environment with respect to its internal model of that environment.

Said more plainly, we assume an agent has a model in its ``head'' that it uses
to interpret the incoming sequence of measurements. The following explores the
relationship between incoming ``raw'' observations and how an agent deploys a
(good or bad) model of the observed process' generator to interpret them.

It might use this model to ``understand'' patterns in the process or simply may
use it to predict future observations. We start with the latter first.

Two points to clarify first. The following assumes an agent has a model of its
environment. As such, the following does not discuss how an agent obtains its
model. This certainly brings up important, but for now peripheral, issues; for
example, the issues of statistical estimation, inference, and overfitting. See,
for example, Ref. \cite{Stre13a} and references therein. Beyond inference,
though, there are many competing ways an agent can have a model of its
environment. It can, for example, simply sample a given model distribution---a
topic for a separate development.

In addition, the following assumes the agent uses the incoming observations and
its model to synchronize to the environment's state. The way in which it does this can be quite challenging. It is its own topic; see
Ref. \cite{Trav14a}.

\subsection{Prediction Semantics}
\label{sec:PredictionSemantics}

One interpretational setting is to use past observations $\past_t$ up to time
$t$ to predict the future, say, $\msym_{t+1}$ or even the whole future
$\future_t = \msym_{t+1}, \msym_{t+2}, \ldots$. But to what end? Specifically,
even mere prediction begs asking, What is the meaning of the particular
measurement $\msym_{t+1} \in \alphabet$?

Harking back to Sec. \ref{sec:InfoTh}, we have the following.

\begin{definition}
Given a particular past $\past_t \in \Past$, recall that Shannon defined the
amount of \emph{self-information} an agent gains in observing $\meassymbol \in
\alphabet$ to be \cite{Shan48a}:
\begin{align}
-\log_2 \Pr(\MeasSymbol_t = \meassymbol)
  ~.
\end{align}
\end{definition}

This is, in short, an agent's degree of surprise on observing $\msym_{t+1}$. As
Shannon formulated it and as noted above, the degree of surprise is the most
basic concept in information theory. Beyond the intuition motivating
self-information, Ref. \cite{Shan48a} did not specify exactly how one obtains
the probability $\Pr(\MeasSymbol_t = \msym$). It is assumed to be available to
an agent ... (or, in Shannon's setting, available to the communications channel
analyst.)

And so, we must refer to this as the \emph{subjective self-information} since,
nothing else said, it does not specify how the probability is determined.
(This is in stark contrast to the comments at the beginning, which assumed
probabilities were given.) Specifically, any interpretive model could be used
to develop $\Pr(\MeasSymbol_t = \msym)$ and the self-information would change
accordingly. The need to address this issue is now clear. The following
assembles the required components.

We can specialize the above definition of \emph{subjective} self-information by
constraining the model used, rather than it being arbitrary. (Again, that
arbitrariness led us to the label ``subjective''.) Having identified this
relativity, when using information theory one is chastened to specify at the
outset the model class of process generators with which one works.

Heeding this, one can then specify a model class, such as Markov chains or
hidden semi-Markov processes, as appropriate. However, there is still model
choice within a given class and so follow-on informational interpretations are
subjective. Computational mechanics' introduction of a process' \eM solves this
problem, grounding interpretation in a process' minimal and unique
representation that comes from the process itself. Seemingly simple, this is
one of computational mechanics' broader contributions as it removes unnecessary
subjectivity in applying information theory to a given stochastic process..

Computational mechanics adopts a companion view of a communication channel that
refines Shannon's notion of the channel \emph{receiver} by making explicit the
observing entity---the agent; in fact, one that is optimally predicting. In
short, for objective or intrinsic information measures, an agent employs a
process' \eM as its internal model.

A key consequence of its optimality is that using the \eM as the agent's
internal model grounds much of information theory that is unspecified or, as
noted, subjective.

\begin{definition}
Given a stationary process $\Process$ and an agent synchronized to it that uses
$\Process$'s \eM $M = \{ \alphabet, \CausalStateSet, \CausalTransitionSet,
\mu_0\}$, the \emph{intrinsic self-information} in observing $\msym$ is:
\begin{align}
-\log_2 & \Pr(\MeasSymbol_t = \msym) \nonumber \\
	& = -\log_2 \Pr(\past_t)
	\Pr \left( \causalstate_t \rightarrow_{\msym} \causalstate_{t+1} | \past_t
	\right)
  ~.
\end{align}
Here, the self-probability $\Pr \left( \causalstate_t \rightarrow_{\msym}
\causalstate_{t+1} \right)$ is calculated from the \eM, as follows: (i) Note
the probability $\Pr(\causalstate_t|\past_t)$ of being in state $\causalstate$
having observed $\past_t$. Applying the $\epsilon(\cdot)$, this is unity.
And, (ii) multiply that by the (transition) probability
$\LabelCausalTransition$ of taking the $\msym$-labeled transition. That is:
\begin{align}
\Pr(\MeasSymbol_t = \msym)
	& =  \Pr(\past_t) T^{(\msym_t)}_{\epsilon(\past_t) \to_{\msym} \CausalStatePrime} \\
	& =  \Pr(\past_t) T^{(\msym)}_{\causalstate_t \to_{\msym}\CausalStatePrime}
  ~.
\end{align}
Note the dependence on the process' asymptotic invariant measure $\mu$,
implicitly in using $\Pr(\past_t)$.
\end{definition}

Again, using the \eM is a natural choice given that it is the process' minimal
optimal predictor. And so, follow-on results inherit a number of desirable
properties---such as, actually describing the given process and facilitating
efficient estimation of informational properties. In addition, Ref.
\cite{Shal98a} established that the causal-state process---the temporal sequence
of causal states the agent visits---is an order-$1$ Markov process. This greatly
simplifies much technical development since each causal state summarizes in a
single RV $\causalstate_t= \epsilon(\past_t)$ all of the prediction-relevant
information from the past.

To emphasize, the present setting and the following assume the agent is
\emph{synchronized} to the incoming process---it knows which causal state
($\causalstate$ above) the process generator is in. For stochastic processes
generated using finite-memory there are two
kinds \cite{Trav14a}:
(i) exact synchronization
in which the agent knows the state after a finite series of
observations and (ii) asymptotic synchronization where an infinite series of
observations is required and the agent converges (exponentially fast) to
synchronization.

Agent-environment synchronization is an important property for the results here.
A number of consequences arise when removing the synchronization requirement,
especially when addressing the potentially transient, nonstationary epoch prior
to synchronization. Fortunately, the conditions of exact and asymptotic
synchronization can be substantially broadened to include processes generated by
\emph{path-mergeable} HMMs \cite{Trav14a}.

\subsection{Online Prediction: An Example}
\label{sec:OnlinePrediction}

\setcounter{MaxMatrixCols}{13}

The following assumes the agent uses the process' minimal
optimally-predictive model---the \eM.

To illustrate the operation of prediction, let's consider how an agent
sequentially monitors (optimally) observations in a realization of the Even
Process:
\begin{align*}
\begin{matrix}
	t   & = 0 & 1 & 2 & 3 & 4 & 5 & 6 & 7 & 8 & 9 & 10 & 11 \\
	s_t & = 0 & 1 & 1 & 1 & 1 & 0 & 1 & 1 & 0 & 1 &  1 &  1 \\
\end{matrix}
\end{align*}
Now, at time $t = 11$, the agent measures $s_{11} = 1$. (For reference, see the
Even Process' \eM in Fig. \ref{fig:ExampleProcesses}(d).)

How much information does $\MeasSymbol_{11} = 1$ convey to the agent? We apply
the self-information conditioned on the preceding eleven observations:
\begin{align}
\SelfIcond{ s_{11} }{ s_{10} = 1 , s_9 = 1, \ldots} & \approx \hmu \\
  & \approx 0.585 ~\text{bits}
  ~.
\end{align}
This is the degree of the observer's surprise (unpredictability).

\subsection{Measurement Semantics}
\label{sec:MeasuSemantics}

Crucially, this indicates uncertainty but does not determine what the event
$s_{11} = 1$ means to the observer!

We have one event---$s_{11} = 1$, but there are two contexts or interpretive
levels in which to interpret it. The first is how much it is
unanticipated---the semantics of prediction. The second, concurrent setting is
how it updates the agent's actual anticipation. That is, to what model state
$\causalstate_t$ does the observation take the observer? The meaning to the agent comes from a
tension between representation of the same event $s_{11} = 1$, but at different
levels or contexts. Here, we have:
\begin{enumerate}
\item Level 1 is the data stream and the event is a measurement;
\item Level 2 is the agent and the event updates its model.
\end{enumerate}
With this in mind, we have the following definition of an agent's observational
semantics.

\begin{definition}
The \emph{degree of meaning} of observing $\msym \in \alphabet$:
\begin{align}
\Theta (\msym) = - \log_2 \Pr (\to_\msym \causalstate) 
  ~,
\end{align}
where the arrow notation signifies $\causalstate \in \CausalStateSet$ is the
causal state \emph{to which} $\msym$ brings the agent.
\end{definition}

Said differently, this broadens our understanding of causal states---they are
\emph{contexts of interpretation} within which observed values appear. The set
$\CausalStateSet$, thus, is agent's palette of semantic interpretations of
environment behaviors.

In the preceding example of the Even Process, the state selected is $D$ which
has asymptotic probability $\Pr(\CausalState = D) = 2/3)$ and so:
\begin{align*}
\Theta (\msym_{11} = 1) & = -\log_2 \Pr(\causalstate_{11} = D) \\
  & = -\log_2 (2/3) \\
  & \approx 0.58496 ~\text{bits}
  ~,
\end{align*}
of semantic information.

Moreover:

\begin{definition}
The \emph{meaning content} is the state $\causalstate$ selected from the palette
of anticipations---the model's state set $\CausalStateSet$.
\end{definition}

Again, in the preceding example of the Even Process, $\msym_{11} = 1$ selects
state $D$. And this means that the $1$---compared to other $1$s that may be
emitted---is the second $1$ in the symbol-pairing---$0(11)^n0$---that the Even
Process emits.

And, this highlights the origin of the measurement's meaning \emph{within} the
set of the agent's causal states. That is, the meaning depends on the entire set
of interpretive contexts---within the causal state set $\CausalStateSet$.
That is to say that the interpretive context of an incoming observation is the
causal state the agent is currently in. For one, the causal state has an
associate prediction---its future morph. For another, the current causal state
determines which next state the agent will move to.
Both predicting the next observation and the next state are interpretations of
what the current observation means to the agent.

\subsection{Meaninglessness}
\label{sec:Meaninglessness}

It is important to highlight that this measurement semantics allows for measurements to be meaningless. Here, are several cases.

\begin{enumerate}
\item Given how the \eM operates---recall the acceptance and rejection modes
described above---one specifies an initial state distribution. And, under one
operation mode, the \eM puts all of the probability on the start state. That
is, $\pi_t$ is set to $\pi_0 = (1, 0,0, 0, \ldots)$. Or, in other words, $\Pr
(\causalstate_0) = 1$. And, the symbol observed is $\msym_0 = \lambda$. The
degree of meaning, then, is:
\begin{align*}
\Theta (\msym_0 = \lambda) & = - \log_2 \Pr(\causalstate_0) \\
  & = -\log_2 1 \\
  & = 0
  ~.
\end{align*}
\item Disallowed transition: Upon seeing a disallowed symbol, the agent resets
its \eM to the start state---the state of total ignorance. As just noted, this
means the state, being the start state, has full probability and so is
meaningless. This is intuitive: The disallowed transition or forbidden symbol
observed is meaningless. The model has no interpretation to give---nothing to
say: $\Theta = 0$.
\end{enumerate}

Thus, by the above, the start state is meaningless. Indeed, having seen nothing,
all futures are possible.

Somewhat glibly, meaningless measurements are (very!) informative: 
\begin{align*}
  -\log_2 \Pr(\causalstate_t \to_{\msym_t}\CausalState_0) & = -\log_2 0 \\
  & = \infty
  ~.
\end{align*}
A meaningless measurement is informative in the sense that a zero probability
observation is infinitely surprising. And, moreover, a meaningless measurement
implies the existence of (at least one) zero-probability measurement.

Thus, even if the agent has an incorrect model of a process there is a
semantics of its (likely misleading) interpretations of the environment
behaviors. In such circumstances, there can be many symbols and transitions
that the agent interprets as disallowed. No matter, the above theory properly
describes the semantics according to the agent. Section \ref{sec:GMInterpEP}
illustrates the circumstance when an agent uses as its internal model the Even
Process \eM to (incorrectly) interpret an environment obeying the Golden Mean
Process.

Finally, how this semantic theory applies to hierarchically-structured
processes---such as an environment at the onset of chaos or the particle
interactions in cellular automata distributed computation---are the subjects of
a sequel.

\subsection{Total Semantic Information}
\label{sec:TotalSemanticInfo}

It turns that the total amount of semantic information in a process is related
to its information storage.

\begin{theorem}
A process' total average semantic information is its statistical complexity:
\begin{align*}
\langle \Theta (\msym) \rangle = \Cmu
  ~.
\end{align*}
\end{theorem}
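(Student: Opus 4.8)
The plan is to evaluate the average $\langle \Theta(\msym) \rangle$ directly on the stationary, synchronized \eM and recognize the result as the entropy of the causal-state distribution. First I would fix what the average ranges over. In the synchronized, stationary regime each observation the agent makes corresponds to a single labeled transition $\causalstate_i \to_\msym \causalstate_j$ of the \eM, and—by the stationarity and ergodicity of the causal-state process (Def.~\ref{Def:Ergodic} and Sec.~\ref{sec:CSProcess})—the long-run frequency of that transition equals its stationary weight $\widehat{\pi}_i \, T^{(\msym)}_{ij}$, where $\widehat{\pi}$ is the asymptotic causal-state distribution obeying $\widehat{\pi} = \widehat{\pi} T$ with $T = \sum_{\msym \in \alphabet} T^{(\msym)}$. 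Since $\Theta(\msym) = -\log_2 \Pr(\to_\msym \causalstate)$ scores each observation by the surprisal $-\log_2 \widehat{\pi}_j$ of the destination state $\causalstate_j$ it selects, the average becomes a sum over all allowed transitions weighted by these frequencies.

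The key step is a regrouping by destination state. Writing
\begin{align*}
\langle \Theta(\msym) \rangle
  = \sum_{i,j,\msym} \widehat{\pi}_i \, T^{(\msym)}_{ij}
    \left( -\log_2 \widehat{\pi}_j \right)
  ~,
\end{align*}
I would factor out the $-\log_2 \widehat{\pi}_j$ term, which depends only on $j$, and collapse the inner sum over sources and symbols:
\begin{align*}
\sum_{i,\msym} \widehat{\pi}_i \, T^{(\msym)}_{ij}
  = \sum_i \widehat{\pi}_i \, T_{ij}
  = \left( \widehat{\pi} T \right)_j
  = \widehat{\pi}_j
  ~.
\end{align*}
Here unifilarity guarantees that each $(i,\msym)$ pair feeds at most one destination $j$, so no transition is double counted, and the stationarity condition $\widehat{\pi} = \widehat{\pi} T$ supplies the last equality—this is the one place the argument genuinely uses that the agent is synchronized and operating at the invariant distribution. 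Substituting back gives
\begin{align*}
\langle \Theta(\msym) \rangle
  = -\sum_{j} \widehat{\pi}_j \log_2 \widehat{\pi}_j
  = \SelfII{\widehat{\pi}}
  = \Cmu
  ~,
\end{align*}
which is exactly the statistical complexity $\Cmu = \SelfII{\CausalState}$ from Sec.~\ref{sec:InfoStore}. Equivalently and more compactly, the calculation just expresses that the time-averaged surprisal of the visited causal states equals the Shannon entropy of their stationary distribution.

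I expect the only real obstacle to be conceptual rather than computational: pinning down that the correct weighting of $\Theta$ is by transition frequency (equivalently, by the stationary probability $\widehat{\pi}_j$ of arriving in each state) rather than, say, a flat average over the alphabet, and justifying the interchange of time-average and ensemble-average through the stationarity and ergodicity of the causal-state process established earlier. Once the average is set up as a sum over labeled transitions, the remainder is the short left-eigenvector collapse above.
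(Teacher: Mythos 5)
Your proof is correct and follows essentially the same route as the paper's: both reduce $\langle \Theta(\msym) \rangle$ to the Shannon entropy $-\sum_{\causalstate} \Pr(\causalstate) \log_2 \Pr(\causalstate)$ of the stationary causal-state distribution. The paper simply asserts the weighting $\sum_{\causalstate} \Pr(\causalstate)\,\Theta(\msym)$ as its first line, whereas you justify it via the transition-frequency sum and the left-eigenvector collapse $\widehat{\pi} T = \widehat{\pi}$ (the appeal to unifilarity there is harmless but unnecessary, since $\sum_{j,\msym} T^{(\msym)}_{ij} = 1$ already makes the bookkeeping work); this fills in the one step the paper leaves implicit.
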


Proof:
\begin{align*}
\langle \Theta (\msym) \rangle & =
  \sum_{\causalstate \in \CausalStateSet} \Pr(\causalstate) \Theta (\msym) \\
  & = - \sum \Pr(\causalstate) \log_2 \Pr(\causalstate) \\
  & = \SelfII{\CausalState} \\
  & = \Cmu
  ~.
\end{align*}

In other words, the average amount of meaning is the statistical complexity
$\Cmu$. This gives an insightful connection between a process' internal
structure and the semantics an observer attributes to observations.

Tables \ref{tab:InfoSemanticAnalysisBiasedCoin},
\ref{tab:InfoSemanticAnalysisPeriod2}, \ref{tab:InfoSemanticAnalysisGoldenMean},
and
\ref{tab:InfoSemanticAnalysisEven}
present an agent's informational and semantic analysis of 
Sec. \ref{sec:IProcessExamples}'s example processes, respectively.

\subsection{Ergodic Theory Redux }
\label{sec:ErgodicPrediction}

The following establishes that the information processes generated by a
(finitary) agent are well-behaved in the sense that the information processes
resulting from observing a stationary and ergodic environment are well-behaved.
Note that such properties are an aid to further, downstream processing that an
agent is tasked to perform.

This holds for any class of generated information process---whether entropy
rate or excess entropy or others based on the various self-informations or
densities (self-informations) developed above. Given this, we then develop
analogous results for when an agent observes an ergodic process in its
environment.

\begin{figure*}
(a) \includegraphics[width=.4\columnwidth]{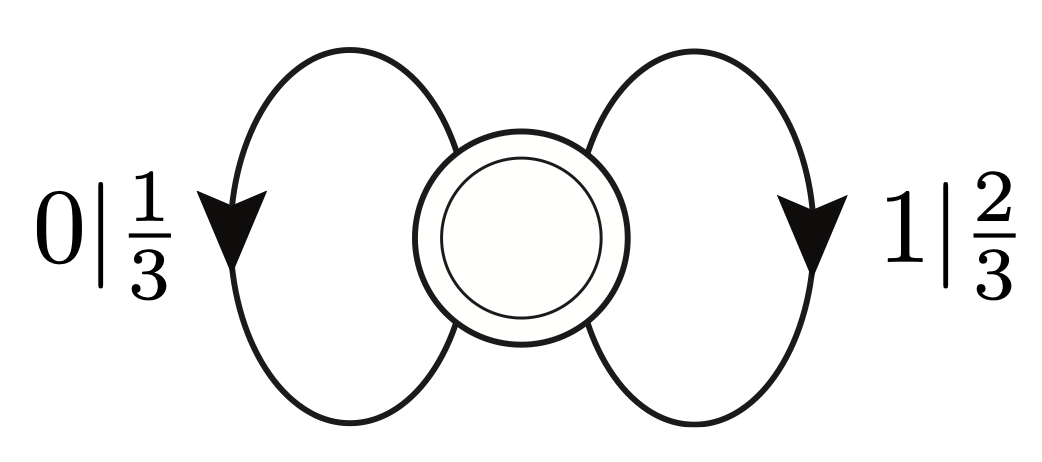} (b)
\includegraphics[width=.4\columnwidth]{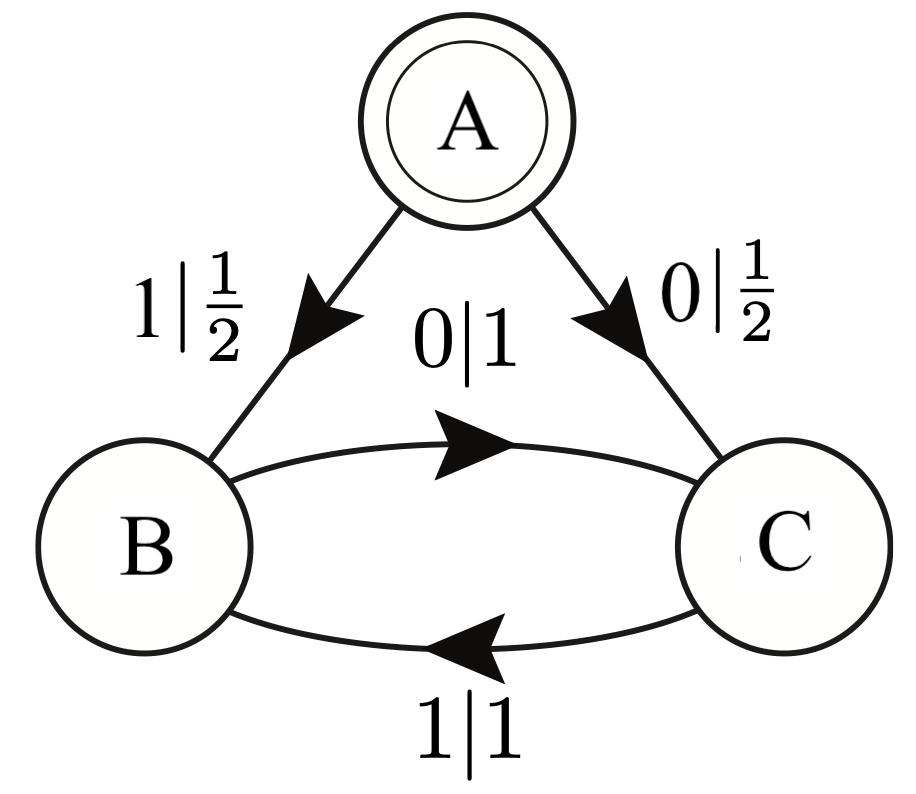} (c)
\includegraphics[width=.4\columnwidth]{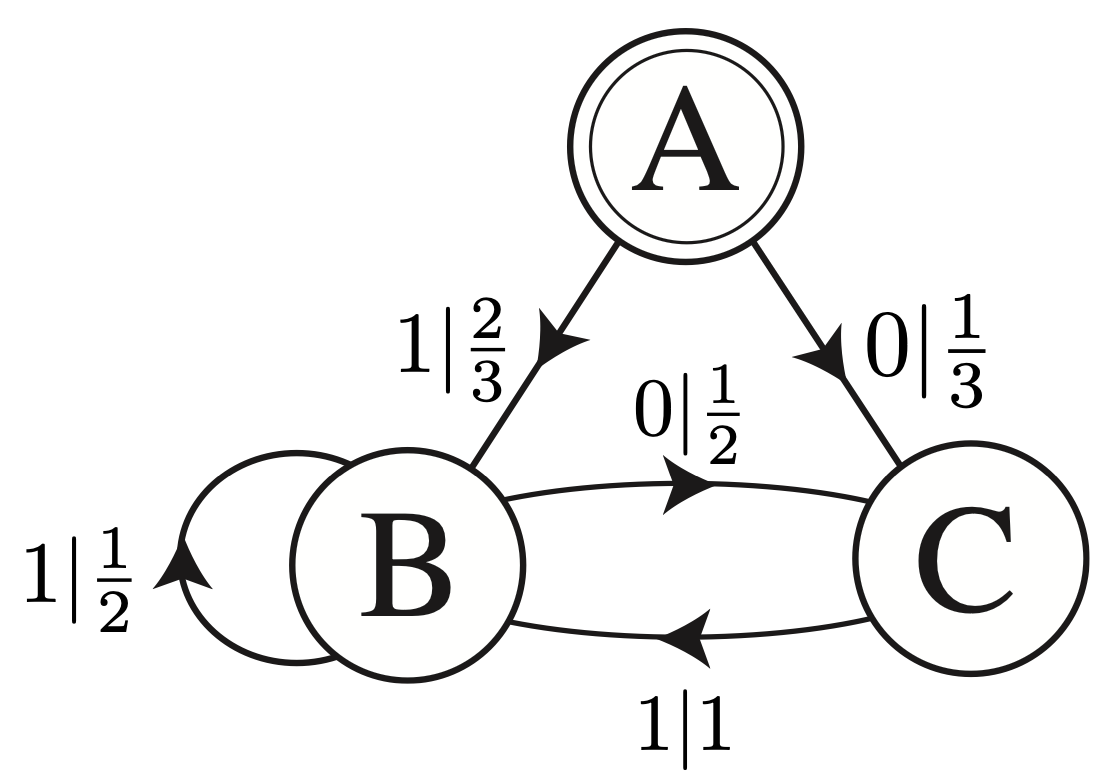} (d)
\includegraphics[width=.4\columnwidth]{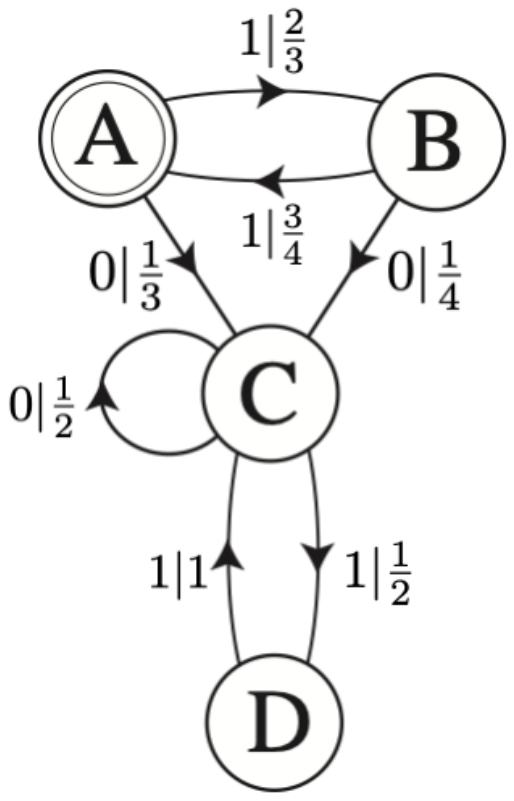} \caption{Example
\eMs for: (a) Biased Coin; (b) Period-$2$; (c) Golden Mean; and (d) Even
	Processes. Causal states are circles, with the start state having an
	inscribed circle. State-to-state transitions are labeled $x | p$, where
	$\meassymbol \in \alphabet$ is the emitted symbol and $p \in [0,1]$ is the
	transition probability.
	}
\label{fig:ExampleProcesses}
\end{figure*}

\begin{proposition}
If $\Process$ is stationary, then the prediction mapping $\epsilon_t
=\epsilon[\Past_t]$ is a random variable that is stationary.
\end{proposition}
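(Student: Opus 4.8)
The plan is to realize the prediction-mapping process as a single fixed measurable map composed with the shift, and then to read off stationarity directly from $\shiftOperator \measure = \measure$. Define $g : \alphabet^{\integers} \to \spaceOfMeasures(\sequenceSpace)$ by $g(\omega) = \EquiFunction{\Past_0(\omega)}$, the predictive measure attached to the time-$0$ past of the realization $\omega$. Since advancing the past by one time step is exactly the action of the shift, $\Past_t = \shiftOperator^t \Past_0$, one has $\epsilon_t = g(\shiftOperator^t \omega)$ for every $t \in \integers$. The whole process $\{\epsilon_t\}_{t \in \integers}$ is therefore one fixed map evaluated along the shift orbit, and the claim reduces to the standard principle that a measurable function of a stationary process is stationary.

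The subtlety --- and the single reason one cannot simply quote \cref{Def:StationaryFunction} --- is that $g$ depends on the entire semi-infinite past rather than on a finite window, so it is not a finite-range function. Accordingly, the first real step is to certify that $g$ is measurable. For this I invoke the convergence results imported from Ref. \cite{Loom21b}: for $\measure$-almost every $\omega$, the finite-window conditional laws $\eta_\ell[\Past_0]$, specified by $\eta_\ell[\Past_0](U_{0,\omega'}) = \Pr_\measure(\omega' \mid \meassymbol_{-\ell+1} \ldots \meassymbol_0)$, converge in distribution to $\EquiFunction{\Past_0} = g(\omega)$ as $\ell \to \infty$. Each $\eta_\ell$ is a function of only the finitely many coordinates $\meassymbol_{-\ell+1}, \ldots, \meassymbol_0$, assembled from the word-probability ratios of \cref{eq:Likelihood}, and is thus patently measurable. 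As a $\measure$-almost-everywhere limit of measurable maps into the causal-state space --- metrizable under convergence in distribution \cite{Loom21b} --- the map $g$ is itself measurable, so each $\epsilon_t = g(\shiftOperator^t \omega)$ is a bona fide random variable.

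With measurability secured, stationarity is immediate from shift-invariance of the measure. The key identity I would establish is that, for any times $t_1 < \cdots < t_n$, any shift $s \in \integers$, and any measurable sets $B_1, \ldots, B_n$ in the causal-state space,
\begin{align*}
\Pr(\epsilon_{t_1} \in B_1, \ldots, \epsilon_{t_n} \in B_n)
  &= \measure\Big( \bigcap_{i=1}^{n} \shiftOperator^{-t_i} g^{-1}(B_i) \Big) \\
  &= \measure\Big( \shiftOperator^{-s} \bigcap_{i=1}^{n} \shiftOperator^{-t_i} g^{-1}(B_i) \Big) \\
  &= \Pr(\epsilon_{t_1 + s} \in B_1, \ldots, \epsilon_{t_n + s} \in B_n) ,
\end{align*}
where the first line uses $\{\epsilon_{t_i} \in B_i\} = \shiftOperator^{-t_i} g^{-1}(B_i)$, the middle equality is exactly $\measure \circ \shiftOperator^{-s} = \measure$, and the last line collects the preimages into $\bigcap_{i} \shiftOperator^{-(t_i + s)} g^{-1}(B_i)$. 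Since every finite-dimensional distribution is thereby shift-invariant, the prediction-mapping process $\{\epsilon_t\}$ is stationary.

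I expect the only genuine obstacle to be the measurability step: showing that the infinite-range map $g$ is well-defined and measurable even though it is not finite-range. This is precisely where the $\measure$-almost-everywhere convergence of the finite-window approximations from Ref. \cite{Loom21b} does the essential work; it is what replaces the finite-range hypothesis of \cref{Def:StationaryFunction}. Once $g$ is known to be measurable, no quantitative control on the rate or mode of convergence is needed --- mere almost-everywhere existence of the limit suffices, and the stationarity conclusion is a one-line consequence of $\shiftOperator \measure = \measure$.
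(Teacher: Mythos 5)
Your proof is correct, but it takes a genuinely different route from the paper's. The paper disposes of this proposition in one line by citing Ref.~\cite{Shal98a} for the fact that $\epsilon_t$ is a first-order Markov process, leaning on established computational-mechanics structure theory rather than arguing from the shift dynamics. You instead realize $\{\epsilon_t\}$ as a single fixed map $g$ evaluated along the shift orbit, certify measurability of the infinite-range map $g$ via the $\measure$-a.e.\ convergence of the finite-window conditional laws $\eta_\ell$ from Ref.~\cite{Loom21b}, and then obtain shift-invariance of every finite-dimensional distribution directly from $\shiftOperator\measure = \measure$. Your version buys two things the paper's citation does not: it explicitly confronts the one real obstacle --- that $\epsilon[\Past_t]$ is not a finite-range function, so the paper's own Proposition~\ref{Def:StationaryFunction} cannot simply be quoted --- and it does not route through Markovity at all, which is arguably cleaner since the Markov property by itself does not imply stationarity without the additional fact that the state distribution is invariant. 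The paper's route buys brevity and connects the result to the broader causal-state machinery. The only loose end in your argument is the standard technicality that an a.e.\ limit of measurable maps is measurable only up to a null set (or with respect to the completed $\sigma$-algebra); this is harmless here but worth a remark.
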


\begin{proof}
$\epsilon_t$ is a first-order Markov process \cite{Shal98a}.
\end{proof}

The \emph{prediction process} $\MeasSymbol_\epsilon =
\{\MeasSymbol_t|\epsilon_t: t \in \mathbb{Z}\}$ is the stochastic process
of predictions $\Pr(\Future|\cdot)$.

\begin{proposition}
The prediction process is stationary and ergodic.
\end{proposition}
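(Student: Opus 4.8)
The plan is to reduce the claim to the stationarity and ergodicity of the underlying causal-state process and then invoke the finite-range function result already in hand. By \cref{Def:CausalStates}, two pasts are identified by $\epsilon[\cdot]$ exactly when they induce the same conditional distribution $\Pr(\Future \mid \past)$; hence the prediction $\Pr(\Future \mid \past_t)$ at each time is in bijection with the causal state $\causalstate_t = \epsilon[\past_t]$. Consequently the prediction process is nothing more than a relabeling of the causal-state process $\{ \causalstate_t : t \in \integers \}$, and it suffices to show that the latter is stationary and ergodic. Once that is done, the prediction is a deterministic, range-$0$ (hence finite-range) function of the causal state, so \cref{Def:ErgodicFunction} transfers both properties to the prediction process itself.

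Stationarity is essentially already recorded. The preceding proposition states that the prediction mapping $\epsilon_t = \epsilon[\Past_t]$ is a stationary random variable, because the causal-state process is order-$1$ Markov and, when initialized with the asymptotic distribution $\widehat{\pi}$ (the left-eigenvector of $T = \sum_\msym T^{(\msym)}$), its word distributions are time-shift invariant by elementary Markov-chain theory. I would simply cite this, noting that the same initialization makes the causal-state process stationary.

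The substantive step, and the main obstacle, is ergodicity of the causal-state process. Since we restrict to $|\CausalStateSet| < \infty$, this is a stationary finite-state Markov chain, and such a chain is ergodic as a measure-preserving system precisely when its recurrent states form a single closed communicating (irreducible) class. The plan is to deduce this irreducibility from the ergodicity of $\Process$. I would argue by contradiction: if the recurrent causal states decomposed into two or more closed classes, then the event that the causal-state trajectory eventually remains in a fixed one of them would be a shift-invariant subset of $\alphabet^\integers$ of measure strictly between $0$ and $1$, contradicting ergodicity of $\measure$ under $\shiftOperator$. Equivalently and more abstractly, the map $\past_t \mapsto \epsilon[\past_t]$ is a measurable, shift-equivariant factor of the ergodic system $(\alphabet^\integers, \sigmaAlgebra, \measure, \shiftOperator)$, and factors of ergodic systems are ergodic. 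The delicate point is that this factor map depends on the entire semi-infinite past and so is not finite-range, which is exactly why \cref{Def:ErgodicFunction} cannot be applied to it directly and why the Markov-chain argument (or the general factor theorem) is needed here. Note that aperiodicity is not required---periodic examples such as the Period-$2$ process are ergodic though not mixing---so irreducibility alone suffices.

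With the causal-state process shown to be stationary and ergodic, the reduction in the first paragraph closes the argument: the prediction process, being a finite-range (indeed a bijective relabeling) function of it, is stationary and ergodic by \cref{Def:ErgodicFunction}.
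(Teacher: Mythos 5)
Your proof is correct, but it takes a genuinely different --- and considerably more complete --- route than the paper's. The paper's entire proof is the one-line citation ``$\epsilon_t$ is a first-order Markov process,'' leaning on Ref.~[Shal98a]; it does not spell out why Markovity yields ergodicity (it does not by itself: a finite Markov chain with two closed recurrent classes is stationary and Markov but not ergodic). You supply exactly the missing ingredient: either the irreducibility argument (a nontrivial decomposition of the recurrent causal states would produce a shift-invariant set of realizations with measure strictly in $(0,1)$, contradicting ergodicity of $\measure$) or, more cleanly, the observation that $\omega \mapsto (\epsilon[\past_t])_t$ is a shift-equivariant measurable factor map and factors of ergodic systems are ergodic. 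Your remark that this factor map depends on the entire semi-infinite past and therefore is \emph{not} finite-range is a sharp and important point --- it explains why \cref{Def:ErgodicFunction} cannot be invoked directly here, and it quietly corrects the paper's own later propositions, which assert that the causal-state and self-information processes are ``finite-range functions of the process.'' What the paper's approach buys is brevity by outsourcing the work to the computational-mechanics literature; what yours buys is a self-contained argument that actually closes the logical gap between ``order-$1$ Markov'' and ``ergodic,'' plus the correct handling of the bijection between predictions $\Pr(\Future\mid\past_t)$ and causal states $\epsilon[\past_t]$ that justifies treating the prediction process as a relabeling of the causal-state process.
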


\begin{proof}
$\epsilon_t$ is a first-order Markov process \cite{Shal98a}.
\end{proof}

Similarly, the causal-state process is statistically well-behaved.

\begin{proposition}
If $\Process$ is stationary and ergodic, then the causal state process
$\{\CausalState_t: t \in \mathbb{Z}\}$ is stationary and ergodic.
\end{proposition}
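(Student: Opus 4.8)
The plan is to realize the causal-state process as a \emph{factor}---a shift-commuting measurable image---of the environment's dynamical system $(\alphabet^\integers, \sigmaAlgebra, \measure, \shiftOperator)$, and then invoke the standard principle that a factor of a stationary, ergodic system is again stationary and ergodic. Concretely, I would define the filtering map $\Phi : \alphabet^\integers \to \CausalStateSet^\integers$ coordinatewise by $\Phi(\omega)_t = \epsilon(\past_t)$, so that $\{\CausalState_t\}$ is exactly the image process $\Phi(\Process)$ carrying the pushforward measure $\measure_{\CausalState} = \Phi_\ast \measure$. Note that the finite-range result \cref{Def:ErgodicFunction} does \emph{not} apply directly here, since $\epsilon$ is a function of the entire semi-infinite past rather than of a window of finite radius $r$; the factor-map argument is precisely what replaces it.

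First I would check that $\Phi$ is \emph{equivariant}, i.e.\ commutes with the shift. Because advancing a realization by one step advances every local past, $\past_t(\shiftOperator \omega) = \past_{t+1}(\omega)$, so that
\begin{align*}
\Phi(\shiftOperator \omega)_t = \epsilon\big(\past_t(\shiftOperator\omega)\big) = \epsilon\big(\past_{t+1}(\omega)\big) = \big(\shiftOperator \Phi(\omega)\big)_t ,
\end{align*}
giving $\Phi \circ \shiftOperator = \shiftOperator \circ \Phi$. Next I would establish measurability of the single coordinate map $\omega \mapsto \epsilon(\past_0)$. This is where the convergence results of Ref.~\cite{Loom21b} enter: for $\measure$-almost every past the finite-memory predictors $\eta_\ell[\Past]$ converge in distribution to $\epsilon[\Past]$ (item~4 of the convergence properties quoted after \cref{Def:CausalStates}), so $\epsilon(\past_0)$ is a $\measure$-almost-everywhere defined measurable function of $\omega$ and $\Phi$ is measurable.

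With $\Phi$ equivariant and measurable, both conclusions follow. For stationarity, the pushforward inherits shift-invariance,
\begin{align*}
\shiftOperator_\ast \measure_{\CausalState} = \shiftOperator_\ast \Phi_\ast \measure = \Phi_\ast \shiftOperator_\ast \measure = \Phi_\ast \measure = \measure_{\CausalState} ,
\end{align*}
using $\Phi \circ \shiftOperator = \shiftOperator \circ \Phi$ and $\shiftOperator_\ast \measure = \measure$. For ergodicity, let $A \subset \CausalStateSet^\integers$ be shift-invariant, $\shiftOperator^{-1}A = A$. Equivariance makes $\Phi^{-1}(A)$ shift-invariant in $\alphabet^\integers$, so ergodicity of $\measure$ forces $\measure\big(\Phi^{-1}(A)\big) \in \{0,1\}$; since $\measure_{\CausalState}(A) = \measure\big(\Phi^{-1}(A)\big)$, we conclude $\measure_{\CausalState}(A) \in \{0,1\}$, which is ergodicity of $\{\CausalState_t\}$. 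As an independent cross-check I would note the Markov route: the causal-state process is order-$1$ Markov (the proposition of \cref{sec:AESync}), and ergodicity of $\Process$ precludes the recurrent causal states from splitting into two non-communicating classes, so they form a single irreducible class and the stationary chain is ergodic by elementary Markov-chain theory.

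The main obstacle is the measurability and almost-everywhere well-definedness of the coordinate map $\epsilon(\past_0)$ in the asymptotically-synchronizing regime, where the causal state genuinely depends on the full semi-infinite past and no finite window suffices; this is exactly what the limit $\eta_\ell[\Past] \to \epsilon[\Past]$ of Ref.~\cite{Loom21b} supplies, and it is the reason one cannot simply quote the finite-range proposition. A secondary bookkeeping point is that the bi-infinite filtering $\epsilon(\past_t)$ lands in a recurrent causal state for $\measure$-almost every $\omega$, so the transient states---relevant only to the finite-time synchronization epoch---do not enter the invariant-set analysis.
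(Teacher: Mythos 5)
Your proof is correct, but it takes a genuinely different---and more careful---route than the paper's. The paper disposes of this proposition in one line by asserting that the causal-state process is a ``finite-range function of the process'' and citing the earlier finite-range propositions. You explicitly reject that shortcut, and rightly so: $\epsilon(\past_t)$ depends on the entire semi-infinite past, so it is not a function $f:\mathbb{R}^{2r+1}\to\mathbb{R}$ with $r<\infty$ in the sense the paper's finite-range propositions require (the Even Process, with its asymptotic synchronization, is exactly the case where no finite window suffices). In place of that, you run the standard factor-map argument: $\Phi$ is shift-equivariant, measurable by the convergence $\eta_\ell[\Past]\to\epsilon[\Past]$ of Ref.~\cite{Loom21b}, and pushforwards of invariant ergodic measures under equivariant maps are invariant and ergodic. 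This buys generality (it covers infinite-Markov-order environments where the paper's stated justification is, strictly speaking, inapplicable) at the cost of having to verify measurability of the coordinate map, which you correctly identify as the one nontrivial obstacle and discharge via the cited convergence result. Your secondary Markov-chain cross-check is also sound and is closer in spirit to how the paper justifies the adjacent propositions on the prediction process (``$\epsilon_t$ is a first-order Markov process''). In short: same conclusion, but your argument repairs a gap in the paper's own one-line proof rather than reproducing it.
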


\begin{proof}
The causal-state process is a finite-range function of the process. And so, by
previous propositions, since the process is stationary and ergodic, the
causal-state process is stationary and ergodic.
\end{proof}

Define the \emph{prediction uncertainty process}: $\hmu(t) =
\selficond{\msym_t}{\past_t}$. It is also given by: $\hmu(t) =
\selficond{\msym_t}{\CausalState_t}$. The latter expression is often a direct
and efficient way to generate the prediction process, if an \eM is in hand.

\begin{proposition}
The prediction uncertainty process is stationary and ergodic.
\end{proposition}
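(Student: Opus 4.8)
The plan is to reduce the prediction uncertainty process to a finite-range function of a process already certified stationary and ergodic, so that \cref{Def:StationaryFunction,Def:ErgodicFunction} apply directly. The essential move is to abandon the literal defining expression $\hmu(t) = \selficond{\msym_t}{\past_t}$, which conditions on the semi-infinite past and so is \emph{not} a finite-range function of $\Process$, in favor of the equivalent form $\hmu(t) = \selficond{\msym_t}{\CausalState_t}$. By the causal-shielding (sufficiency) property of the causal states, $\selficond{\msym_t}{\past_t} = \selficond{\msym_t}{\CausalState_t}$, which collapses the entire infinite-past dependence onto the single random variable $\CausalState_t = \epsilon(\past_t)$.

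First I would make explicit that the pointwise value is a fixed, time-independent function of the current (state, symbol) pair. Setting $g(\causalstate, \msym) = -\log_2 \Pr(\MeasSymbol_t = \msym \mid \CausalState_t = \causalstate)$, stationarity of $\Process$ makes $\Pr(\msym \mid \causalstate)$ independent of $t$; indeed it is simply the emission weight read off the \eM transition matrices $T^{(\msym)}$. Hence $\hmu(t) = g(\CausalState_t, \MeasSymbol_t)$ is a deterministic, range-$0$ function of the pair $(\CausalState_t, \MeasSymbol_t)$.

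Second I would assemble the two coordinates as finite-range functions of the one underlying stationary and ergodic process $\Process$. The preceding proposition already gives that the causal-state process $\{\CausalState_t\}$ is stationary and ergodic, treating $\CausalState_t = \epsilon(\past_t)$ as a finite-range function of $\Process$; the present symbol $\MeasSymbol_t$ is trivially a range-$0$ function of $\Process$. Composing these with $g$ exhibits $\hmu(t)$ as a single finite-range, real-valued function of the stationary ergodic process $\Process$, so \cref{Def:ErgodicFunction} yields stationarity and ergodicity at once, completing the argument.

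The main obstacle is the first, conceptual step rather than any calculation: as written, the definition conditions on an infinite past, and none of the finite-range machinery applies to it verbatim. Everything hinges on invoking causal-state sufficiency to replace the infinite past by the finite-memory summary $\CausalState_t$, whose good ergodic behavior has already been established. A secondary point worth stating carefully is that the joint pair $(\CausalState_t, \MeasSymbol_t)$ must itself be stationary and ergodic, not merely each coordinate in isolation; this is clean here precisely because both are functions of the common process $\Process$, so they cannot split off independent ergodic components of their own, and the composite is handled in one stroke by \cref{Def:ErgodicFunction}.
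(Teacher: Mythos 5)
Your proposal is correct and lands on the same basic strategy as the paper's proof---exhibit the prediction uncertainty process as a finite-range function of a stationary, ergodic process and invoke the earlier propositions---but you execute it with a step the paper leaves entirely implicit. The paper's proof is a single sentence asserting that the prediction uncertainty process \emph{is} a finite-range function of $\Process$, which is not literally true of the defining expression $\hmu(t) = \selficond{\msym_t}{\past_t}$, since that conditions on the semi-infinite past. You correctly identify this as the crux and repair it by causal shielding: replacing $\past_t$ with $\CausalState_t = \epsilon(\past_t)$, writing $\hmu(t) = g(\CausalState_t, \MeasSymbol_t)$ for a time-independent $g$ read off the \eM transition matrices, and then leaning on the already-established stationarity and ergodicity of the causal-state process together with the finite-range-function propositions. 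Your closing remark about needing the \emph{joint} pair $(\CausalState_t,\MeasSymbol_t)$ to be ergodic, rather than each coordinate separately, is also a point the paper never addresses and is handled correctly by treating the pair as a single function of the common process. The one caveat is that your argument inherits whatever looseness lives in the preceding proposition: the paper justifies the causal-state process's ergodicity by again calling $\epsilon(\past_t)$ a ``finite-range function,'' which has the same infinite-past issue; a fully rigorous treatment would instead appeal to the order-$1$ Markov structure of the synchronized causal-state chain or to the general fact that measurable factors of ergodic systems are ergodic. Within the paper's own framework, however, your proof is the more complete of the two.
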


\begin{proof}
The prediction uncertainty process is a finite-range function of the process.
And so, by previous propositions, since it is stationary and ergodic.
\end{proof}

Define the \emph{causal-state uncertainty process}: $\Cmu(t) =
\selfii{\causalstate_t = \epsilon(\past_t)}$. This too is stationary and
ergodic if $\Process$ is:

\begin{proposition}
The causal-state uncertainty process is stationary and ergodic.
\end{proposition}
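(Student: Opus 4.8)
The plan is to recognize the causal-state uncertainty process as a pointwise (zero-range) function of the causal-state process, and then to compose two facts already in hand: that the causal-state process is stationary and ergodic whenever $\Process$ is, and that a finite-range, real-valued function of a stationary, ergodic process is itself stationary and ergodic (\cref{Def:ErgodicFunction}). Concretely, I would write $\Cmu(t) = g(\causalstate_t)$, where $g(\causalstate) = -\log_2 \Pr(\CausalState = \causalstate)$ and $\Pr(\CausalState = \causalstate)$ is the asymptotic stationary probability of state $\causalstate$. This exhibits $\Cmu(t)$ as a single map applied symbol-by-symbol along the causal-state trajectory---the range-$0$ (that is, $r=0$) special case of the construction underlying \cref{Def:StationaryFunction,Def:ErgodicFunction}.

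First I would verify that $g$ is a legitimate finite-range, real-valued function. Because the development restricts to processes with finitely many causal states, $|\CausalStateSet| < \infty$, and because the recurrent causal-state process has $\Pr(\tilde{\causalstate}) > 0$ for each state, the probabilities $\Pr(\CausalState = \causalstate)$ take only finitely many strictly positive values; hence $g$ is well-defined, bounded, and measurable. Stationarity of the underlying process guarantees that $\Pr(\CausalState_t = \causalstate)$ is independent of $t$, so $g$ is one fixed map rather than a time-varying family---which is exactly what lets $\Cmu(t) = g(\causalstate_t)$ be read as a single function applied along the whole trajectory.

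Next I would assemble the composition. The preceding proposition establishes that, for stationary and ergodic $\Process$, the causal-state process $\{\CausalState_t\}$ is stationary and ergodic. Applying \cref{Def:ErgodicFunction} to that process with the range-$0$ function $g$ then yields that $\Cmu = \{g(\causalstate_t) : t \in \mathbb{Z}\}$ is stationary and ergodic. As a consistency check, ergodicity forces the time average of $\Cmu(t)$ to equal its ensemble average $\SelfII{\CausalState} = \Cmu$, matching \cref{sec:TotalSemanticInfo}'s semantic-information identity.

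The main obstacle is not the algebra but the bookkeeping over what counts as ``finite-range'' here. The causal state $\causalstate_t = \EquiFunction{\Past_t}$ is nominally a function of the entire semi-infinite past, so $\Cmu(t)$ is not literally a finite-window function of the raw symbols $\{\MeasSymbol_t\}$. The clean resolution---the step I would state most carefully---is to apply \cref{Def:ErgodicFunction} at the level of the causal-state process rather than the symbol process: once $\{\CausalState_t\}$ is known to be stationary and ergodic, $g$ is a genuine range-$0$ function of it and the infinite-past dependence has already been absorbed by the prior proposition. The only remaining care is the boundedness of $g$, which is precisely why the restriction to recurrent states with $\Pr(\tilde{\causalstate}) > 0$ and to finite $|\CausalStateSet|$ is essential: it keeps $-\log_2 \Pr(\causalstate)$ finite and $g$ a bona fide real-valued function.
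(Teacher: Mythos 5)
Your proposal is correct and takes essentially the same route as the paper: compose the preceding propositions (causal-state process stationary and ergodic; finite-range functions preserve both) to conclude the result. In fact your version is slightly more careful than the paper's one-line proof, since you explicitly apply the finite-range argument at the level of the causal-state process (where $g(\causalstate) = -\log_2 \Pr(\CausalState=\causalstate)$ is a genuine range-$0$ map) rather than claiming, as the paper loosely does, that $\Cmu(t)$ is a finite-range function of the raw symbol process---which it is not, since $\causalstate_t = \EquiFunction{\past_t}$ depends on the entire semi-infinite past.
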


\begin{proof}
The causal-state uncertainty process is a finite-range function of the process.
And so, by previous propositions, since the latter is stationary and ergodic,
it is too.
\end{proof}

Having seen the pattern in the preceding results, we generalize. Let
$\SelfI(\cdot)$ be any function defining a temporal information atom---as those
laid out above in \Cref{tab:InfoAtomsProcess}. The preceding observations
generalize to establish that functions---in particular, the self-informations
$\SelfI(\Process)$---of stationary, ergodic processes are stationary and
ergodic.

\begin{proposition}
Self-information processes $\SelfIcond{A}{\overline{A}}(t)$ are stationary
and ergodic, if the process is.
\end{proposition}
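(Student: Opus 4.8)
The plan is to reduce the apparently infinite-range dependence of each atom on semi-infinite pasts and futures to a genuinely finite dependence on the finite-state causal processes, and then invoke the function-of-a-process machinery already established. First I would observe that every generator atom in \cref{tab:InfoAtomsProcess} is built from the present $\MeasSymbol_t$ together with the past $\Past_t$ and future $\Future_t$, but that causal shielding---$\Pr(\Future \mid \Past) = \Pr(\Future \mid \CausalState)$ together with the Markov factorization $\Pr(\Past, \Future \mid \CausalState) = \Pr(\Past \mid \CausalState)\Pr(\Future \mid \CausalState)$---lets us replace $\Past_t$ by the forward causal state $\causalstate^+_t = \epsilon(\Past_t)$ and $\Future_t$ by the reverse causal state $\causalstate^-_t = \epsilon(\Future_t)$ inside every conditional, joint, and mutual self-information. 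Under the standing assumption $|\CausalStateSet| < \infty$ these causal states are finite-valued, and the finitary hypothesis guarantees each such atom is finite. Consequently each atom's self-information is a fixed, measurable, real-valued function $g$ of the finite triple $(\causalstate^-_t, \MeasSymbol_t, \causalstate^+_t)$.

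Second, I would note that this triple process $\{(\causalstate^-_t, \MeasSymbol_t, \causalstate^+_t)\}_{t \in \integers}$ is itself a function of the original process $\MeasSymbol$: the present is the identity window, while the forward and reverse causal-state processes were already shown, in the preceding propositions, to be stationary and ergodic functions of $\MeasSymbol$. Hence the joint triple process is a function of the stationary, ergodic process $\MeasSymbol$ and so, by \cref{Def:StationaryFunction,Def:ErgodicFunction}, is itself stationary and ergodic. Each atom's density $\SelfIcond{A}{\overline{A}}(t) = g(\causalstate^-_t, \MeasSymbol_t, \causalstate^+_t)$ is then a pointwise (range-zero) real-valued function of a stationary, ergodic process, so one final application of \cref{Def:StationaryFunction,Def:ErgodicFunction} delivers the conclusion.

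I would shorten the case analysis using the paper's linearity observation: the conditional and mutual self-informations are linear combinations of the marginal $-\log_2 \Pr(\MeasSymbol_t)$ and joint $-\log_2 \Pr(\cdot,\cdot)$ densities, so it suffices to verify that these two are finite-range functions of the triple process, with the remaining atoms inheriting stationarity and ergodicity from linearity (the $\measure$-almost-everywhere equality of time and ensemble averages is preserved under finite linear combinations). I would also remark that the possible negativity of the pointwise enigmatic $\qmu(t)$ and residual $\rhomu(t)$ densities is irrelevant here: $g$ need only be measurable and real-valued, not nonnegative, so negativity bears on interpretation but not on the stationarity/ergodicity argument.

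The main obstacle is precisely the first step---legitimately collapsing the semi-infinite conditioning variables onto finite causal states. This rests on causal shielding and on the finite causal-state assumption; the subtlety is that $\epsilon(\Past_t)$ and $\epsilon(\Future_t)$ are, strictly speaking, functions of semi-infinite histories, so the ``finite-range'' label is inherited only at the level of the already-established stationarity and ergodicity of the causal-state processes (guaranteed by their exponentially fast synchronization) rather than from a literal window of bounded width. Once that reduction is granted, the remainder is a routine chaining of the preceding function-of-a-process propositions.
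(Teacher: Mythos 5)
Your proposal is correct, but it takes a genuinely different---and more careful---route than the paper. The paper's own proof is a one-line assertion: the self-information processes are ``finite-range functions of the process,'' hence stationary and ergodic by the earlier function-of-a-process propositions. That assertion is literally false on its face, since atoms such as $\rmu(t) = \selficond{\MeasSymbol_t}{\Past_t,\Future_t}$ condition on semi-infinite pasts and futures; the paper leans implicitly on the fact that the underlying ergodic theorem (the cited Billingsley Thm.~36.4) actually covers arbitrary measurable functions of the full realization, not just finite windows. You instead close this gap explicitly: causal shielding collapses $\Past_t$ and $\Future_t$ onto the finite-valued forward and reverse causal states, the joint triple process $(\causalstate^-_t, \MeasSymbol_t, \causalstate^+_t)$ inherits stationarity and ergodicity from the earlier causal-state propositions, and each atom becomes a memoryless real-valued function of that triple; the linearity observation then handles all the conditional and mutual variants at once. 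What your route buys is rigor at the point where the paper is loosest, plus an honest acknowledgment that the residual burden is pushed onto the causal-state propositions (whose own proofs in the paper carry the same ``finite-range'' looseness). What it costs is reliance on causal shielding and the standing assumption $|\CausalStateSet| < \infty$, neither of which the paper's one-liner invokes; one should also note that the joint sufficiency of the pair $(\causalstate^+_t, \causalstate^-_t)$ for the present---needed to justify replacing both conditioning variables simultaneously---is a bidirectional-machine fact that deserves an explicit citation rather than being folded into ``causal shielding'' for the forward machine alone.
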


\begin{proof}
The self-information processes are finite-range functions of the process. And
so, by previous propositions, they stationary and ergodic.
\end{proof}

This is all to say that an agent has well-defined and statistically well-behaved
informational quantities to work with in its more sophisticated downstream
cognitive processing, such as further-derived information-theoretic quantities,
in making inferences about them, in quantitatively monitoring its inferences,
in identifying fluctuations, and in decision-making and taking actions on the
environment.


\begin{table*}[!htbp]
\renewcommand{\arraystretch}{1.1}
\begin{tabular}{|c|c|c|c|c|}
\hline
\multicolumn{5}{|c|}{Observer's Analysis of Biased Coin Process} \\
\hline
State & Measurement $\msym$ & Surprise $-\log_2 \Pr(\msym)$ [bits] & Semantic
State $\causalstate$: Meaning & Degree of Meaning $\Theta(\msym)$ [bits] \\
\hline
$A$ & $\lambda$ & Not Defined & No Measurement & $0.0$ \\
$A$ & $1$ & $0.585$ & $A$: Sync & $0.0$ \\
$A$ & $0$ & $1.585$ & $A$: Sync & $0.0$ \\
\hline
\end{tabular}
\caption{Information and Semantic Analysis of the Biased Coin Process, with
	$1$s-bias of $\Pr(\msym = 1) = 2/3$. Recall Fig.
	\ref{fig:ExampleProcesses}(a). }
\label{tab:InfoSemanticAnalysisBiasedCoin}
\end{table*}


\begin{figure*}
\includegraphics[width=.987\textwidth]{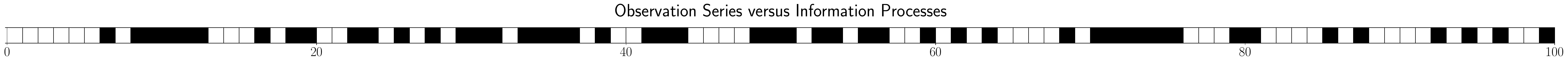}
\includegraphics[width=\textwidth]{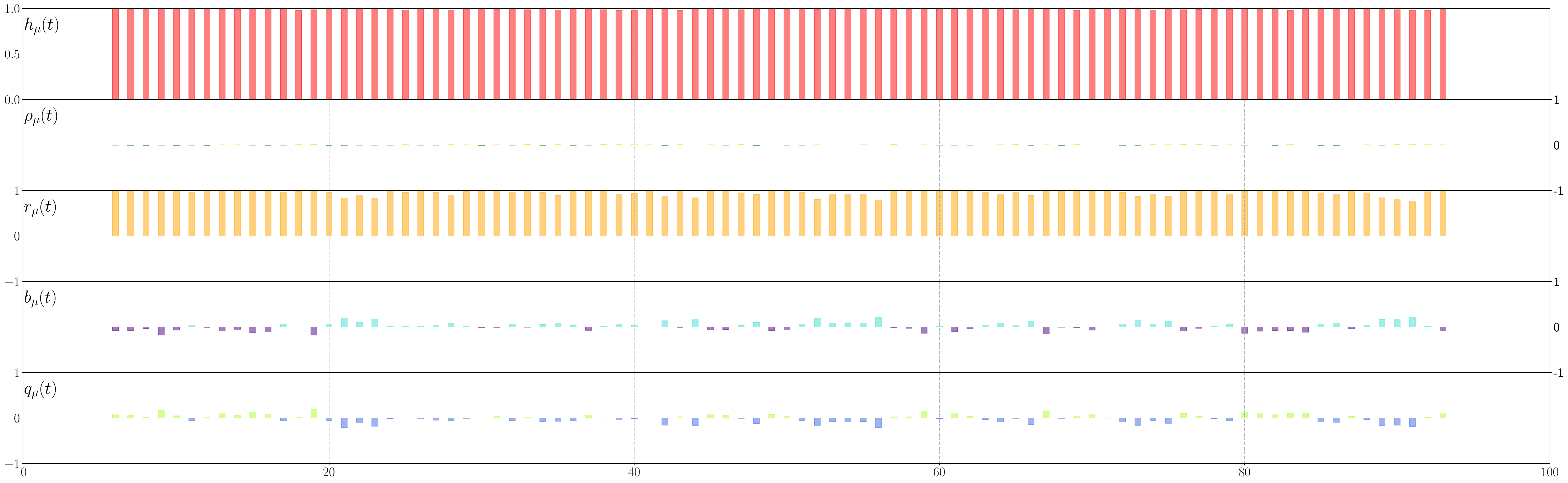}
\caption{Biased Coin Process observed time series realization (top) versus 
	several of its information processes (below): entropy rate $\hmu(t)$,
	anticipated information rate $\rhomu(t)$, ephemeral rate $\rmu(t)$, bound
	information rate $\bmu(t)$, and enigmatic information rate $\qmu(t)$.
	The information process values for the first and last $6$ steps are not
	shown, due to their being estimated using a window length of $13$: $6$
	steps for the history and for the future words, plus the present ($1$
	step). This holds for all of the following information process time series.
	}
\label{fig:iProcess_BC}
\end{figure*}

\section{Example Information Processes}
\label{sec:IProcessExamples}

The following analyzes concrete examples of information processes produced an
agent observing an environment governed by finite-memory generators: (i) wholly
unpredictable, (ii) predictable, (ii) structured, but finite Markov order, and
(iv) structured and infinite Markov order. In other words, several relatively
low-complexity (easy to explain) environments that range from predictable
memoryless to unpredictable infinite-range dependencies---a suite that
illustrates the breadth of basic results of interest. See Fig.
\ref{fig:ExampleProcesses} for their state transition diagrams.


Building on these and the general development of information processes, the
following section establishes that, whatever its role, an agent has access to
and generates well-behaved information processes---first, such derived
processes are stationary and second they are ergodic. Tables
\ref{tab:InfoSemanticAnalysisBiasedCoin}, \ref{tab:InfoSemanticAnalysisPeriod2},
\ref{tab:InfoSemanticAnalysisGoldenMean}, and
\ref{tab:InfoSemanticAnalysisEven}
present an agent's informational and semantic analysis of the example processes,
respectively. And, Figs. \ref{fig:iProcess_BC}, \ref{fig:iProcess_P2},
\ref{fig:iProcess_GMP}, and \ref{fig:iProcess_EP} plot typical realizations of
the examples' information processes.

References \cite{Crut24a,Jurg25a} provide fuller informational analyses for
these and other example processes. Here, we simply focus on aspects related to
a cognitive agent's associated information processes.

\begin{table*}[!htbp]
\renewcommand{\arraystretch}{1.1}
\setlength\tabcolsep{4pt}
\begin{tabular}{|c|c|c|c|c|}
\hline
\multicolumn{5}{|c|}{Observer's Analysis of Period-$2$ Process} \\
\hline
State & Measurement $\msym$ & Surprise $-\log_2 \Pr(\msym)$ [bits] & Semantic
State $\causalstate$: Meaning & Degree of Meaning $\Theta(\msym)$ [bits] \\
\hline
$A$ & $\lambda$ & Not Defined & No Measurement & $0.0$ \\
$A$ & $1$ & $1.0$ & $B$: Sync & $1.0$\\
$A$ & $0$ & $1.0$ & $C$: Sync & $1.0$ \\
\hline
$B$ & $1$ & $\infty$ & $A$: Loose sync; reset & $0.0$ \\
$B$ & $0$ & $0.0$ & $C$: Deterministic $1$ & $1.0$ \\
\hline
$C$ & $1$ & $0.0$ & $B$: Deterministic $0$ & $1.0$ \\
$C$ & $0$ & $\infty$ & $A$: Loose sync; reset & $0.0$ \\
\hline
\end{tabular}
\caption{Information and Semantic Analysis of the Period-$2$ Process. Recall
	Fig. \ref{fig:ExampleProcesses}(b). }
\label{tab:InfoSemanticAnalysisPeriod2}
\end{table*}

\begin{figure*}
\includegraphics[width=.987\textwidth]{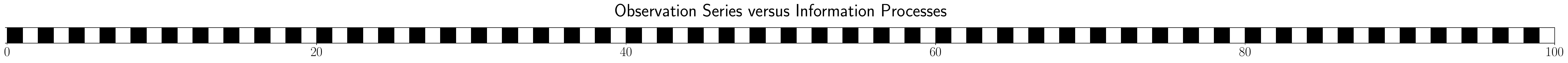}
\includegraphics[width=\textwidth]{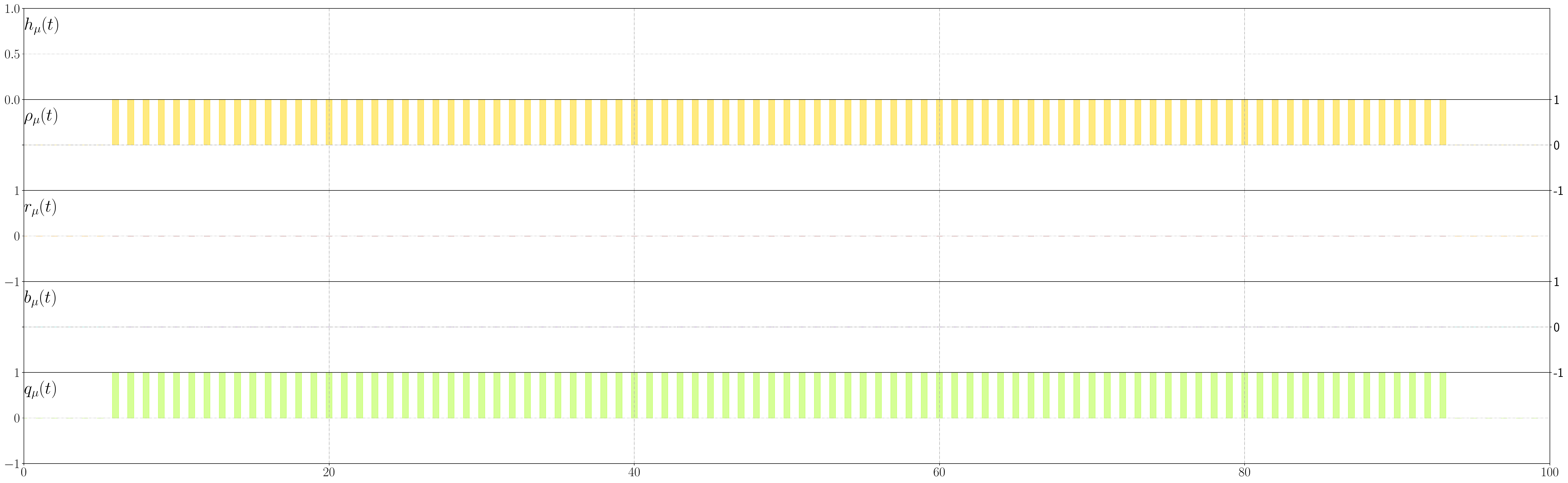}
\caption{Period-$2$ Process observed time series sample versus several of its
	information processes: entropy rate $\hmu(t)$, anticipated information
	rate $\rhomu(t)$, ephemeral rate $\rmu(t)$, bound information rate
	$\bmu(t)$, and enigmatic information rate $\qmu(t)$.
	}
\label{fig:iProcess_P2}
\end{figure*}

\subsection{Unpredictable: Independent Identically-Distributed}
\label{sec:BiasedCoin}
\label{sec:IID}

The paradigm of a random process is the fair (or biased) coin---one in the
family of \emph{independent identically distributed} (IID) processes---coins,
dies, and the like. Consider the particular case of a process of identical
independently distributed variables; i.e., a coin flip with probability
$\left\{ \Pr(H) = p, \Pr(T) = 1-p \right\}$ repeated infinitely many times.

Figure \ref{fig:ExampleProcesses}(a) gives the \eM state-transition diagram.
For all processes in the IID family, there is only a single causal state
$\CausalStateSet = \{A\}$. For the binary alphabet process generated the
state-to-state transitions are given by two $1\times 1$ symbol-labeled
transition matrices: $\{ T^{(0)} = (p), T^{(1)} = (1-p) \}, p \in [0,1]$.  The
initial measure over states is also trivial: $\pi_0 = \{1\}$. $\pi_t$ is
invariant over time.

Its basic informational quantities are: (i) entropy rate $\hmu = 1$ bit per time
step, (ii) statistical complexity $\Cmu = 0$ bits, and (iii) excess entropy $\EE
= 0$ bits. All as one would expect for a fully random, memoryless process.

Each measurement $\MeasSymbol_t$ being independent, the mutual information
$\SelfImut{\MeasSymbol_t}{\MeasSymbol_{t^{\prime}}} = 0$ for all $t \neq
t^{\prime}$. Therefore, all information atoms in
\cref{fig:iDiagramProcess}(left) vanish except for $\rmu = \hmu =
\SelfII{\MeasSymbol_t}$ and the infinite amount of information in the past and
future. 


Table \ref{tab:InfoSemanticAnalysisBiasedCoin} presents an agent's
informational and semantic analysis of an IID process in terms of the
self-information (transition) surprise $-\log_2(\msym)$ and degree of meaning
$\Theta(\msym)$: IID processes are always synchronized and the semantics is
meaningless, as one would expect for a structureless process.

Finally, there are the Biased Coin's information processes. Figure
\ref{fig:iProcess_BC} presents a realization and its information processes:
entropy rate $\hmu(t)$, anticipated information rate $\rhomu$, ephemeral rate
$\rmu$, bound information rate $\bmu(t)$, and enigmatic information rate
$\qmu(t)$. As expected, every succeeding observation is highly surprising
($\hmu = 1$ bit) and no information in the future is anticipated ($\rhomu = 0$
bits). All of the information $\hmu$ produced at each time step is forgotten
($\rmu \approx 1$) and none is stored ($\bmu \approx 0$). The fluctuations seen
in these latter information processes reflect their being empirically estimated
from a finite time series of observations (length $\ell = 10^6$ symbols.)

\begin{figure*}
\includegraphics[width=.987\textwidth]{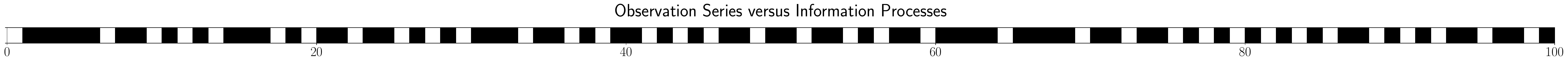}
\includegraphics[width=\textwidth]{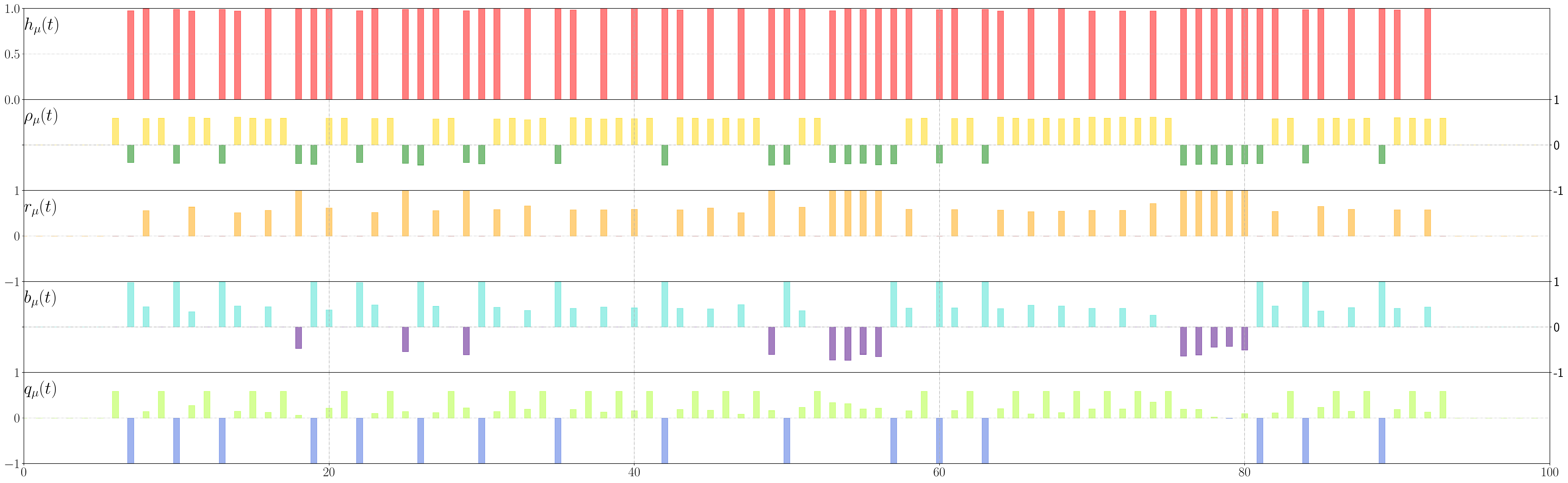}
\caption{Golden Mean Process observed time series sample versus several of its
	information processes: entropy rate $\hmu(t)$, anticipated information
	rate $\rhomu(t)$, ephemeral rate $\rmu(t)$, bound information rate
	$\bmu(t)$, and enigmatic information rate $\qmu(t)$.
	}
\label{fig:iProcess_GMP}
\end{figure*}

\begin{table*}[!htbp]
\renewcommand{\arraystretch}{1.1}
\begin{tabular}{|c|c|c|c|c|}
\hline
\multicolumn{5}{|c|}{Observer's Analysis of Golden Mean Process} \\
\hline
State & Measurement $\msym$ & Surprise $-\log_2 \Pr(\msym)$ [bits] & Semantic
State $\causalstate$: Meaning & Degree of Meaning $\Theta(\msym)$ [bits] \\
\hline
$A$ & $\lambda$ & Not Defined & No Measurement & $0.0$ \\
$A$ & $1$ & $0.585$ & $B$: Sync & $0.585$ \\
$A$ & $0$ & $1.585$ & $C$: Sync & $1.585$ \\
\hline
$B$ & $1$ & $1.0$ & $B$: Random $1$ & $0.585$ \\
$B$ & $0$ & $1.0$ & $C$: Isolated $0$ & $1.585$ \\
\hline
$C$ & $1$ & $0$ & $B$: Deterministic $1$ & $0.585$ \\
$C$ & $0$ & $\infty$ & $A$: Loose sync; Reset & $0.0$ \\
\hline
\end{tabular}
\caption{Information and Semantic Analysis of the Golden Mean Process. Recall
	Fig. \ref{fig:ExampleProcesses}(c). }
\label{tab:InfoSemanticAnalysisGoldenMean}
\end{table*}

\subsection{Predictable: Period-\texorpdfstring{$n$}{n}}
\label{sec:Periodn}

Now consider the other extreme of predictability---a completely determined
period-$2$ process $\dots 0 1 0 1 0 1 \dots$. Figure
\ref{fig:ExampleProcesses}(b) presents the \eM that generates a period-$2$
binary process. There is a single transient state $A$ and the initial
distribution is $\pi = \{1,0,0\}$. The recurrent states $B$ and $C$ are equally
likely---$\pi = \{0,1/2,1/2\}$---which state distribution is time invariant.

The state transition matrices for the binary generator are:
\begin{align*}
T^{(0)} =
\begin{pmatrix}
0 & 1/2 & 1/2 \\
0 & 0   & 1   \\
0 & 0   & 0 
\end{pmatrix} ~\text{and}~
T^{(1)} =
\begin{pmatrix}
0 & 1/2 & 1/2 \\
0 & 0   & 0   \\
0 & 1   & 0 
\end{pmatrix}
  ~.
\end{align*}

All conditional entropies vanish, including the information in the past and
future, as any single measurement determines the value of $\MeasSymbol_t$ for
all time indices. The only remaining quantity is $\qmu$, which contains all the
process information. In the case of a period-$n$ process, the value of this
atom in bits is exactly $\log_2 n$.
Table \ref{tab:InfoSemanticAnalysisPeriod2} gives an agent's informational and
semantic analysis of a period-$2$ process. When disallowed symbols violate the
periodicity, the \eM resets to the (meaningless) start state, having lost the
oscillation's phase, while the agent is infinitely surprised that this has
occurred.


Finally, there are Period-$2$'s information processes. Figure
\ref{fig:iProcess_P2} presents a realization and its information
processes: entropy rate $\hmu(t)$, anticipated information rate
$\rhomu$, ephemeral rate $\rmu$, bound information rate $\bmu(t)$, and
enigmatic information rate $\qmu(t)$. As expected, every succeeding
observation is determined ($\hmu = 0$ bit) and all of this future
information is anticipated ($\rhomu = 1$ bits). Since no information
$\hmu$ produced at each time step there is none to forget ($\rmu
\approx 0$) and none to store stored ($\bmu \approx 0$).  There is a
single bit ($\qmu = 1$ bit) of phase information of the period-$2$
cycle.

The extreme examples of utterly random and perfectly predictable are simple
enough to work through, but typically calculating the information dynamics
requires taking entropies over infinite sequences, which is not practical.
Therefore, we turn to the use of optimal finite models and the theory of
computational mechanics. 

\subsection{Finite Markov Order: Golden Mean}
\label{sec:GMeMachine}

As one sees from Ref. \cite{Jame10a}'s survey (Fig. 13 there), the vast majority
of structured stochastic processes lie between the two preceding extremes of
predictability. As a typical example of these intermediate-complexity process
generators consider the Golden Mean Process. See Fig.
\ref{fig:ExampleProcesses}(c) for its \eM state transition diagram, which
generates all binary sequences except for those with consecutive $0$s. That is,
only the word $w = 00$ is forbidden; otherwise the generated realizations are
random. It is easy to see where in the state-transition diagram this restriction
arises: When the \eM is in state $C$ it must transition to state $B$ and emit a
$1$ with probability $1$. A simple way to summarize this and so characterize the
Golden Mean Process is to give its list of \emph{irreducible forbidden words}
$\mathcal{F} = \{00\}$.

As in the period-$2$ process, there is a single transient state $A$ and the
initial distribution is $\pi_0 = \Pr(A,B,C) = \{1,0,0\}$. The state transition
matrices for the Golden Mean Process generator are:
\begin{align*}
T^{(0)} =
\begin{pmatrix}
0 & 0 & 1/3 \\
0 & 0 & 1/2   \\
0 & 0 &  0 
\end{pmatrix}
  ~\text{and}~
T^{(1)} =
\begin{pmatrix}
0 & 2/3 & 0 \\
0 & 1/2 & 0 \\
0 & 1   & 0 
\end{pmatrix}
  ~.
\end{align*}
The recurrent states $B$ and $C$ are visited with probabilities $\widehat{\pi} =
\{0,2/3,1/3\}$---which state distribution is time invariant after the first time
step.

The informational analysis says the entropy rate is $\hmu = 2/3 ~\SelfI(1/2) =
2/3$ bits per emission. The statistical complexity is $\Cmu = \SelfI(2/3)$
bits, as is the excess entropy. The generated Golden Mean process is Markov
order $1$. For these see Ref. \cite{Jame11a}.

Table \ref{tab:InfoSemanticAnalysisGoldenMean} gives an agent's informational
and semantic analysis of the Golden Mean Process.

And, Fig. \ref{fig:iProcess_GMP} presents its information processes: entropy
rate $\hmu(t)$, anticipated information rate $\rhomu$, ephemeral rate $\rmu$,
bound information rate $\bmu(t)$, and enigmatic information rate $\qmu(t)$.


As concrete examples of the Golden Mean Process' information processes we
examine the prediction and statistical complexity processes---$\hmu(t)$ and
$\Cmu(t)$. First, we need the state distribution $p(t)$ as a function of time:
this is simply $p(0) = (1,0,0)$ and $p(t) = (0,2/3,1/3), t = 1, 2, 3, \ldots)$.

From this, we then calculate the time-local quantities from looking at machine
states---out of which set of values this or that information process will
consist of a time series of samples. That is, for the prediction process we
have uncertainties in state $A$ for first time step: for observing a $1$
$\hmu(t=0,\msym_0 = 1) = \SelfI(2/3)$, a $0$ $\hmu(t=0,\msym_0 = 0) =
\SelfI(1/3)$. For second step; $\hmu(t=1,\msym_{0:1}  = 11 ) = 1$ bit,
$\hmu(t=1,\msym_{0:1}  = 10 ) = 0$ bit. $\hmu(t=1,\msym_{0:1}  = 01 ) = 0$ bit.
At this point, having these initial observations, the agent is synchronized to
the environment behavior. And so, values in the prediction process are samples
of these values, sampled according to the history seen. Then we have the
statistical complexity process: $\Cmu(t=0) = 0$. $\Cmu(t) = \SelfI(2/3)$, $t =
1, 2, 3, \ldots$. This is quite simple.

\begin{table*}[!htbp]
\renewcommand{\arraystretch}{1.1}
\begin{tabular}{|c|c|c|c|c|}
\hline
\multicolumn{5}{|c|}{Observer's Analysis of Even Process} \\
\hline
State & Measurement $\msym$ & Surprise $-\log_2 \Pr(\msym)$ [bits] & Semantic
State $\causalstate$: Meaning & Degree of Meaning $\Theta(\msym)$ [bits] \\
\hline
$A$ & $\lambda$ & Not Defined & No Measurement & $0.0$ \\
$A$ & $1$ & $0.585$ & $B$: Unsync & $0.585 \ldots \infty$ \\
$A$ & $0$ & $1.585$ & $C$: Sync & $1.585 \ldots 0.585$ \\
\hline
$B$ & $1$ & $0.415$ & $A$: Unsync & $0.585 \ldots \infty$ \\
$B$ & $0$ & $1.585$ & $C$: Sync & $1.585 \ldots 0.585$ \\
\hline
$C$ & $1$ & $1$ & $D$: Odd \#$1$s & $1.585$ \\
$C$ & $0$ & $1$ & $C$: Even \#$1$s & $0.585$ \\
\hline
$D$ & $1$ & $0$ & $C$: Even \#$1$s & $0.585$ \\
$D$ & $0$ & $\infty$ & $A$: Loose sync; Reset & $0.0$ \\
\hline
\end{tabular}
\caption{Information and Semantic Analysis of the Even Process. Recall Fig.
	\ref{fig:ExampleProcesses}(d). (Reproduced with permission from Ref.
	\cite{Crut91b}.)}
\label{tab:InfoSemanticAnalysisEven}
\end{table*}

\begin{figure*}
\includegraphics[width=.987\textwidth]{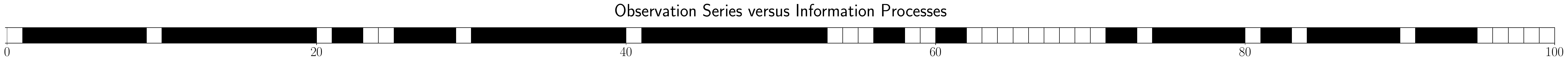}
\includegraphics[width=\textwidth]{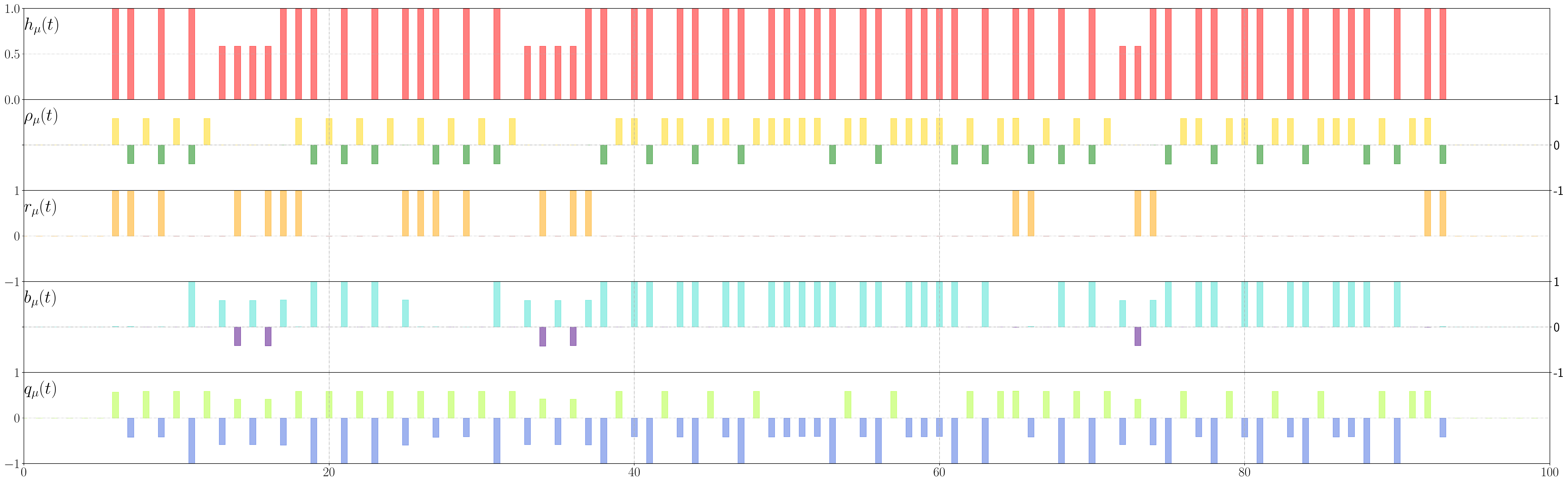}
\caption{Even Process observed time series sample versus several of its
	information processes: entropy rate $\hmu(t)$, anticipated information
	rate $\rhomu(t)$, ephemeral rate $\rmu(t)$, bound information rate
	$\bmu(t)$, and enigmatic information rate $\qmu(t)$.
	}
\label{fig:iProcess_EP}
\end{figure*}

Finally, there are Golden Mean's information processes. Figure
\ref{fig:iProcess_GMP} presents a realization and its information
processes: entropy rate $\hmu(t)$, anticipated information rate
$\rhomu$, ephemeral rate $\rmu$, bound information rate $\bmu(t)$, and
enigmatic information rate $\qmu(t)$.

Given that there are no consecutive $0$s, if a $0$ is observed there is no
uncertainty in the next observation (it must be a $1$). At those times (denote
them $t$), $\hmu(t) = 0$. Otherwise, (denote those times $t^\prime$), a fair
coin flip is anticipated: $\hmu(t^\prime) = 1$ bit. At times $t^\prime$ the bit
of created information is either completely lost ($\rmu(t^\prime) = 1$ bit) or
some of it is stored ($\bmu(t^\prime) > 0$) such that $\rmu + \bmu = \hmu$.

We notice there are several negative informations---$\bmu(t)$, $\rhomu$, and
$\qmu$. As explained at the end of Sec. \ref{Sec:DynamicIProcesses}, negative
informations are to be expected in the setting of temporal information measures.

Overall, we see how the various information processes make clear that the
Golden Mean Process is a complicated combination of the two information
processing modes of the Biased Coin and Periodic Processes. There is more to
say, of course. What is missing is a more mechanistic explanation of how the
information quantities are being stored and processed. This is the burden of a
sequel. Here, our goal was to motivate and then explore information processes.
The next example illustrates the need for this amply.

\subsection{Infinite Markov Order: Even}
\label{sec:EPProcess}

The fourth example to consider is more complex still. This is the Even Process,
whose \eM is shown in Fig. \ref{fig:ExampleProcesses}(d). Whereas the Golden
Mean Process was determined by finite-length word restrictions, the Even
Process exhibits infinite-range statistical dependencies, despite being finite
state. The Even Process consists of words in which $1$s occur in blocks (pairs)
of even length bounded by $0$s. The evenness criterion is a statistical
dependency of infinite range. One consequence is that the Even Process has
infinite Markov order. Moreover, its irreducible forbidden set is countably
infinite: $\mathcal{F} = \{ 01^{2n+1}0 : n \in \mathbb{Z}_{\geq 0} \}$.

There are two transient states $A$ and $B$ and the initial distribution is
$\pi_0 = \{1,0,0,0\}$. The state transition matrices for the generator are:
\begin{align*}
T^{(0)} =
\begin{pmatrix}
0 & 0 & 1/3 & 0 \\
0 & 0 & 1/4 & 0 \\
0 & 0 & 1/2 & 0 \\
0 & 0 &  0  & 0 \\
\end{pmatrix}
  ~\text{and}~
T^{(1)} =
\begin{pmatrix}
 0  & 2/3 &  0  &  0 \\
3/4 &  0  & 1/4 &  0 \\
 0  &  0  &  0  & 1/2 \\
 0  &  0  &  0  &  0 \\
\end{pmatrix}
  .
\end{align*}
The recurrent states $C$ and $D$ are visited with probabilities $\widehat{\pi} =
\{0,0,2/3,1/3\}$---which state distribution is time invariant after an infinite
number of observations.

The informational analysis says the entropy rate is $\hmu = 2/3 ~\SelfI(1/2) =
2/3$ bits per emission. The statistical complexity is $\Cmu = \SelfI(2/3)$
bits, as is the excess entropy. These are all the same as for the Golden Mean
Process. Nonetheless, it is clear that the Even Process, being infinite-Markov
is qualitatively different from the Golden Mean Process. For related analyses
see Ref. \cite{Jame11a}.

To start analyzing the Even Process' information processes, as above, we
determine the time-dependent state distribution $p(t)$ explicitly:
\begin{align*}
p(t = 0)   & = \big( 1,0,0,0 \big) , \\
p(t = 1)   & = 
\begin{matrix}
\big( \Pr(\CausalState|\ms{0}{0} = 0)  & = (0,1/3,0,0) ~, \\
  \Pr(\CausalState|\ms{0}{0} = 1)  & = (0,0,2/3,0) \big)
\end{matrix} \\
p(t = 2) & = 
\begin{matrix}
\big( \Pr(\CausalState|\ms{0}{1} = 00) & = (0,0,1/6,0) ~, \\
\Pr(\CausalState|\ms{0}{1} = 01) & = (0,0,0,1/6) ~, \\
\Pr(\CausalState|\ms{0}{1} = 10) & = (0,0,1/6,0) ~, \\
\Pr(\CausalState|\ms{0}{1} = 11) & = (1/2,0,0,0) \big)
\end{matrix} \\
\\
\ldots
  ~.
\end{align*}
This illustrates the transient phase---how the initial state probabilities begin
to settle toward the asymptotic distribution $\widehat{\pi}$. This relaxation
takes infinite time; reflecting the processes infinite Markov order. And this,
in turn, means that the information processes---such as, $\hmu(t)$ and $\Cmu(t)$
also take infinite-time to reach their asymptotic values.

Table \ref{tab:InfoSemanticAnalysisEven} presents an agent's informational and
semantic analysis of the Even Process. It shows, in particular, that the degree
of meaning continues to change during this relaxation. In fact, the information
processes do not reach their asymptotic behaviors until the agent is
synchronized. And, this only occurs once a $0$ is observed---taking the state to
$C$.



As above, we leave explicitly calculating the companion information processes
as an exercise. Nonetheless, Fig. \ref{fig:iProcess_EP} presents several of its
information processes: entropy rate $\hmu(t)$, anticipated information rate
$\rhomu$, ephemeral rate $\rmu$, bound information rate $\bmu(t)$, and
enigmatic information rate $\qmu(t)$.


In one sense, the Even Process is markedly more complex than the Golden Mean
Process as it has infinite-range statistical properties. Specifically, while
the Golden Mean as a single forbidden word $\mathcal{F} = {00}$, the Even
Process has a countable infinity $\mathcal{F} = \{0(11)^n0: n = 1, 2,
\ldots\}$. In practical terms this enhances the complication of the
informational narrative of the information processes given in Fig.
\ref{fig:iProcess_EP}.



\begin{table*}[!htbp]
\renewcommand{\arraystretch}{1.1}
\begin{tabular}{|c|c|c|c|c|}
\hline
\multicolumn{5}{|c|}{Semantics of the Period-$2$ Process $X$ as the Golden Mean
Process $X^\prime$} \\
\hline
State & Measurement $\msym^\prime$ & $\Delta$ Surprise $\log_2 \left(
\Pr(\msym)/\Pr(\msym^\prime) \right)$ & State $\causalstate^\prime$: Meaning &
$\Delta \Theta = \log_2 \left(
\Pr(\causalstate)/\Pr(\causalstate^\prime)\right)$ \\
\hline
$A'$ & $\lambda$ & 0.0 & No Measurement & $0$ \\
$A'$ & $1$ & $-0.415 = \log_2 \left( (1/2)/(2/3) \right)$ & $B'$: Sync & $-0.415
= \log_2 \left( (1/2)/(2/3) \right)$  \\
$A'$ & $0$ & $0.585 = \log_2 \left( (1/2)/(1/3) \right)$ & $C'$: Sync & $0.585 =
\log_2 \left( (1/2)/(1/3) \right)$ \\
\hline
$B'$ & $1$ & $ -\infty = \log_2 \left( (0)/(1/2) \right)$ &  & Never Occurs \\
$B'$ & $0$ & $ 1 = \log_2 \left( (1)/(1/2) \right)$ & $C'$: Isolated $0$ & $1$
\\
\hline
$C'$ & $1$ & $ 0 = \log_2 \left( (1)/(1) \right)$  & $B'$: Deterministic $1$ &
$0.585$ \\
$C'$ & $0$ & $ -\infty = \log_2 \left( (0)/(1) \right)$ & & Never Occurs \\
\hline
\end{tabular}
\caption{Semantics of Period-$2$ Process as if the Golden Mean Process. Recall
	Figs. \ref{fig:ExampleProcesses}(b) and (c). }
\label{tab:P2asGMPIncorrectSemantics}
\end{table*}

\begin{table*}[!htbp]
\renewcommand{\arraystretch}{1.1}
\begin{tabular}{|c|c|c|c|c|}
\hline
\multicolumn{5}{|c|}{Semantics of Even Process $X$ as the Golden Mean Process
$X^\prime$ } \\
\hline
State & Measurement $\msym^\prime$ & $\Delta$ Surprise $\log_2 \left(
\Pr(\msym)/\Pr(\msym^\prime) \right)$ & State $\causalstate^\prime$: Meaning &
$\Delta \Theta = \log_2 \left(
\Pr(\causalstate)/\Pr(\causalstate^\prime)\right)$ \\
\hline
$A'$ & $\lambda$ & Not Defined & No Measurement & $0$ \\
$A'$ & $1$ & $0$ & $B'$: Sync & $0$ \\
$A'$ & $0$ & $0$ & $C'$: Sync & $0$ \\
\hline
$B'$ & $1$ & $ 0.585 = \log_2 \left( (3/4)/(1/2) \right)$ & $B'$: Random $1$ & $
-0.415 = \log_2 \left( (1/2)/(2/3) \right)$ \\
$B'$ & $0$ & $ -1 = \log_2 \left( (1/4)/(1/2) \right)$ & $C'$: Isolated $0$ & $
0 = \log_2 \left( (1/3)/(1/3) \right)$ \\
\hline
$C'$ & $1$ & $-1 = \log_2 \left( (1/2)/(1) \right)$ & $B'$: Deterministic $1$ &
$0.585$ \\
$C'$ & $0$ & $\infty = \log_2 \left( (1/2)/(0) \right)$ & $A'$: Loose sync;
Reset & $0$ \\
\hline
\end{tabular}
\caption{Semantics of Even Process as if the Golden Mean Process. Recall Figs.
	\ref{fig:ExampleProcesses}(c) and (d). }
\label{tab:EPasGMPIncorrectSemantics}
\end{table*}

In short, the Golden Mean's order-$1$ Markov order translates into the
ability to track short words to determine the level of next-symbol
unpredictability. Equivalently, in this case, tracking these short words
allows one to know in which of the Golden Mean's causal states ($B$ and
$C$) the process is in at any given time.

This is not so for the Even Process. There are statistical dependencies of
arbitrary length involved in knowing which state the process is. A quantitative
consequence is the information measures at any given time $t$ are not
simply-interpreted values. However, we can easily extract interpretations of
the Even Process' information process values---sometimes exactly, but typically
only approximately. (That is, short of writing down these measures' closed form
expressions using the methods of Ref. \cite{Crut13a}, which is the mandate for a
sequel.)

Briefly, the easy interpretations are expedited by referring to the Even
Process' state transition diagram in Fig. \ref{fig:ExampleProcesses}(d). First,
we consider only asymptotic statistics by ignoring transient states $A$ and $B$
and concentrating on recurrent states $C$ and $D$. Practically, one simply uses
a realization far after the start state.

Notice that  when in state $C$ the next observation is a fair coin flip:
$\hmu(t) = 1$ bit. And, when in state $D$ the next observation is wholly
determined (it occurs certainly) to be a $1$. It is the second $1$ in the Even
Process' characteristic pairs of $1$s. That is, $\hmu(t) = 0$ bit. Due to this
$\rmu(t)$ and $\bmu(t)$ vanish, as seen in the information process plot.
One can go much further, of course, but that is the burden of a sequel.


As with the Golden Mean Process, we again see the expected negative information
measures. And, that the Even Process is a markedly more complex combination of
the two information processing modes---Biased Coin and Periodic
Processes-illustrated by the Golden Mean Process. Again, a sequel will provide
a directly mechanistic explanation of how the information measures are being
stored and transformed, making it clearer how the Even Process is more complex.
Again, the proceeding's goal was an introduction to and initial exploration of
information processes.

\subsection{Misdirected Semantics}
\label{sec:GMInterpEP}

These example analyses all assumed the agent knew what the process generator
was and that the agent came to be synchronized to the environment state.
However, the semantic theory here applies more generally, to circumstances when
the agent has an approximate or even incorrect internal model of its
environment.

To be concrete, consider two examples of misinterpretation: An agent uses the
Golden Mean \eM to interpret the semantics of (i) the Period-$2$ Process and
(ii) the Even Process. (Recall Figs. \ref{fig:ExampleProcesses}(b), (c), and
(d).) The analyses are presented in Tables \ref{tab:P2asGMPIncorrectSemantics}
and \ref{tab:EPasGMPIncorrectSemantics}.

We simply track how both processes, starting in their start state $A$, produce
words up to length $2$: $\omega \in \{00, 0,1 10, 11\}$. The Biased Coin
recognizes all; the Golden Mean disallows $00$.

We use the information gain (or Kullback-Leibler divergence) to give the
self-information difference between the expected next symbol versus actual and
information distance between degree of meanings of interpreted state and actual
state. The first relates to differing expectations in predicting symbols and
transition probabilities---the \emph{surprise} $\Delta$:
\begin{align*}
\Delta = \log_2 \frac{\Pr(\msym)}{\Pr(\msym^\prime)}
  ~.
\end{align*}
The second concerns the differing interpretations of the state
semantics---$\Delta \Theta$:
\begin{align*}
\Delta \Theta = \log_2 \frac{\Pr(\to_\msym
\causalstate)}{\Pr(\to_{\msym^\prime} \causalstate^\prime)}
  ~.
\end{align*}
Note that when comparing a process to itself these vanish: $\Delta= 0$ and
$\Delta \Theta = 0$.

\section{Conclusion}
\label{sec:conclusion}

Many complex adaptive systems consist of an individual agent or a collection
of interacting agents that take in and process data from their surroundings and
then take actions based on what is gleaned. The preceding addressed only half
of this, What kinds of stochastic process are involved as an agent monitors
its environment? Our goal was to call out what is common---from an
information-theoretic perspective---across such systems.

In this, we introduced information processes---time series of various kinds of
Shannon information measure that capture a range of signal types from degrees
of unpredictability to degrees of correlation and temporal memory. As a
technical focus, we explored conditions for these signals' stationarity and
ergodicity. These are properties that strongly affect what an agent can use
statistically ``downstream'' to make reliable estimates of environmental
conditions and make robust decisions and take actions necessary for
functioning.

Simply having well-behaved informational signals in hand, though, is far from
the whole story of agent functioning. The question immediately arises as to
what the signals mean to an agent and what the agent can do with them. We
addressed the former by reviewing Shannon information measures, following on
Refs. \cite{Jame11a,Jame13a}'s analysis of structured stochastic processes.
This suite of measures has natural, informational, even functional
meanings---they give an intrinsic semantics of an environment's behavior, as an
agent experiences it. These statistics---environment unpredictability, the
present's correlation with the past, memory required for optimal prediction,
and the amount of hidden internal-state information---embody key kinds of
information that an agent needs in order to properly operate.

Taken together then, information processes, their ergodicity, and their
semantics lay a foundation for agentic information theory. That is, they
comprise a first step to move beyond asymptotic properties to real-time signals
needed for functioning, flourishing complex adaptive systems.

\section*{Acknowledgments}

The authors thank the Telluride Science Research Center for hospitality during
visits and the participants of the Information Engines Workshops there. We
thank Ryan James for his adept use of his Python \emph{dit} discrete
information theory package \cite{dit}, Greg Wimsatt for comments on measure
theory and stochastic processes, and Alex Boyd for the real-time thermodynamic
data as a physical example of an information process. We thank them all for
helpful discussions. This material is based upon work supported by, or in part
by, the Art and Science Laboratory and the U.S. Army Research Laboratory and
the U.S. Army Research Office under grant W911NF-21-1-0048.

\appendix 

\newtheorem*{proposition*}{Proposition}

\section{Function of a Process is a Process}
\label{app:FofP}

\begin{proposition*}
Let $\{X(t)\}_{t \in \mathbb{Z}}$ be a stochastic process on probability space
$(\Omega,F,P)$ and $f:\mathbb{R} \to \mathbb{R}$ be a measurable function.
Then $Y = \{ Y(t) = f(X(t)), t \in \mathbb{Z}\}$, is a stochastic process.
\end{proposition*}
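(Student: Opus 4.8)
The plan is to reduce the claim to the single fact that a composition of measurable functions is measurable, since a stochastic process is by definition a family of random variables sharing a common probability space. First I would recall that to establish $Y$ is a stochastic process it suffices to verify that, for each fixed $t \in \mathbb{Z}$, the map $Y(t) = f \circ X(t) : \Omega \to \mathbb{R}$ is a random variable on $(\Omega, F, P)$---that is, an $F$-measurable function into $(\mathbb{R}, \Borel)$. The index set $\mathbb{Z}$ and the underlying probability space are inherited unchanged from $X$, so no new sample space or measure need be constructed; the task is purely to check measurability coordinate by coordinate.

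The central step is measurability of the composition. Fix $t$ and let $B \in \Borel$ be an arbitrary Borel set. Then
\begin{align*}
Y(t)^{-1}(B) = (f \circ X(t))^{-1}(B) = X(t)^{-1}\!\left( f^{-1}(B) \right) ~.
\end{align*}
Because $f$ is measurable, $f^{-1}(B) \in \Borel$; because $X(t)$ is a random variable, $X(t)^{-1}(f^{-1}(B)) \in F$. Hence $Y(t)^{-1}(B) \in F$ for every $B \in \Borel$, so $Y(t)$ is $F$-measurable and therefore a random variable. Since $t$ was arbitrary, every member of the family $\{Y(t)\}_{t \in \mathbb{Z}}$ is a random variable on the common space $(\Omega, F, P)$, which is exactly the definition of a stochastic process.

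To close, I would note that the finite-dimensional (joint) distributions of $Y$ are well-defined: for any finite index set $t_1, \ldots, t_n$ the tuple $(Y(t_1), \ldots, Y(t_n))$ is measurable into $(\mathbb{R}^n, \Borel(\mathbb{R}^n))$ as a vector of measurable coordinates, and its law is the push-forward of $P$ under the corresponding map. This makes explicit that $Y$ inherits a consistent probabilistic description from $X$, so $Y = f(X)$ is genuinely a stochastic process, with the push-forward measure picture matching that developed earlier in the excerpt.

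I do not expect a genuine obstacle here---the argument is entirely routine measure theory. The only point demanding care is the bookkeeping of the relevant $\sigma$-algebras (invoking the Borel $\sigma$-algebra on the target $\mathbb{R}$ when using measurability of $f$, and $F$ on $\Omega$ when using that $X(t)$ is a random variable) and emphasizing that the probability space is carried over intact rather than rebuilt, so that ``same process, pushed forward'' is the correct conceptual picture.
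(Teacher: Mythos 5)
Your proof is correct and follows essentially the same route as the paper's own: fix $t$, show $Y(t)^{-1}(B) = X(t)^{-1}\left(f^{-1}(B)\right) \in F$ for every Borel $B$ by composing measurability of $f$ with measurability of $X(t)$, then observe the family inherits the common probability space and well-defined finite-dimensional distributions. If anything, your preimage chain is written more carefully than the paper's (whose displayed computation contains a slip, writing $X(t)(\omega) \in B$ where $X(t)(\omega) \in f^{-1}(B)$ is meant), so no changes are needed.
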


\begin{proof}
First, $Y(t)$ is a random variable for each $t \in \mathbb{Z}$.

For any fixed $t \in \mathbb{Z}$, $Y(t): \Omega \to \mathbb{R}$ is defined by
$Y(t)(\omega) = f(X(t)(\omega))$. And, for any Borel set $B \subseteq
\mathbb{R}$, its turns out that $Y^{-1}(t) (B) \in F$:
\begin{align*}
Y^{-1} (B) & = \{\omega \in \Omega: Y(t)(\omega) \in B\} \\
	& = \{\omega \in \Omega: f(X(t)(\omega)) \in B\} \\
	& = \{\omega \in \Omega: X(t)(\omega) \in B\} \\
	& = X^{-1} (t)(f^{-1}(B))
  ~.
\end{align*}

Since $f$ is measurable, $f^{-1}(B)$ is a Borel set in $\mathbb{R}$.

Since $X(t)$ is a random variable---as $\{X(t): t \in \mathbb{Z} \}$ is a
stochastic process---then $X^{-1} (t)(f^{-1}(B)) \in F$.

Therefore, $Y^{-1}(t) (B) \in F$ for every Borel set $B$. That is, $Y(t)$ is a
random variable.

Second, the collections of RVs forms a stochastic process.

Since $Y(t)$ is a RV for each $t \in \mathbb{Z}$, the collection $\{Y(t): t \in
\mathbb{Z}\}$ satisfies the definition of a stochastic process. And, its
finite-dimensional distributions over subsets of indices and their
probabilities are well-defined.

That is, the collection forms a stochastic process since $f$'s measurability 
guarantees $f \circ X(t)$ preserves the RV property for each and all $t$.
\end{proof}

\bibliographystyle{unsrtnat}
\bibliography{chaos}

\end{document}